\documentclass{article}

\usepackage[numbers]{natbib}
\usepackage{diagbox}
\usepackage{makecell}
\usepackage{booktabs}
\usepackage{tikz,tikz-3dplot}
\usepackage{amsmath}
\usepackage{mathtools}
\usepackage{bbm}
\usepackage{amsfonts}
\usepackage{amssymb}
\usepackage{amsthm}
\usepackage{thmtools}
\usepackage{url,ifthen}
\usepackage[shortlabels]{enumitem}
\usepackage{srcltx}
\usepackage{dsfont}
\usepackage{multirow}
\usepackage{boxedminipage}
\usepackage[margin=1.1in]{geometry}
\usepackage{nicefrac}
\usepackage{xspace}
\usepackage{graphicx}
\usepackage{color}
\usepackage{colortbl}
\usepackage{setspace}
\usepackage{pgfplots}
\pgfplotsset{compat=1.12}
\usepackage{algorithm,algpseudocode}
\usepackage[algo2e]{algorithm2e}
\RestyleAlgo{ruled}
\usepackage{cancel}
\usepackage{framed}
\usepackage{tcolorbox}

\usepackage{xcolor}
\definecolor{DarkGreen}{rgb}{0.1,0.5,0.1}
\definecolor{DarkRed}{rgb}{0.5,0.1,0.1}
\definecolor{DarkBlue}{rgb}{0.1,0.1,0.5}
\definecolor{LightBlue}{RGB}{69, 123, 157}
\definecolor{Gray}{rgb}{0.2,0.2,0.2}
\definecolor{AquaGreen}{RGB}{20, 116, 111}
\definecolor{Greenish}{RGB}{132, 169, 140}
\definecolor{FrenchBlue}{RGB}{3, 57, 137}

\definecolor{c1}{RGB}{38, 70, 83}
\definecolor{c2}{RGB}{42, 157, 143}
\definecolor{c3}{RGB}{233, 196, 106}
\definecolor{c5}{RGB}{231, 111, 81}
\definecolor{c4}{RGB}{244, 162, 97}

\usepackage{listings}
\lstdefinestyle{mystyle}{
    commentstyle=\color{DarkBlue},
    keywordstyle=\color{DarkRed},
    numberstyle=\tiny\color{Gray},
    stringstyle=\color{Greenish},
    basicstyle=\footnotesize,
    breakatwhitespace=false,         
    breaklines=true,                 
    captionpos=b,                    
    keepspaces=true,                 
    numbers=left,                    
    numbersep=5pt,                  
    showspaces=false,                
    showstringspaces=false,
    showtabs=false,                  
    tabsize=2
}
\usepackage[small]{caption}
\usepackage{subcaption}
\usepackage[pdftex]{hyperref}
\usepackage[capitalise,noabbrev]{cleveref}
\crefname{assumption}{Assumption}{Assumptions}
\usepackage{autonum} %
\hypersetup{
    unicode=false,          %
    pdftoolbar=true,        %
    pdfmenubar=true,        %
    pdffitwindow=false,      %
    pdfnewwindow=true,      %
    colorlinks=true,       %
    linkcolor=FrenchBlue,          %
    citecolor=Greenish,        %
    filecolor=DarkRed,      %
    urlcolor=DarkBlue,          %
    pdftitle={},
    pdfauthor={},
}

\def\draft{1}

\def\submit{0}

\ifnum\draft=1 %
    \def\ShowAuthNotes{1}
\else
    \def\ShowAuthNotes{0}
\fi

\ifnum\submit=1
\newcommand{\forsubmit}[1]{#1}
\newcommand{\forreals}[1]{}
\else
\newcommand{\forreals}[1]{#1}
\newcommand{\forsubmit}[1]{}
\fi

\ifnum\ShowAuthNotes=1
\newcommand{\authnote}[2]{{ \footnotesize \bf{\color{DarkRed}[#1's Note:
{\color{DarkBlue}#2}]}}}

\newcommand{\rjsnote}[1]{{\color{purple}[Ricardo: #1]}}

\newcommand{\todo}[1]{{\color{red}[To Do: #1]}}

\else
\newcommand{\authnote}[2]{}
\newcommand{\rjsnote}[1]{}
\newcommand{\iwsnote}[1]{}
\newcommand{\mjnote}[1]{}
\newcommand{\todo}[1]{}
\fi

\newtheorem{theorem}{Theorem}[section]

\newtheorem{lemma}[theorem]{Lemma}

\newtheorem{proposition}[theorem]{Proposition}

\newtheorem{assumption}{Assumption}

\newtheorem{definition}[theorem]{Definition}%

\newtheorem*{definition*}{Definition}
\newtheorem*{proposition*}{Proposition}

\newtheorem{example*}{Example}

\newtheoremstyle{example_contd}
{\topsep} {\topsep}%
{}%
{}%
{\bfseries}%
{.}%
{1em}%
{\thmname{#1} \thmnumber{ #2}\thmnote{#3} (continued)}%

\theoremstyle{example_contd}

\newcommand{\chapterref}[1]{\hyperref[ch:#1]{Chapter~\ref{ch:#1}}}

\newcommand{\claimref}[1]{\hyperref[claim:#1]{Claim~\ref{claim:#1}}}

\newcommand{\corollaryref}[1]{\hyperref[cor:#1]{Corollary~\ref{cor:#1}}}

\newcommand{\definitionref}[1]{\hyperref[def:#1]{Definition~\ref{def:#1}}}

\newcommand{\equationref}[1]{\hyperref[eq:#1]{Equation~\ref{eq:#1}}}

\newcommand{\factref}[1]{\hyperref[fact:#1]{Fact~\ref{fact:#1}}}

\newcommand{\figureref}[1]{\hyperref[fig:#1]{Figure~\ref{fig:#1}}}

\newcommand{\tableref}[1]{\hyperref[tab:#1]{Table~\ref{tab:#1}}}

\newcommand{\itemref}[1]{\hyperref[item:#1]{Item~(\ref{item:#1})}}

\newcommand{\lemmaref}[1]{\hyperref[lem:#1]{Lemma~\ref{lem:#1}}}

\newcommand{\propref}[1]{\hyperref[prop:#1]{Proposition~\ref{prop:#1}}}

\newcommand{\propositionref}[1]{\hyperref[prop:#1]{Proposition~\ref{prop:#1}}}

\newcommand{\remarkref}[1]{\hyperref[rem:#1]{Remark~\ref{rem:#1}}}

\newcommand{\sectionref}[1]{\hyperref[sec:#1]{Section~\ref{sec:#1}}}

\newcommand{\theoremref}[1]{\hyperref[thm:#1]{Theorem~\ref{thm:#1}}}

\newcommand{\assumptionref}[1]{\hyperref[ass:#1]{Assumption~\ref{ass:#1}}}

\newcommand{\exampleref}[1]{\hyperref[exmp:#1]{Example~\ref{exmp:#1}}}

\newcommand{\algoref}[1]{\hyperref[algo:#1]{Algorithm~\ref{algo:#1}}}

\usepackage[utf8]{inputenc}
\usepackage[T1]{fontenc}
\usepackage{microtype}
\usepackage[ttdefault=true]{AnonymousPro}

\newcommand{\Esymb}{\mathbb{E}}

\newcommand{\E}{\Esymb}

\newcommand{\widebar}[1]{\overline{#1}}

\newcommand{\mper}{\,.}
\newcommand{\mcom}{\,,}

\newcommand{\cA}{{\cal A}}

\newcommand{\cD}{{\cal D}}

\newcommand{\cF}{{\cal F}}

\newcommand{\cH}{{\cal H}}
\newcommand{\cI}{{\cal I}}

\newcommand{\cP}{{\cal P}}
\newcommand{\cQ}{{\cal Q}}
\newcommand{\cR}{{\cal R}}

\newcommand{\cT}{{\cal T}}

\newcommand{\cW}{{\cal W}}

\newcommand{\cZ}{{\cal Z}}

\newcommand{\Paren}[1]{\left(#1 \right )}

\newcommand{\Brac}[1]{\left[#1 \right]}

\newcommand{\Set}[1]{\left\{#1\right\}}

\newcommand{\Abs}[1]{\left\lvert#1\right\rvert}

\newcommand{\R}{\mathbb{R}}

\newcommand{\N}{\mathbb N}

\usepackage{bm}

\newcommand{\trans}{\top}

\newcommand{\Ind}[1]{\mathds{1}\Set{#1}}

\newcommand{\independent}{\protect\mathpalette{\protect\independenT}{\perp}}
\def\independenT#1#2{\mathrel{\rlap{$#1#2$}\mkern2mu{#1#2}}}

\newcommand{\ignore}[1]{}

\DeclareMathOperator*{\argmin}{arg\,min}
\DeclareMathOperator*{\argmax}{arg\,max}

\renewcommand{\epsilon}{\varepsilon}

\newcommand{\remove}[1]{}

\newcommand{\iid}{{\mathrm{i.i.d}}}
\newcommand{\iidtext}{i.i.d.}

\newcommand{\PP}{\mathbb{P}}

\renewcommand{\P}{\mathsf{P}}
\newcommand{\Q}{\mathsf{Q}}

\newcommand{\Categorical}{\mathrm{Categorical}}

\newcommand{\infseqn}[1]{(#1)_{n \in \N}}
\newcommand{\infseqnz}[1]{(#1)_{n \in \N_0}}

\newcommand{\svert}{~\vert~}
\newcommand{\smvert}{~\middle\vert~}
\newcommand{\logp}[1]{\log\Paren{#1}}

\newcommand{\exps}[1]{\exp\Set{#1}}

\newcommand{\bE}{\mathbf{E}}
\newcommand{\bz}{\mathbf{z}}
\newcommand{\bZ}{\mathbf{Z}}

\newcommand{\bY}{\mathbf{Y}}
\newcommand{\by}{\mathbf{y}}

\newcommand{\bH}{\mathbf{H}}
\newcommand{\bh}{\mathbf{h}}

\newcommand{\barcH}{\widebar{\cH}}

\newcommand{\undert}{{\underline{t}}}
\newcommand{\barcA}{\widebar{\cA}}
\newcommand{\barcW}{\widebar{\cW}}

\newcommand{\chN}{\check{N}}
\newcommand{\hpi}{\widehat{\pi}}

\newcommand{\tW}{\widetilde{W}}

\newcommand{\brackQ}{{(\Q)}}

\newcommand{\brackalpha}{{(\alpha)}}
\newcommand{\brackbet}{{(\bet)}}

\newcommand{\control}{\texttt{0}}

\newcommand{\bet}{\lambda}
\newcommand{\bbet}{\boldsymbol{\bet}}

\newcommand{\simplex}{\triangle_{d}}
\newcommand{\simplexone}{\triangle_{1}}
\newcommand{\UP}{\mathrm{UP}}

\newcommand{\ate}{\psi}
\newcommand{\hate}{\widehat{\ate}}

\newcommand{\propscore}{\boldsymbol{\pi}}

\newcommand{\tor}{\mathrm{or}}
\newcommand{\tand}{\mathrm{and}}
\newcommand{\RCT}{{\mathrm{RCT}}}
\newcommand{\QRCT}{\Q_{\RCT}}
\newcommand{\PRCT}{\P_{\RCT}}
\newcommand{\FDR}{\mathrm{FDR}}

\algnewcommand{\Inputs}[1]{%
  \State \textbf{Inputs:}
  \Statex \hspace*{\algorithmicindent}\parbox[t]{.8\linewidth}{\raggedright #1}
}

\title{Always-On Experimentation}
\author{
  Ricardo J.~Sandoval$^\star$, David Arbour$^\dagger$, Avi Feller$^\star$, and Michael I.~Jordan$^{\star\ddagger}$\\
  $^\star$University of California, Berkeley\\
  $^\dagger$Adobe Research\\
  $^\ddagger$Inria, Paris
}
\date{\today}

\begin{document}

\maketitle

\begin{abstract}
    Generative AI has dramatically accelerated the rate at which new treatments---from novel pharmaceuticals to online marketing campaigns---can be conceived and deployed. As a result, modern experimentation platforms often run continuously, with treatments added as they are ready and removed when they underperform. 
We formalize this ``Always-On'' experimental setting, in which treatments can be dynamically generated, added to, and removed from a running experiment, and study the statistical problem of deciding whether to accept or reject each treatment while controlling for the false discovery rate. We develop sequential tests that achieve time-uniform Type-I error control under arbitrary stopping times and ``predictable'' treatment schedules. Our approach builds on the testing-by-betting framework: we construct test supermartingales for testing the average treatment effect of each treatment, and show that the construction of these test supermartingales is growth-rate optimal in an almost-sure sense.

\end{abstract}

\section{Introduction}

Generative AI has the promise to accelerate the rate of scientific discovery. It has reduced the barrier to the conception of new treatments, making it easier for practitioners to generate new hypotheses ``on the fly.'' Existing experimentation frameworks, however, have not been designed for an era in which treatments can be easily and dynamically generated. Rather, to guarantee statistical validity existing methods typically require practitioners to know all treatments prior to starting the experiment and do not afford practitioners the freedom to continuously monitor the experiment in order to decide when to remove a treatment. 
In this work we introduce \emph{Always-On Experimentation}, a statistical design framework suited for the era in which the conception of treatments is no longer the bottleneck in the experimentation pipeline.

\begin{figure}[h!]
     \centering
     \begin{subcaptionblock}{0.329\textwidth}
         \centering 
         \includegraphics[width=\linewidth]{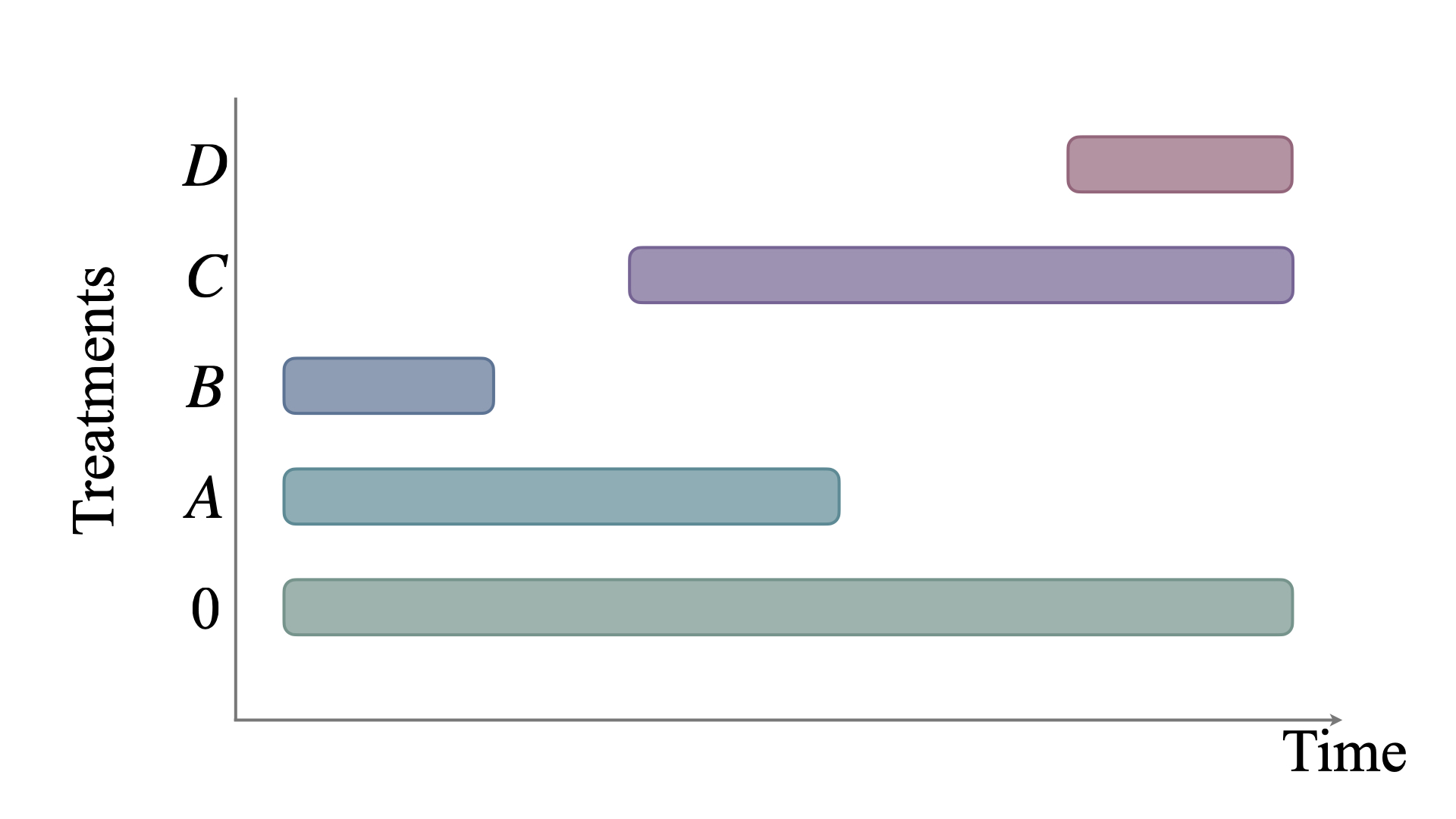}
     \end{subcaptionblock}
     \hfill
     \begin{subcaptionblock}{0.329\textwidth}
         \centering 
         \includegraphics[width=\linewidth]{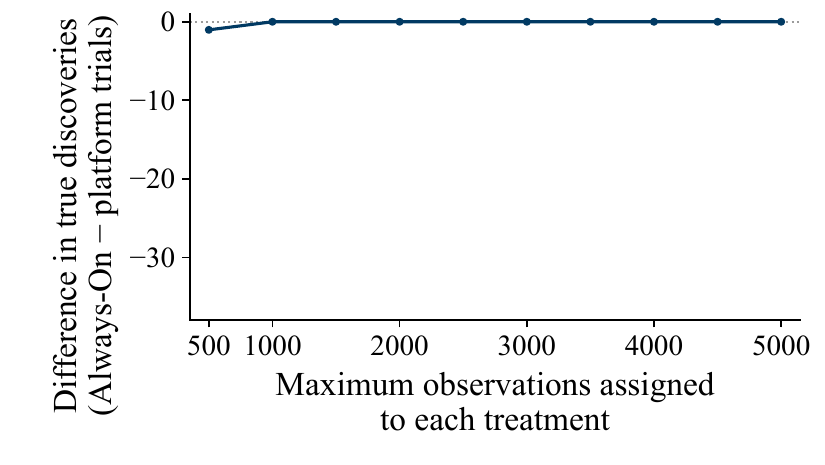}
     \end{subcaptionblock}
     \begin{subcaptionblock}{0.329\textwidth}
         \centering 
         \includegraphics[width=\linewidth]{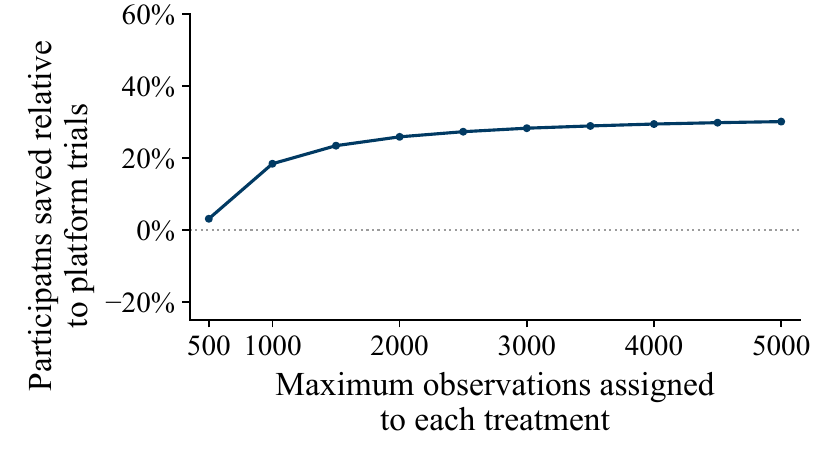}
     \end{subcaptionblock}
     \caption{\emph{(Left)} Depiction of an instance of Always-On experimentation in which four treatments were added and removed at different points in the experiment. 
     \emph{(Middle \& Right)} Always-On experimentation achieves the same level of power as platform trials while reducing the number of units needed to perform such discoveries. For information about simulation setup and further simulations, see \cref{sec:sim-set-up}.
     }
     \label{fig:always-on-schedule}
 \end{figure}

Our Always-On design is based on simple desiderata: practitioners should have the flexibility to add and remove treatments at any point in time from a continuously running experiment. Such flexibility should be coupled with the ability to \emph{continuously monitor} the experiment and decide to introduce or remove treatments in a data-driven manner, all while maintaining statistical validity. In our case, statistical validity consists of two components: controlling the type-I error for each treatment that has been introduced to the experiment, for the entire duration of the experiment; and safeguarding against multiple comparisons,
since potentially multiple treatments will be tested within the same experiment. Our paper formalizes the Always-On design and describes how it satisfies these desiderata while maintaining statistical validity. 

The rest of the paper is organized as follows. \cref{sec:preliminaries} connects the Always-On design to the existing experimental designs literature. 
\cref{sec:always-on-experiments} presents our Always-On design.  \cref{sec:testing-always-on}  formalizes the design and instantiates it within the testing-by-betting framework. \cref{sec:multiple-testing} describes how we tackle the multiple testing problem within our Always-On framework. \cref{sec:k-armed-experiments} instantiates the Always-On framework in a setting where experimental units are only observed once for the entire duration of the experiment. We describe the target estimands and our estimators within this setting, and we present a strategy to construct ``powerful'' test statistics that enjoy an almost sure and asymptotic notion of growth-rate optimality. In \cref{sec:experiments} we present the performance of Always-On experimentation in a semi-synthetic experiment.

\section{Related Work}\label{sec:preliminaries}

\paragraph{Platform trials.} 
Platform trials from biomedical research \citep{berry2015platform,woodcock2017master,santacatterina2025identification} focus on evaluating a range of interventions for a single disease or condition. Like Always-On experimentation, platform trials facilitate
the addition and removal of interventions from a continuously running experiment. Platform trials, however, typically operate under a master protocol that prespecifies the timing or rules governing interim analyses and adaptations, including when treatment arms may be stopped or added \citep[see, for example,][]{aptc2019adaptive}. By contrast, Always-On experimentation permits continuous monitoring and allows treatments to be added or removed at arbitrary times, provided these decisions depend only on past data. 

\paragraph{Online multiple testing.} Our Always-On design is a particular instantiation of the general online multiple testing framework introduced by \citet{foster2008alpha}. Most work on online multiple testing (e.g., \citep{javanmard2015online,javanmard2018online,ramdas2018saffron,tian2019addis,xu2024online}) has considered a setting in which hypotheses are tested sequentially; that is, a hypothesis must either be accepted or rejected before a new hypothesis is tested. \citet{zrnic2021asynchronous} and \citet{fischer2024online} relaxed the synchronous testing requirement and considered settings in which hypotheses need not be accepted or rejected before others are introduced, and they provided algorithms to control the online false discovery rate (or other related error metrics). Our framework's multiple testing setting is most similar to that of the latter two works as treatments can be added to the experiment at any point and their futility need not be determined before other treatments are introduced. Rather, treatment arms can remain active throughout the experiment until enough evidence is gathered to declare the treatment a discovery or a decision is made to remove the treatment from the experiment. \citet{robertson2023onlineplatform} and \citet{zehetmayer2022online} study how existing algorithms that control the online false discovery rate can be used within platform trials and  evaluate their performance empirically.

\paragraph{Online experimentation and multi-armed bandits.} Modern experimentation platforms at technology companies run experimentation continuously \citep{kohavi2020trustworthy}, motivating our Always-On framework. A common approach is to run a single experiment that consists of multiple arms, typically referred to as A/B/n tests. \citet{bartroff2020sequential} focus on a setting in which data arrives sequentially and are interested in testing a predetermined set of hypotheses while controlling for the false discovery rate. They extend the procedure due to \citet{benjamini1995controlling} to a setting in which data arrives sequentially. \citet{jamieson2018bandit} consider a setup in which they get to dynamically sample from the available treatment arms, with the goal of minimizing the number of samples needed to exceed a threshold on the proportion of true discoveries while controlling for the false discovery rate. \citet{lin2026sample} recently studied active sampling in a multiple testing setting for a more general class of hypotheses; they proposed a new adaptive data collection algorithm and analyzed its sample complexity for discovering all non-null hypotheses.
Multi-armed bandits are another common approach for testing multiple treatments simultaneously in online platforms. The focus of these tests is to allocate units dynamically to the best performing arm in order to minimize (or maximize) some metric such as regret. \citet{yang2017framework} propose a framework in which best-arm multi-armed bandits are sequentially tested  with the aim of identifying the best arm within each instance while controlling the online false discovery rate. Recently, \citet{sandoval2026multi,imbens2026demonstration} and \citet{bharti2026global} study a sequential hypothesis testing setting in which data can be sampled from multiple arms, and it is of interest to determine quickly whether any of the arms is non-null.

\section{Always-On experimentation}
\label{sec:always-on-experiments}

We study a design that affords the flexibility to add and remove treatments from a running experiment. We introduce the construct of \emph{sub-experiments}, which we define to be those points in time in which treatments are added and removed from the running experiment, and note that the total number of sub-experiments $t \in \N$ need not be known a priori. Nonetheless, throughout this paper we assume the total number of sub-experiments is finite, meaning that $t < \infty$. In each sub-experiment $s \in [t] \coloneqq \Set{1, \dots, t}$, the statistician tests the treatments belonging to the \emph{set of active treatments} $\cA_s$, and we assume that the control---which we denote as $\control$---is always present in each sub-experiment.\footnote{For simplicity of exposition we assume that all treatments within the experiment are compared against the same control, but our framework can be easily modified so that each sub-experiment can have a potentially different control.} In other words, in each sub-experiment $s \in [t]$ the statistician has access to the \emph{finite} set $\cA_s \cup \Set{\control}$, where $\Abs{\cA_s \cup \Set{\control}} < \infty$. We denote the set of treatments that were at some point active in the experiment (that consisted of $t$ sub-experiments) as $\cA \coloneqq \bigcup_{s=1}^t \cA_s$. Since not all treatments are active in the same set of sub-experiments, we let $\cT_t(a) \coloneqq \Set{s \in [t] \smvert a \in \cA_s}$ be the set of sub-experiments in which treatment $a \in \cA$ was active.

We assume units enter the experiment sequentially, and denote as $\cI_s$ the set of units that participated in the experiment during sub-experiment $s \in [t]$ whose size we represent as $n_s \coloneqq \Abs{\cI_s}$. We remark that the sets $\cI_s$ could consist of different or the same units---settings that we consider in \cref{sec:k-armed-experiments} and \cref{sec:treatment-histories}, respectively. We use the notation $\cI_s(j)$ to denote the set of the first $j \in [n_s]$ units that have entered sub-experiment $s$. Furthermore, we denote the set of all units that have been part of the experiment starting from sub-experiment one up to sub-experiment $s$ as $\cI_{1:s} \coloneqq \bigcup_{r=1}^s \cI_{r}$, and if $s = t$ (i.e., the last sub-experiment that has been run) we will use the shorthand $\cI \equiv \cI_{1:t}$. Lastly, we denote as $\cI_{1:s}(j) \coloneqq \bigcup_{r=1}^{s-1}\cI_r \cup \cI_s(j)$ the set of all units that have appeared in sub-experiments one up to $s-1$ and those $j$ units whom have entered sub-experiment $s$.

We adopt the potential outcomes framework~\citep{neyman1923applications,rubin1974estimating} and assume each unit $i \in \cI$ has potential outcomes 
$Y_i(\bz) \in \R$ for each $\bz \in \cZ$, where $\cZ$ denotes the set of all possible treatment paths. We will instantiate the set $\cZ$ in the sections to come. Additionally, for the rest of the paper we make the following assumption on the potential outcomes.

\begin{assumption}
\label{assumption:dgp-assumption}
    We assume the potential outcomes for each unit are independently and identically distributed:    $\Paren{Y_i(\bz)}_{\bz \in \cZ} \sim \P$ for each $i \in \cI$.
\end{assumption}
We emphasize that \cref{assumption:dgp-assumption} is only with respect to how the units themselves are sampled and not on any putative dependence between their potential outcomes. Indeed, the potential
outcomes $\Paren{Y_i(\bz)}_{\bz \in \cZ}$ for each unit can be highly dependent.

Because the set of active treatments differs across sub-experiments and to avoid temporal drift that might result from running the experiment over time, we perform a new series of treatment assignments in every sub-experiment. That is, in each sub-experiment $s \in [t]$ we randomly assign the treatments from the set $\cA_s \cup \Set{\control}$ to those units in $\cI_s$. Throughout, we denote the treatment assignment unit $i \in \cI_s$ received in sub-experiment $s$ as $A_{i,s} \in \cA_s \cup \Set{\control}$. This treatment assignment induces the observed outcome for unit $i \in \cI_s$, which we denote as $Y_{i,s}$, and defer its formal definitions to \cref{sec:k-armed-experiments,sec:treatment-histories}. The filtration that arises in Always-On experimentation is with respect to the outcomes and treatment assignments that we have observed thus far in the experiment. More formally, we denote the filtration for sub-experiment $s \in [t]$ based on all the units that have been part of the experiment, including the $j$ units that have entered sub-experiment $s$, as $\cF_{j,s} \coloneqq \sigma\Paren{\Paren{\Paren{A_{i,r}, Y_{i,r}}_{i \in \cI_r}, \cA_r}_{r=1}^{s-1}, \Paren{A_{i,s}, Y_{i,s}}_{i \in \cI_s(j)}}$.
We slightly overload notation, denoting the filtration with respect to observed outcomes and assignments from the previous $s-1$ sub-experiments as $\cF_{s-1} \coloneqq \sigma\Paren{\Paren{\Paren{A_i, Y_i}_{i \in \cI_r}, \cA_r}_{r=1}^{s-1}}$, with $\cF_0 = \Set{\varnothing, \Omega}$ as the trivial $\sigma$-algebra.

In addition to \cref{assumption:dgp-assumption} we make the following
assumptions on the treatment assignments.

\begin{assumption}[Assumptions on the treatment assignments]
\label{ass:treatment-assignments}
    Let $A_{i,s} \in \cA_{s} \cup \Set{\control}$ denote the treatment assignment unit $i \in \cI$ 
    received in sub-experiment $s \in [t]$, and let $\cZ_{s:t}$ denote the set of
    treatment paths in sub-experiment $s$ up to $t$---with their respective potential outcomes being
    $Y_i(\bz)$ for each $\bz \in \cZ_{s:t}$ and $i \in \cI$.
    We assume the following conditions hold for each sub-experiment $s \in
    [t]$ and unit $i \in \cI$: $(i)$ $A_{i, s} \independent \Paren{Y_i(\bz)}_{\bz \in \cZ_{s:t}} \svert \cF_{i-1, s}$ (\emph{conditional ignorability}) ; $(ii)$ for some deterministic constant $p_s \in (0,1)$, $\PP_{\RCT_s}(A_{i,s} = a \svert \cF_{s-1}) \geq p_s > 0$ for each $a \in \cA_s \cup \Set{\control}$ (\emph{positivity}), where $\PP_{\RCT_s}(A_{i,s} = a \svert \cF_{s-1})$ denotes the conditional probability of being assigned treatment $a \in \cA_s$ when the treatment assignment is sampled from the randomization distribution represented by $\RCT_s$.
\end{assumption}
The assumptions from \cref{ass:treatment-assignments} are common in the causal inference literature, and are satisfied by design given that we have control over the treatment assignments in the setting we study.

We let $\propscore_s \coloneqq \Paren{\pi_s(a)}_{a \in \cA_s \cup \Set{\control}}$ be the \emph{propensity score} over the set of active treatments and control in sub-experiment $s$, where we define $\pi_s(a) \coloneqq \PP_{\RCT_s}(A_{i,s} = a \svert \cF_{s-1})$ for each $a \in \cA_s \cup \Set{\control}$. In addition to \cref{ass:treatment-assignments}, which tells us that $\pi_s(a) \in (0,1)$ for each $a \in \cA_s \cup \Set{\control}$, we  assume that $\sum_{a \in \cA_s \cup \Set{\control}} \pi_s(a)  = 1$ and that the propensity score $\propscore_s$ is determined (i.e., fixed) at the beginning of each sub-experiment $s$ and that it is $\cF_{s-1}$-measurable. This means that the propensity score can be a function of the data seen in previous sub-experiments, but once the current sub-experiment starts it can no longer be modified (until the start of the next sub-experiment). Having defined the propensity score, we model the treatment assignment as follows: in every sub-experiment $s$ we commit to drawing the treatment assignments from a categorical distribution over the set of active treatments. Letting $\RCT_s \coloneqq \Categorical_{\cA_s \cup \Set{\control}}(\propscore_s)$, the treatment assignment for unit $i \in \cI_s$ in sub-experiment $s$ is drawn $\iid$ as $A_{i,s} \sim \RCT_s$. To simplify the notation used throughout the rest paper, we write $\PRCT \coloneqq \P \otimes \RCT_1 \otimes \dots \otimes \RCT_t$ to denote the joint law of the potential outcomes and treatment assignments under the induced experimental design.

In addition to our assumptions on the treatment assignments and propensity scores, 
we make the following assumptions for the treatment schedule.

\begin{assumption}[Assumptions on the treatment schedule]
\label{assumption:treatment-schedule}
    We make the following assumptions on how the treatments are added and
    removed from the experiment: $(i)$ Let $a \in \cA_s$ be a treatment active in sub-experiment $s \in [t]$. Then, the decision to keep treatment $a$ in sub-experiment $s + 1$ is $\cF_{s}$-measurable; $(ii)$ Let $\cA_s'$ be the set of new treatments to be added in sub-experiment $s$. We assume that the decision to add these treatments to the experiment is $\cF_{s-1}$-measurable.
\end{assumption}
\cref{assumption:treatment-schedule} has to do with the information available to us when deciding to add and remove treatments, and it tells us that such decisions cannot depend on future information. Otherwise, these decisions can be arbitrary. For example, a decision to remove a treatment from the experiment could be based on the belief, formed from the evidence gathered, that such treatment will not yield a higher expected outcomes than any other active treatment in the experiment.

We formally state the data collection protocol of Always-On experimentation in \cref{algorithm:data-collection}, which can be found in \cref{sec:further-info-testing-by-betting}. In the following section we introduce the type of hypotheses tests we consider, and describe how to construct sequential anytime valid test statistics.

\subsection{Testing by betting in Always-On experimentation}
\label{sec:testing-always-on}
Rather than pursuing accurate estimates of the average treatment effects, our goal is to create powerful test statistics that allow us to determine as quickly as possible whether each of the treatments added to the experiment is effective. In this section we provide an informal introduction to the concepts that are needed to obtain such statistics; for formal definitions see \cref{sec:k-armed-experiments,sec:treatment-histories}.

For each of the active arms in sub-experiment $s \in [t]$, $a \in \cA_s$, we are interested in testing whether their treatment effect $\ate_\P(a) \in \R$ is greater than some threshold $\delta \in \R$. We therefore run a hypothesis test for each of the active arms $a \in \cA_s$, where under the null $\cP(a)$ the treatment effect is less than or equal to the threshold $\delta$ while under the alternative $\cQ(a)$ the treatment effect is greater than $\delta$. That is, for each treatment $a \in \cA$ the null and alternative hypotheses take the following form
\begin{equation}
\label{eq:hypotheses-general-form}
    \cP(a) \coloneqq \Set{\P \smvert \ate_\P(a) \leq \delta} \quad\mathrm{versus}\quad \cQ(a) \coloneqq \Set{\P \smvert \ate_\P(a) > \delta} \mcom
\end{equation}
and we additionally define the sets $\cP \coloneqq \bigcup_{a \in \cA} \cP(a)$ and $\cQ \coloneqq \bigcup_{a \in \cA} \cQ(a)$. 

Given that we are interested in testing each of the active arms, we construct a test supermartingale: a supermartingale that is almost surely nonnegative and whose expectation is upper bounded by one.
We first introduce some additional notation. For sub-experiment $s \in \cT_t(a)$ we let $\chN_a(s) \coloneqq \sum_{i \in \cI_s} \Ind{A_{i,s} = a~\tor~ A_{i,s}=\control}$ be the number of units in sub-experiment $s$ which were assigned to the treatment $a \in \cA_s$ or the control. Furthermore, we let $N_a(t) \coloneqq \sum_{s \in \cT_t(a)} \chN_a(s)$ be the total number of units which have been assigned to either treatment $a \in \cA$ or the control over those sub-experiments in which treatment $a$ has been active. Let $\hate_{i,s}(a)$ be the estimator of the treatment effect of treatment $a \in \cA_s$, $((\bet_{i,s})_{i \in \cI_s})_{s \in \cT_t(a)}$ be a predictable process (which we also refer to as a \emph{portfolio}), and for each sub-experiment $s$ let $g_s : \R \to \R_{\geq 0}$ be a $\cF_{s-1}$-measurable nondecreasing function that maps a real-valued variable into the positive reals. Having introduced the necessary concepts, we now present the form of the test statistics---to test the hypotheses given in \eqref{eq:hypotheses-general-form}---that we consider for each treatment $a \in \cA$ throughout the rest of the paper:
\begin{equation}
\label{eq:test-supermartingale-general-form}
    W_{N_a(t)} \coloneqq \prod_{s \in \cT_t(a)} W_{\chN_a(s)} \quad\mathrm{where}\quad
    W_{\chN_a(s)} \coloneqq \prod_{i \in \cI_s} \Paren{1 - \bet_{i,s} + \bet_{i,s}g_s(\hate_{i,s}(a)) /g_s(\delta)}^{\xi_{i,s}(a)} \mcom
\end{equation}
where we have used the shorthand $\xi_{i,s}(a) \coloneqq \Ind{A_{i,s} = a~\tor~A_{i,s}=\control}$.

The test statistic from \eqref{eq:test-supermartingale-general-form} pools together the information from those units which received the treatment $a \in \cA$ and those which received the control across the different sub-experiments. 
The following lemma states that this test statistic is a test supermartingale; see
\cref{proof:test-supermartingale-general-form} for the proof.

\begin{lemma}
\label{lem:test-supermartingale-general-form}
    Fix an arm $a \in \cA$ and a distribution $\P \in \cP(a)$.
    Suppose that $\E_{\PRCT}[g_s(\hate_{i,s}(a)) \svert \cF_{i-1, s}, \xi_{i,s}(a) = 1] = g_s(\ate_{\P}(a))$ for each $s \in [t]$
    and that \cref{assumption:dgp-assumption,ass:treatment-assignments,assumption:treatment-schedule} hold. Then, the process $W_{N_a(t)}$ forms a test $\cP(a)$-supermartingale.
\end{lemma}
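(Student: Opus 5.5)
We will show that the process, indexed unit by unit in the order units enter the experiment, is nonnegative, starts at one, and has each one-step factor with conditional mean at most one. First we linearize the indexing. We order all units by the pair $(s,j)$, with sub-experiment $s$ first and arrival position $j$ second, and view $W_{N_a(t)}$ as the terminal value of the running product
\begin{equation}
  M_{j,s} \coloneqq \prod_{r \in \cT_{s-1}(a)} W_{\chN_a(r)} \cdot \prod_{i \in \cI_s(j)} \Paren{1 - \bet_{i,s} + \bet_{i,s} g_s(\hate_{i,s}(a))/g_s(\delta)}^{\xi_{i,s}(a)\Ind{s \in \cT_t(a)}} \mcom
\end{equation}
adapted to the filtration $\cF_{j,s}$. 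The initial value is $M_{0,1}=1$. For sub-experiments $s \notin \cT_t(a)$, and for units with $\xi_{i,s}(a)=0$, the factor equals one. This reindexing is legitimate because of \cref{assumption:treatment-schedule}: whether $s \in \cT_t(a)$, i.e.\ whether $a \in \cA_s$, is $\cF_{s-1}$-measurable. Hence the exponent $\Ind{s\in\cT_t(a)}$ is predictable, and so is the function $g_s$.

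Nonnegativity requires the bet to satisfy $\bet_{i,s} \in [0,1]$. This is the implicit constraint on the portfolio that we will state. Given it, each factor $1-\bet_{i,s}+\bet_{i,s}g_s(\cdot)/g_s(\delta)$ is nonnegative because $g_s \ge 0$ (and $g_s(\delta)>0$). So $M$ is nonnegative almost surely.

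The main step is the one-step supermartingale inequality. Fix unit $i$, the $j$th arrival in sub-experiment $s\in\cT_t(a)$, and condition on $\cF_{j-1,s}$. The quantities $\bet_{i,s}$, $g_s$, and $\cA_s$ are measurable with respect to this $\sigma$-algebra. We split on the event $\{\xi_{i,s}(a)=0\}$, where the factor is one, versus $\{\xi_{i,s}(a)=1\}$. On the latter event, the hypothesis of the lemma gives
\begin{equation}
\E_{\PRCT}\Brac{g_s(\hate_{i,s}(a)) \svert \cF_{j-1,s}, \xi_{i,s}(a)=1} = g_s(\ate_\P(a)) \le g_s(\delta) \mcom
\end{equation}
where the inequality uses that $\P\in\cP(a)$ means $\ate_\P(a)\le\delta$, together with the monotonicity of $g_s$. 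The conditional expected factor is therefore $1-\bet_{i,s}+\bet_{i,s}\,g_s(\ate_\P(a))/g_s(\delta)\le 1$, since $\bet_{i,s}\ge0$. Combining the two events via the tower property gives $\E[M_{j,s}\mid\cF_{j-1,s}]\le M_{j-1,s}$.

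Across boundaries between sub-experiments the same argument applies, with $\cF_{s-1}$ playing the role of $\cF_{n_{s-1},s-1}$. Here \cref{assumption:dgp-assumption} and the ignorability part of \cref{ass:treatment-assignments} are exactly what is needed to make the stated conditional-mean hypothesis meaningful for fresh units.

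I expect the main obstacle to be bookkeeping rather than analysis. The difficulty is checking that the event $\{\xi_{i,s}(a)=1\}$ and the removal and addition decisions interact correctly with the filtration, so that conditioning on $\xi_{i,s}(a)=1$ is compatible with the predictable bet. Also, since $t$ and the stopping are random but finite, we conclude by iterating expectations to get $\E[W_{N_a(t)}]\le1$, and we obtain time-uniform validity by viewing $M$ as a nonnegative supermartingale along the full sequence. This gives that $W_{N_a(t)}$ is a test $\cP(a)$-supermartingale.
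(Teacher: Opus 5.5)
Your proposal is correct and follows essentially the same argument as the paper. Nonnegativity holds by construction, and the one-step conditional-expectation bound comes from splitting on $\xi_{i,s}(a)\in\{0,1\}$ and using the predictability of $\bet_{i,s}$ and $g_s$, the unbiasedness hypothesis, and $\ate_{\P}(a)\le\delta$; the case $a\notin\cA_s$ is handled through the $\cF_{s-1}$-measurability from the treatment-schedule assumption. The paper writes out only the step for the last unit $n^\star$, so your unit-by-unit reindexing and your explicit conditions $\bet_{i,s}\in[0,1]$ and $g_s(\delta)>0$ simply make precise what it leaves implicit.
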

This statistic \eqref{eq:test-supermartingale-general-form} is 
a particular instantiation of 
a test supermartingale commonly used for testing the mean of bounded random variables (see for example \citep{waudby2024estimating,hendriks2021test,orabona2023tight,ryu2024confidence,shekhar2023nonparametric,chugg2023auditing,podkopaev2023sequentialTwoSample,podkopaev2023sequentialKernelized,ryu2025improved}) and which is in fact the only admissible test supermartingale for such problems \citep{clerico2025optimality}.

\cref{lem:test-supermartingale-general-form} together with Ville's inequality allow us to conclude that we are able to achieve time-uniform type-I error rate control when it comes to testing each individual treatment; see \cref{eq:type-I-error-guarantee} in \cref{sec:testing-by-betting} for a formal definition. However, in Always-On experimentation we are interested in testing multiple treatments $a \in \cA$, which means that we are in a multiple testing setting and must also aim to control a metric that takes this into account. In the following section we describe how we address multiple testing in Always-On experimentation.

\subsection{Multiple testing in Always-On experimentation}
\label{sec:multiple-testing}
Recall that we are interested in testing whether the treatment effect of each of the arms that have been added to the experiment is greater than some pre-determined threshold. This means that the experiment could consist of multiple hypothesis tests, one for each of the treatments that was added. Given that we do not know a priori the number of treatments that will be tested, we must control an error metric that is compatible with the flexibility sought after in Always-On experimentation. The online false discovery rate~\citep{foster2008alpha} affords such flexibility and, as such, we use it as our error metric in Always-On experimentation. In the rest of this section we adapt the notion of online false discovery rate to Always-On experimentation and provide an algorithm that controls it.

For a distribution $\P \in \cP \cup \cQ$, we let $\cH_0 \subseteq \cA$ be the set of treatments that are null under $\P$ based on the hypotheses from \eqref{eq:hypotheses-general-form}, or those from \eqref{eq:always-on-hypothesis-test} in \cref{sec:treatment-histories}. We let $\cD_s$ be the set of discoveries that have been made up to sub-experiment $s \in [t]$ (i.e., the set of treatments $a \in \bigcup_{r=1}^s \cA_r$ that have been declared as non-null), and we note how $\cD_1 \subseteq \cdots \subseteq \cD_t$. The \emph{false discovery rate}~\citep{benjamini1995controlling} for the set $\cD$ is defined as
\begin{equation}
    \FDR(\cD) \coloneqq \E_{\PRCT}\Brac{\frac{\Abs{\cD \cap \cH_0}}{\Abs{\cD} \vee 1}} \mper
\end{equation}
That is, the false discovery rate is the expected proportion of treatments that have been incorrectly declared as discoveries. Our aim is, thus, to produce discovery sets $\cD_s$ so that the false discovery rate of each sub-experiment, and thus the overall experiment, is controlled at some level $\alpha \in (0,1)$; more formally, our goal is to obtain the following guarantee:
\begin{equation}
\label{eq:online-fdr}
    \FDR(\cD_s) \leq \alpha \quad\text{for all $s \in [t]$} \mper
\end{equation}

To control the variant of the online false discovery rate given in \eqref{eq:online-fdr}, we adapt the asynchronous e-LOND algorithm due to \citet{xu2024online} to Always-On experimentation. Our algorithm to control the online false discovery rate is given in \cref{algorithm:multiple-testing}.

\begin{algorithm}[!htbp]
    \caption{Online false discovery rate control in Always-On experimentation}
    \label{algorithm:multiple-testing}
    \textbf{Input:} Confidence level $\alpha \in (0,1)$.
    
    \begin{enumerate}
        \item Choose a sequence of nonnegative reals, $\Paren{\gamma_r}_{r=1}^\infty$, such that $\sum_{r=1}^\infty \gamma_r \leq 1$.
        \item Initialize the hypothesis counter: $c = 0$.
    \end{enumerate}

    \For{each sub-experiment $s = 1, \dots, t$}{
        \begin{enumerate} 
            \setcounter{enumi}{2}
            \item Determine the set of new treatments to be tested: $\cA_s' \coloneqq \cA_s \setminus \bigcup_{r=1}^{s-1} \cA_r$
        \end{enumerate}
        \For{each $a \in \cA_s'$}{
            \begin{enumerate}
                \setcounter{enumi}{3}
                \item Update the hypothesis counter: $c = c + 1$ and set $\gamma_a \coloneqq \gamma_c$.
                \item Compute the confidence level: $\alpha_a \coloneqq \alpha \gamma_{a}\Paren{\Abs{\cD_{s-1}} + 1}$
            \end{enumerate}
        }
    }
\end{algorithm}

Notice how \cref{algorithm:multiple-testing} produces confidence levels for each of the treatments added to the experiment. We, in turn, use these confidence levels for each treatment arm $a \in \cA$ to construct their level-$\alpha_a$ sequential tests for each sub-experiment $s \in [t]$,
\begin{equation}
\label{eq:sequential-level-alpha-test}
    \phi_{s}^{(\alpha)}(a) \coloneqq \Ind{W_{N_a(s)} \geq 1/\alpha_a} \mper
\end{equation}
As soon as $\phi_s^\brackalpha(a) = 1$ we declare treatment $a$ a discovery and remove it from the experiment. When we combine this procedure with \cref{algorithm:multiple-testing}, we can guarantee that the online false discovery rate from \eqref{eq:online-fdr} is correctly controlled.

\begin{proposition}
    \label{prop:fdr-control}
    Fix a confidence level $\alpha \in (0,1)$ and a distribution $\P \in \cP \cup \cQ$. Suppose \cref{assumption:dgp-assumption,ass:treatment-assignments,assumption:treatment-schedule} hold, and for each treatment $a \in \cA$ and any sub-experiment $s \in [t]$, let $W_{N_a(s)}$ be its test supermartingale as defined in \eqref{eq:test-supermartingale-general-form}. Then if we use \cref{algorithm:multiple-testing} to select the confidence levels for each treatment and use the sequential level-$\alpha$ test as defined in \eqref{eq:sequential-level-alpha-test}, we have that for all $s \in [t]$, $\FDR(\cD_s) \leq \alpha$.
\end{proposition}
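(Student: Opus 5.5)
We follow the e-LOND argument of \citet{xu2024online}, adapted so that the thresholds $\alpha_a$ are fixed when each treatment enters, while the e-processes run over many sub-experiments. For each $a \in \cA$ define the stopped e-value $e_a \coloneqq \sup_{s \in [t]} W_{N_a(s)}$. This is the running maximum of the test supermartingale, stopped at the discovery time or at the removal time; after removal the product in \eqref{eq:test-supermartingale-general-form} no longer changes. By construction, $a \in \cD_s$ implies $e_a \geq W_{N_a(r)} \geq 1/\alpha_a$ for some $r \le s$. Fix $s \in [t]$ and write $\cD \equiv \cD_s$. Using $\Ind{a\in\cD}\le \alpha_a e_a$, we get
\begin{equation}
\frac{\Abs{\cD\cap\cH_0}}{\Abs{\cD}\vee 1} = \sum_{a\in\cH_0}\frac{\Ind{a\in\cD}}{\Abs{\cD}\vee 1} \le \sum_{a\in\cH_0}\frac{\alpha\gamma_a(\Abs{\cD_{s_a-1}}+1)\,e_a\,\Ind{a\in\cD}}{\Abs{\cD}\vee 1},
\end{equation}
where $s_a$ denotes the sub-experiment in which $a$ was introduced.

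The key combinatorial step is as follows. If $a \in \cD$, then $a$ itself is counted in $\cD$ but not in $\cD_{s_a-1}$, since $a$ was not yet present at that time. Together with the monotonicity $\cD_{s_a-1}\subseteq\cD$, this gives $\Abs{\cD_{s_a-1}}+1 \le \Abs{\cD}$. Hence each summand is at most $\alpha\gamma_a e_a$, and
\begin{equation}
\FDR(\cD_s) \le \alpha\sum_{a\in\cH_0}\gamma_a\,\E_{\PRCT}[e_a].
\end{equation}
Because the counter assigns the sequence $(\gamma_r)$ injectively to the treatments, $\sum_a\gamma_a\le 1$. It therefore suffices to show that $\E_{\PRCT}[\Ind{a \text{ introduced}}\, e_a\,\Ind{a\in\cD}/\alpha_a \cdots]$ is bounded appropriately, or more simply that $\E[e_a \Ind{a\in \cD}]$-type terms are bounded by $1$.

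This last step is the main obstacle, because $\E[\sup_s W]$ is \emph{not} bounded by $1$ in general; Ville's inequality only gives tail bounds. I would fix this by replacing the supremum with the value at a stopping time. Let $\tau_a$ be the first sub-experiment at which $W_{N_a(\cdot)}\ge 1/\alpha_a$, or the last sub-experiment otherwise, and set $e_a \coloneqq W_{N_a(\tau_a)}$. The bound $\Ind{a\in\cD_s}\le\alpha_a e_a$ still holds, since on discovery $e_a$ is exactly the value that crossed the threshold, and the rest of the chain above is unchanged. Then \cref{lem:test-supermartingale-general-form}, combined with optional stopping for the nonnegative supermartingale, gives $\E_{\PRCT}[e_a]\le 1$ for every null $a$. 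This requires the stopping time to be adapted, which holds because $\alpha_a$ is $\cF_{s_a-1}$-measurable, and because the schedule is predictable by \cref{assumption:treatment-schedule}, so removals are stopping times and the stopped process remains a supermartingale.

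One subtlety remains: whether $a$ is introduced at all is random. I would handle it by defining, for each counter index $c$, the process associated with the $c$-th introduced treatment, set equal to $1$ before introduction. This process is again a supermartingale with expectation at most $1$ on the event that its hypothesis is null, since the null status of the $c$-th introduced treatment is determined at a predictable time. The sum then becomes $\alpha\sum_c\gamma_c\le\alpha$.
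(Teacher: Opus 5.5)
Your proposal is correct and is essentially the paper's argument. The paper's ``frozen'' supermartingale $W_{N_a(s)}$ is your stopped value, and the paper uses the same two ingredients: $\Ind{W_{N_a(s)}\ge 1/\alpha_a}\le \alpha_a W_{N_a(s)}$, and $\Abs{\cD_{s_0(a)-1}}+1\le \Abs{\cD_s}\vee 1$ on the discovery event, which you justify more carefully than the paper via $a\in\cD_s\setminus\cD_{s_a-1}$. The only real difference is how the random set of introduced treatments is handled: you apply optional stopping to counter-indexed processes whose null status is fixed at introduction, whereas the paper peels off $\alpha\sum_{a\in\cA_s'}\gamma_a$ one sub-experiment at a time via the tower property, using a predictable non-increasing null indicator $J_s(a)$ that also covers the history-dependent nulls of \cref{sec:treatment-histories}.
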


The proof of \cref{prop:fdr-control}, given in \cref{proof:fdr-control}, explicitly handles scenarios in which treatments can be added to the experiment based on past observations, as well as instances in which the evidence used to declare hypotheses as discoveries (i.e., the test supermartingales) evolves over time. From \cref{prop:fdr-control} we can conclude that as long as we select the confidence levels for each treatment using \cref{algorithm:multiple-testing} and use test supermartingales within our sequential tests, then we can control the false discovery rate in Always-On experimentation. Hence, the section to follow is solely focused on constructing log-optimal test supermartingales. To build our understanding, in the main text we consider a setting in which units are only observed once for the entirety of the experiment, and in \cref{sec:treatment-histories} we consider a setting in which units are repeatedly observed throughout the experiment.

\section{Constructing Powerful Tests for Always-On Experimentation}
\label{sec:k-armed-experiments}

We now consider a setting in which units are only observed once across the overall experiment, and describe how to construct test supermartingales that are growth-rate optimal. We first make explicit the assumptions that we impose throughout this section.
\begin{assumption}
\label{assumption:warmup-dgp}
    We make the following assumptions on the units that are observed in each sub-experiment $s \in [t]$: $(i)$ If a unit participated in sub-experiment $s$, then this means that they have not and will not participate in any other sub-experiment $s' \in [t]$. 
    $(ii)$ The potential outcomes of the units take their values from the unit interval: $Y_i(\bz) \in [0,1]$, for each $i \in \cI_s$ and $\bz \in \cZ$.
\end{assumption}
A consequence of \cref{assumption:warmup-dgp} is that the units' outcomes do not depend on past treatments---as they can only receive one treatment per experiment. Therefore, in this setting the set of all possible treatment paths takes the form $\cZ = \cA \cup \Set{\control}$, and the observed outcome for unit $i \in \cI_s$ is given by $Y_{i,s} \equiv Y_{i,s}(A_{i,s}) \coloneqq \sum_{a \in \cA \cup \Set{\control}} \Ind{A_{i,s} = a} Y_{i,s}(a)$.

In the present setting, we are interested in testing whether the average treatment effect under distribution $\P \in \cP \cup \cQ$, defined for any $a \in \cA$ as $\ate_\P(a) \coloneqq \E_\P\Brac{Y_1(a) - Y_1(\control)}$, is less than the threshold $\delta \in [-1, 1]$. The average treatment effect is unobservable as it depends on both the potential outcome under treatment $a$ and under control for each unit. Nonetheless, under \cref{assumption:dgp-assumption,ass:treatment-assignments,assumption:warmup-dgp} it is an identifiable quantity, meaning that under each sub-experiment $s \in [t]$ and letting $i \in \cI_s$: $\ate_\P(a) = \E_{\PRCT}\Brac{Y_{i,s} \smvert A_{i,s} = a} - \E_{\PRCT}\Brac{Y_{i,s} \smvert A_{i,s} = \control}$. 

We use a slightly modified Horvitz-Thompson estimator~\citep{horvitz1952generalization} (also known as inverse propensity weighting) to estimate the average treatment effect for each treatment in sub-experiment $s$ using the data of unit $i \in \cI_s$. For each treatment $a \in \cA_s$ it takes the form
\begin{equation}
\label{eq:warmup-ht-estimator}
    \hate_{i,s}(a) \coloneqq Y_{i,s} \Paren{\frac{\Ind{A_{i,s} = a}}{\hpi_s(a)} -
    \frac{\Ind{A_{i,s} = \control}}{\hpi_s(\control)}} \mcom
\end{equation}
where the modified propensity scores for testing the average effect of treatment $a$ take the form
\begin{equation}
\label{eq:modified-testing-propensity-scores}
    \hpi_s(a) \coloneqq \frac{\pi_s(a)}{\pi_s(a) + \pi_s(\control)} \quad\mathrm{and}\quad \hpi_s(\control) \coloneqq \frac{\pi_s(\control)}{\pi_s(a) + \pi_s(\control)} \mper
\end{equation}
We have slightly abused the notation and made implicit within the definition of the propensity score for the control the fact that it depends on the treatment being tested. An important property of the Horvitz-Thompson estimator from \eqref{eq:warmup-ht-estimator} is that it is conditionally unbiased.
\begin{lemma}
\label{lem:warmup-unbiased-ht-estimator} 
    Fix a distribution $\P \in \cP \cup \cQ$ and a sub-experiment $s \in [t]$. Under \cref{assumption:dgp-assumption,ass:treatment-assignments,assumption:treatment-schedule,assumption:warmup-dgp} we have that for any treatment $a \in \cA_s$ and unit $i \in \cI_s$, the Horvitz-Thompson estimator \eqref{eq:warmup-ht-estimator} is a conditionally unbiased estimator: $\E_{\PRCT}\Brac{\hate_{i,s}(a) \smvert \cF_{i-1, s}, \xi_{i,s}(a) = 1} = \ate_\P(a)$.
\end{lemma}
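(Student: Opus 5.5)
We will compute the conditional expectation directly by splitting on which of the two arms unit $i$ received. Write $\xi \equiv \xi_{i,s}(a)$ and $\cF \equiv \cF_{i-1,s}$. On the event $\{\xi = 1\}$ exactly one of $\Ind{A_{i,s}=a}$ and $\Ind{A_{i,s}=\control}$ equals one. Since $Y_{i,s} = Y_i(A_{i,s})$ by \cref{assumption:warmup-dgp}, we have on this event
\begin{equation}
    \hate_{i,s}(a) = \frac{\Ind{A_{i,s}=a}\,Y_i(a)}{\hpi_s(a)} - \frac{\Ind{A_{i,s}=\control}\,Y_i(\control)}{\hpi_s(\control)} \mper
\end{equation}
This replaces the observed outcome by potential outcomes multiplied by assignment indicators. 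By linearity it then suffices to show
\begin{equation}
    \E_{\PRCT}\Brac{\Ind{A_{i,s}=a}Y_i(a) \smvert \cF, \xi=1} = \hpi_s(a)\,\E_\P[Y_1(a)] \mcom
\end{equation}
together with the analogous identity for the control.

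For the first identity we use two facts. The first is that $\pi_s$, and hence $\hpi_s$, is $\cF_{s-1}\subseteq\cF$-measurable, so the weights can be pulled out of the conditional expectation. The second is conditional ignorability, \cref{ass:treatment-assignments}(i). Under it, $A_{i,s}$ is independent of $(Y_i(a),Y_i(\control))$ given $\cF$, and therefore so is the pair $(\Ind{A_{i,s}=a}, \xi)$. Conditioning on $\{\xi = 1\}$ is handled by writing the quantity as a ratio:
\begin{equation}
    \frac{\E[\Ind{A_{i,s}=a}Y_i(a)\mid\cF]}{\PP(\xi=1\mid\cF)}
    = \frac{\PP(A_{i,s}=a\mid\cF)\,\E[Y_i(a)\mid\cF]}{\pi_s(a)+\pi_s(\control)} \mper
\end{equation}
Here we use that $A_{i,s}\sim\RCT_s$ is drawn i.i.d.\ given $\cF_{s-1}$. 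Hence $\PP(A_{i,s}=a\mid\cF)=\pi_s(a)$ and $\PP(\xi=1\mid\cF)=\pi_s(a)+\pi_s(\control)$, and the latter is positive by positivity. The ratio of these probabilities is exactly $\hpi_s(a)$, which cancels the inverse weight.

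It remains to show $\E[Y_i(a)\mid\cF]=\E_\P[Y_1(a)]$. This uses \cref{assumption:dgp-assumption} together with \cref{assumption:warmup-dgp}(i): unit $i$ is a fresh i.i.d.\ draw from $\P$ that appears in no other sub-experiment. Consequently $\cF$ is generated only by other units' potential outcomes and by assignments, and none of these depend on unit $i$'s outcomes. The active sets $\cA_r$ and the propensities depend only on past data by \cref{assumption:treatment-schedule}. Subtracting the control term gives $\E_\P[Y_1(a)-Y_1(\control)]=\ate_\P(a)$.

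The main obstacle is the conditioning on the event $\{\xi=1\}$ jointly with the $\sigma$-algebra $\cF$. The paper's notation $\E[\cdot\mid\cF,\xi=1]$ has to be interpreted as the elementary ratio above, computed $\cF$-conditionally. The justification for that step is the i.i.d.\ draw of $A_{i,s}$ given $\cF_{s-1}$, which makes the joint law of $(A_{i,s}, Y_i(\cdot))$ given $\cF$ factor into a product. A secondary care point is the independence of $Y_i(\cdot)$ from $\cF$, which rests on each unit appearing in only one sub-experiment.
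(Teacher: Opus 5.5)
Your proposal is correct and follows essentially the same route as the paper. Both arguments split $\hate_{i,s}(a)$ into treatment and control terms, pull out the $\cF_{s-1}$-measurable weights, factor via conditional ignorability, use $\PP(A_{i,s}=a \mid \cF_{i-1,s}, \xi_{i,s}(a)=1)=\hpi_s(a)$, and identify the conditional mean of $Y_i(a)$ with $\E_\P[Y_1(a)]$ via i.i.d.\ sampling of units. Your explicit ratio treatment of the event $\{\xi_{i,s}(a)=1\}$, and your appeal to each unit appearing in only one sub-experiment, just make precise steps the paper states more tersely.
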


Under our assumption that the potential outcomes are bounded within
the unit interval, the Horvitz-Thompson estimator is almost surely bounded as
$-1/\hpi_s(\control) \leq \hate_{i,s}(a) \leq 1/\hpi_s(a)$. In addition, to use the test supermartingale from \eqref{eq:test-supermartingale-general-form}, we must ensure that the random variables (and also threshold) are almost surely nonnegative. For this reason, for each sub-experiment $s \in \cT_t(a)$ we define the transform $g_s : [-1/\hpi_s(\control), 1/\hpi_s(a)] \to \R_{\geq 0}$ as follows:
\begin{equation}
\label{eq:warmup-map-to-positive-reals}
    g_s(x) \coloneqq 1 + \hpi_s(\control) x \mper
\end{equation}
We use the map from \eqref{eq:warmup-map-to-positive-reals} and the Horvitz-Thompson estimator from \eqref{eq:warmup-ht-estimator} in the test statistics defined in \eqref{eq:test-supermartingale-general-form}. Doing so leads to the following proposition, whose proof is given in \cref{proof:test-supermartingale-warmup}, which states that the test statistics given in \eqref{eq:test-supermartingale-general-form} are test supermartingales.

\begin{proposition}
\label{prop:test-supermartingale-warmup}
    Fix an arm $a \in \cA$ and a distribution $\P \in \cP(a)$. Let $W_{N_a(t)}$ be the test statistic from \eqref{eq:test-supermartingale-general-form} constructed using the Horvitz-Thompson estimator given in \eqref{eq:warmup-ht-estimator} and transform defined in \eqref{eq:warmup-map-to-positive-reals}.
    Then, under \cref{assumption:dgp-assumption,ass:treatment-assignments,assumption:treatment-schedule,assumption:warmup-dgp}, $W_{N_a(t)}$ forms a test $\cP(a)$-supermartingale.
\end{proposition}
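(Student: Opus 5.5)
The plan is to reduce \cref{prop:test-supermartingale-warmup} to \cref{lem:test-supermartingale-general-form}. That lemma already gives the test-supermartingale property once two things are checked for the pair $(g_s,\hate_{i,s}(a))$ in \eqref{eq:warmup-ht-estimator}--\eqref{eq:warmup-map-to-positive-reals}: the conditional moment identity $\E_{\PRCT}[g_s(\hate_{i,s}(a)) \svert \cF_{i-1,s}, \xi_{i,s}(a)=1] = g_s(\ate_\P(a))$, and the structural requirements that $g_s$ be $\cF_{s-1}$-measurable, nondecreasing and nonnegative. Once these hold, we are exactly in the setting of the lemma.

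\emph{Structural requirements.} Since $\propscore_s$ is fixed at the start of sub-experiment $s$ and is $\cF_{s-1}$-measurable, so are $\hpi_s(a)$ and $\hpi_s(\control)$ in \eqref{eq:modified-testing-propensity-scores}. Hence $g_s(x)=1+\hpi_s(\control)x$ is an $\cF_{s-1}$-measurable affine map. Its slope is $\hpi_s(\control)\in(0,1)$, which is positive by positivity in \cref{ass:treatment-assignments}, so $g_s$ is nondecreasing. On the range $[-1/\hpi_s(\control),1/\hpi_s(a)]$, which contains $\hate_{i,s}(a)$ almost surely by \cref{assumption:warmup-dgp}(ii), we have $g_s \ge 1-1 = 0$. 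So the factors $g_s(\hate_{i,s}(a))$ are almost surely nonnegative.

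\emph{The threshold.} We need $g_s(\delta)>0$ so the ratio in \eqref{eq:test-supermartingale-general-form} is well defined. For $\delta\in[-1,1]$ we have $g_s(\delta)\ge 1-\hpi_s(\control)>0$, because $\hpi_s(a)>0$. The bettor must also restrict $\bet_{i,s}\in[0,1]$ so that each factor $1-\bet_{i,s}+\bet_{i,s}g_s(\hate_{i,s}(a))/g_s(\delta)$ is nonnegative. I take this restriction to be implicit in the general construction; if it is not, I would add it explicitly.

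\emph{Moment identity.} Because $g_s$ is affine with $\cF_{s-1}\subseteq\cF_{i-1,s}$-measurable coefficients, the conditional expectation passes through it:
\[
\E_{\PRCT}[g_s(\hate_{i,s}(a)) \svert \cF_{i-1,s},\xi_{i,s}(a)=1] = 1+\hpi_s(\control)\,\E_{\PRCT}[\hate_{i,s}(a)\svert \cF_{i-1,s},\xi_{i,s}(a)=1].
\]
By \cref{lem:warmup-unbiased-ht-estimator}, the inner expectation equals $\ate_\P(a)$, so the right-hand side is $g_s(\ate_\P(a))$. This is precisely the hypothesis of \cref{lem:test-supermartingale-general-form}.

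Applying that lemma with $\P\in\cP(a)$ then gives that $W_{N_a(t)}$ is a test $\cP(a)$-supermartingale. Inside the lemma, the key inequality is $g_s(\ate_\P(a))\le g_s(\delta)$, which uses monotonicity of $g_s$ together with $\ate_\P(a)\le\delta$.

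The main obstacle is the conditioning argument, namely making sure the conditioning event $\xi_{i,s}(a)=1$ is compatible with $\cF_{i-1,s}$, and that $\cF_{s-1}$-measurability of $\hpi_s$ (and of the schedule, by \cref{assumption:treatment-schedule}) lets us treat the coefficients as constants. That work is delegated to \cref{lem:warmup-unbiased-ht-estimator} and \cref{lem:test-supermartingale-general-form}, so here it reduces to verifying these measurability conditions.
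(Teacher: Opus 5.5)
Your proposal is correct and follows the paper's own route: the paper proves this proposition as an immediate consequence of \cref{lem:test-supermartingale-general-form} and \cref{lem:warmup-unbiased-ht-estimator}, plus the fact that $g_s$ is a nondecreasing affine map, which is exactly your reduction. Your extra checks spell out what the paper leaves implicit: $\cF_{s-1}$-measurability of the coefficients, nonnegativity of $g_s$ on the range of $\hate_{i,s}(a)$, $g_s(\delta)\ge \hpi_s(a)>0$, and $\bet_{i,s}\in[0,1]$.
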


\cref{prop:test-supermartingale-warmup} allows us to conclude that we can achieve time-uniform type-I error rate control in the present setting. Nevertheless, we are also interested in constructing test supermartingales that exhibit a large growth rate whenever we are under the alternative, no matter the treatment schedule and units that were part of the experiment.

To evaluate the growth-rate optimality of test supermartingales in Always-On experimentation, we hold the schedule of the overall experiment fixed. More specifically, we assume that the number of sub-experiments $t \in \N$ as well as the sets of active treatments $\cA_s$ in each sub-experiment $s \in [t]$ are fixed and that the only growing quantity is the number of units that participate in each sub-experiment. To denote a growing number of units in each sub-experiment, we define $n_\undert \coloneqq \min_{s \in [t]} n_s$ to be the minimum number of units in any of the sub-experiments. Thus, $n_\undert \to \infty$ implies that $n_s \to \infty$ for all $s \in [t]$. Under the present setting, we assess the growth-rate optimality of test supermartingales that use the estimator from \eqref{eq:warmup-ht-estimator} and transform from \eqref{eq:warmup-map-to-positive-reals} with respect to the following comparator class:
\begin{equation}
\label{eq:comparator-class}
    \cW(a) \coloneqq \{W:W~\text{is a test $\cP(a)$-supermartingale of the form \eqref{eq:test-supermartingale-general-form}}\} \mcom
\end{equation}
for each treatment $a \in \cA$. Having defined the necessary concepts, we now introduce the notion of log-optimality in dynamic experiments.

\begin{definition}[Almost-sure log-optimality in Always-On experimentation]
\label{def:log-optimality-dynamic-experiments}
    Fix an arm $a \in \cA$ and a distribution $\Q \in \cQ(a)$. Let $\cW(a)$ be the class defined in \eqref{eq:comparator-class}. We say that the test supermartingale $W_{N_a(t)}$ is \emph{$\QRCT$-almost surely log-optimal} with respect to the class $\cW(a)$, if for any $\tW_{N_a(t)} \in \cW(a)$,
    \begin{equation}
    \label{eq:log-optimality-dynamic-expeirments}
        \liminf_{n_\undert \to \infty} \frac{1}{N_a(t)} \Paren{\logp{W_{N_a(t)}} - \logp{\tW_{N_a(t)}}} \geq 0 \quad\QRCT\text{-almost surely}\mper
    \end{equation}
    If $N_a(t) = 0$, then we say that the growth rates of $W_{N_a(t)}$ and $\tW_{N_a(t)}$ are equivalent.
\end{definition}

To construct growth-rate optimal test supermartingales, in the sense of \cref{def:log-optimality-dynamic-experiments}, we run independent copies of the Universal Portfolio algorithm~\citep{cover1991universal} (as given in \eqref{eq:universal-portfolio}) for each arm $a \in \cA$ and restart the algorithms corresponding to the treatments in the active set of arms in every sub-experiment $s \in [t]$. More explicitly, in every sub-experiment $s \in [t]$ and for each unit $i \in \cI_s$ we compute the portfolio corresponding to treatment $a \in \cA_s$ as 
\begin{equation}
\label{eq:dynamic-k-armed-portfolio-selection}
    \bet_{i,s}(a) \equiv \UP\Paren{W_{\chN_a(s;i-1)}^\brackbet} \quad\text{where}\quad
    W_{\chN_a(s;i-1)}^{(\bet)} \coloneqq \prod_{j \in \cI_s(i-1)} \Brac{(1 - \bet) + \bet g_s(\hate_{j,s}(a)) / g_s(\delta)}^{\xi_{j,s}(a)} \mcom
\end{equation}
where we denote the portfolios as a function of $a$ in order to make it explicit that the portfolios are computed independently across treatment arms.
In the following theorem we establish the almost sure log-optimality of our test supermartingales when we select the portfolios using the strategy established in \eqref{eq:dynamic-k-armed-portfolio-selection}.

\begin{theorem}
\label{thm:log-optimality-dynamic-k-armed-experiments}
Fix the treatment schedule for the experiment and an arm $a \in \cA$. Let $W_{N_a(t)}$ be a test $\cP(a)$-supermartingale of the form \eqref{eq:test-supermartingale-general-form} constructed using the transformed Horvitz-Thompson estimators \eqref{eq:warmup-ht-estimator} and thresholds \eqref{eq:warmup-map-to-positive-reals}, and whose portfolios are selected using the strategy from \eqref{eq:dynamic-k-armed-portfolio-selection}. 
Suppose that \cref{assumption:dgp-assumption,ass:treatment-assignments,assumption:treatment-schedule,assumption:warmup-dgp} hold. Then, for any $\Q \in \cQ(a)$, $W_{N_a(t)}$ is $\QRCT$-almost sure log-optimal in the sense of \cref{def:log-optimality-dynamic-experiments}.
\end{theorem}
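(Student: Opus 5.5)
Because the schedule is fixed, the proof reduces to a per-sub-experiment statement that is then aggregated. Write, for each $s \in \cT_t(a)$, the per-unit wealth factors as $X_{j,s} \coloneqq g_s(\hate_{j,s}(a))/g_s(\delta)$ for units with $\xi_{j,s}(a)=1$. Any $\tW_{N_a(t)} \in \cW(a)$ factorizes as $\prod_{s \in \cT_t(a)} \tW_{\chN_a(s)}$, where each factor uses some predictable portfolio $(\tilde\bet_{j,s})$ taking values in the set $\Lambda_s$ of admissible bets. Here $\Lambda_s$ is the set of $\bet$ such that $1-\bet+\bet x \ge 0$ for all $x$ in the support $[0, g_s(1/\hpi_s(a))/g_s(\delta)]$, which is a compact interval. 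Then
\[
\frac{1}{N_a(t)}\Paren{\log W_{N_a(t)} - \log \tW_{N_a(t)}} = \sum_{s \in \cT_t(a)} \frac{\chN_a(s)}{N_a(t)} \cdot \frac{1}{\chN_a(s)}\Paren{\log W_{\chN_a(s)} - \log \tW_{\chN_a(s)}} .
\]
Since $t$ is finite and the weights $\chN_a(s)/N_a(t)$ lie in $[0,1]$, it suffices to show that each inner normalized difference has nonnegative $\liminf$ almost surely. We also need $\chN_a(s) \to \infty$ a.s. as $n_\undert \to \infty$. This follows from the strong law of large numbers, since $\chN_a(s)/n_s \to \pi_s(a)+\pi_s(\control) \ge 2p_s > 0$; the assignments within $s$ are i.i.d. given $\cF_{s-1}$, and the schedule and propensities are fixed.

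\textbf{Step 1: split against the best constant bet.} Within sub-experiment $s$, write
\[
\log W_{\chN_a(s)} - \log \tW_{\chN_a(s)} = \Paren{\log W_{\chN_a(s)} - \max_{\bet \in \Lambda_s} \log W^{(\bet)}_{\chN_a(s)}} + \Paren{\max_{\bet\in\Lambda_s}\log W^{(\bet)}_{\chN_a(s)} - \log \tW_{\chN_a(s)}} .
\]
\emph{First term.} Cover's universal portfolio regret bound, restarted at the start of $s$ as in \eqref{eq:dynamic-k-armed-portfolio-selection}, gives a deterministic lower bound of $-O(\log \chN_a(s))$ relative to the best constant rebalanced portfolio. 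Dividing by $\chN_a(s)$, this term vanishes. The bound needs nonnegative factors, which \eqref{eq:warmup-map-to-positive-reals} ensures; it is applied after a reparametrization of $\Lambda_s$ onto the simplex, which handles possibly negative $\bet$.

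\textbf{Step 2: the comparator term, the main obstacle.} Here we must show
\[
\liminf \frac{1}{\chN_a(s)}\Paren{\max_{\bet}\log W^{(\bet)}_{\chN_a(s)} - \log\tW_{\chN_a(s)}} \ge 0 \quad\text{a.s.}
\]
for an arbitrary predictable $\tilde\bet$, which may exploit the data. Conditional on $\cF_{s-1}$ and on $\xi_{j,s}(a)=1$, the $X_{j,s}$ are i.i.d. bounded, because units are fresh by \cref{assumption:warmup-dgp} and the potential outcomes are i.i.d. with i.i.d. categorical assignments. The plan is to follow the almost-sure log-optimality argument for i.i.d. returns in the style of Algoet--Cover.

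First, let $\bet^*_s$ maximize $\E[\log(1-\bet+\bet X)]$ over $\Lambda_s$. By the KKT conditions, $\E[(1-\tilde\bet+\tilde\bet X)/(1-\bet^*_s+\bet^*_s X)\mid\cF_{j-1,s}] \le 1$ for every predictable $\tilde\bet$. Hence $\tW_{\chN_a(s)}/W^{(\bet^*_s)}_{\chN_a(s)}$ is a nonnegative supermartingale. By Markov's inequality together with Borel--Cantelli, or Ville's inequality applied to $\Pr(\text{ratio} \ge k^2)$, we get $\limsup \frac1k\log(\tW/W^{(\bet^*_s)}) \le 0$ a.s.

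Second, $\max_\bet \log W^{(\bet)} \ge \log W^{(\bet^*_s)}$, and this comparison holds pathwise. Combining the two pieces yields the claim.

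The delicate points are twofold. One is the boundary of $\Lambda_s$, where $\log(1-\bet+\bet X)$ may reach $-\infty$; this is handled by allowing extended values or shrinking to an interior interval. The other is conditioning on the random subsequence of units with $\xi=1$, which is handled by indexing the process over those units only, so that optional skipping preserves i.i.d.-ness.

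\textbf{Step 3: aggregation.} Summing the per-$s$ bounds with bounded weights and using finiteness of $\cT_t(a)$ gives \eqref{eq:log-optimality-dynamic-expeirments}. The case $N_a(t)=0$ is trivial by definition.
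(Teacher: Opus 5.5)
Your proof has the same skeleton as the paper's. You compare against the process that bets the per-sub-experiment numeraire $\bet^*_s$. You lower-bound $\logp{W_{\chN_a(s)}}-\logp{W^{(\bet^*_s)}_{\chN_a(s)}}$ pathwise by $-R_{\chN_a(s)}$, going through the best constant bet in hindsight. You then control $\tW/W^{(\bet^*_s)}$ via the numeraire (Kuhn--Tucker) property together with Markov and Borel--Cantelli.

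The one structural difference is that the paper first proves two-sided almost-sure equivalence with the oracle $W^{\brackQ}_{N_a(t)}$, which needs the upper-deviation bound of \cref{lem:numeraire-property-on-expected-number-of-units}. You correctly note that a $\liminf$ statement only needs the one-sided, pathwise direction. Your route is therefore a valid streamlining; the paper's extra step buys the stronger fact that the growth rates of $W_{N_a(t)}$ and the oracle coincide. Your convex-weight aggregation over $s$ plays the same role as the paper's bound $R_{\chN_a(s)}/N_a(t)\le R_{\chN_a(s)}/\chN_a(s)$.

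\textbf{Step 1 needs a fix.} Your $\Lambda_s$ equals $[-1/(M_s-1),1]$ with $M_s\coloneqq g_s(1/\hpi_s(a))/g_s(\delta)>1$, so it contains negative bets. But the rule \eqref{eq:dynamic-k-armed-portfolio-selection} mixes only over $\bet\in[0,1]$. Its regret guarantee is therefore against $\max_{\bet\in[0,1]}$, and reparametrizing onto the simplex would be a different algorithm from the one in the theorem.

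The repair is short. By \cref{lem:warmup-unbiased-ht-estimator} and linearity of $g_s$,
$\E_{\QRCT}[X_{j,s}\mid\cF_{j-1,s},\xi_{j,s}(a)=1]-1=\hpi_s(\control)(\ate_{\Q}(a)-\delta)/g_s(\delta)>0$.
So the concave map $\bet\mapsto\E\logp{1-\bet+\bet X_{j,s}}$ has positive slope at $0$, and $\bet^*_s\in(0,1]$. Hence
$\logp{W_{\chN_a(s)}}\ge\max_{\bet\in[0,1]}\logp{W^{(\bet)}_{\chN_a(s)}}-R_{\chN_a(s)}\ge\logp{W^{(\bet^*_s)}_{\chN_a(s)}}-R_{\chN_a(s)}$.
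Equivalently, just work with bets in $[0,1]$ throughout, as the paper does with numeraires over $\triangle_1$.

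\textbf{Step 2: use Markov plus Borel--Cantelli, not Ville.} Of your two options, prefer Markov plus Borel--Cantelli indexed by $n_\undert$, as in the paper. The within-$s$ process depends on $n_1,\dots,n_{s-1}$ through $\cF_{s-1}$: the propensities are only $\cF_{s-1}$-measurable, not fixed, and so are the comparator's bets. A single supermartingale in $j$ therefore exists only for a fixed past. The Markov bound needs only that each ratio has expectation at most one.
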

The proof of \cref{thm:log-optimality-dynamic-k-armed-experiments} can be found in \cref{proof:log-optimality-dynamic-k-armed-experiments}. It consists of deriving a nonasymptotic inequality on the absolute difference between $W_{N_a(t)}$ and the process that has oracle access to the numeraire portfolios (see \cref{sec:numeraire-portfolios} for more details on these objects). The idea behind the proof resembles those found in \cite{waudby2025universal} and \cite{sandoval2026multi}, but departs from these works as the number of data points used to construct the test supermartingale $W_{N_a(t)}$ is random under the present setting. Importantly, \cref{thm:log-optimality-dynamic-k-armed-experiments} establishes that no matter what the treatment schedule is, the way in which we are constructing the test supermartingales (i.e., selecting the portfolios) ensures that there is no other test supermartingale within the comparator class $\cW(a)$ that could achieve a limiting growth-rate larger than ours. This includes a test supermartingale that has oracle access to and uses the numeraire portfolios for each treatment arm in each sub-experiment, as can be seen in \cref{fig:headline}.

\section{Semi-Synthetic Experiments}\label{sec:experiments}
To test the efficacy of the proposed methods in realistic settings we constructed a set of semi-synthetic experiments using the Upworthy Research Archive~\citep{matias2021upworthy}, which is a corpus of randomized headline experiments that were run on an online platform between 2013 and 2015. Treatments consist of headline image combinations that were shown to users with at least 1,000 impressions with click through rate being the outcome of interest. On average each month contains 634 treatments per month.
We construct a semi-synthetic Always-On experiment by assembling all tests that occurred within a given calendar month (18 total months), and using the observed timestamp for each treatment as the starting time within the overall Always-On experiment. See \cref{sec:semi-synthetic-experient-information} for details.

 \begin{figure}[!htbp]
  \centering
  \includegraphics[width=\textwidth]{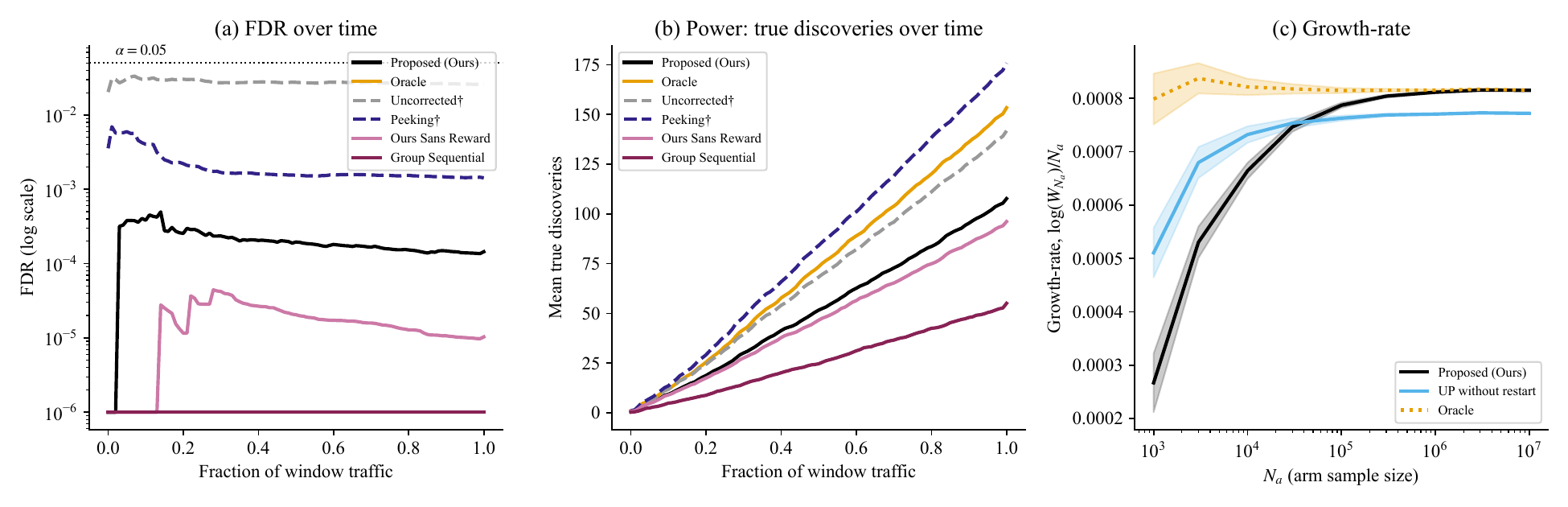}
  \caption{(a) FDR over window traffic, with $10$ concurrent treatments. (b) Mean true discoveries over window traffic, with
  $10$ concurrent treatments. (c) Empirical growth-rates for our proposed process, one that selects the portfolios via the Universal Portfolio algorithm but without restarting the algorithm in each sub-experiment, and a process that has oracle access to the numeraire portfolios.}
  \label{fig:headline}
  \end{figure}
We compare the proposed method to an oracle that has access to the true treatment effects, to an ablation of our method that ignores the discovery term from the confidence schedule (which we denote as ``Ours Sans Reward''), to a version of our method without false discovery rate correction (``uncorrected''); and to early peeking in which results are monitored without correction (``peeking''). 
\cref{fig:upworthyTable,fig:headline} show the results. 
Consistent with our theoretical results, the proposed method controls the false discovery rate (FDR) below the nominal $\alpha=0.05$ level across the traffic levels. By contrast, the uncorrected variants have an order-of-magnitude worse FDR control. 
We also see that our method discovers roughly 1,900 true effects per replication across the 18 months (see also \cref{fig:upworthyTable}), 12\% more than the ablated version without reward and within 30\% of the oracle. 
Finally, the group sequential baseline has a lower false discovery rate at the cost of a substantially more conservative discovery rate.

\section{Conclusion}
In this work we introduced a statistical framework to run randomized experiments in a way that allows treatments to be dynamically added and removed. The goal in our framework is to test the effectiveness of treatments while simultaneously controlling the online false discovery rate. As such, we based our framework on the sequential hypothesis testing-by-betting literature, and showed how to construct test supermartingales that test whether the treatment effect is greater than some threshold. We considered two settings: one in which each unit is observed at most once within the same experiment and another in which the experiment is run over the same set of units. We provided strategies to select the portfolios under these two settings that result in growth-rate optimal test supermartingales, no matter the treatment schedule. Furthermore, we provided and analyzed an algorithm that controls the online false discovery rate in the Always-On setting.
Finally, experiments conducted using semi-synthetic data based on the Upworthy repository of online experiments provide additional evidence that the proposed methodology closely follows theory in empirical settings.

We envision that our Always-On framework will be useful in settings that encompass drug discovery, experimentation in tech companies, and testing for (emerging) capabilities of LLMs, amongst many others. To further improve the power of our framework, we outline some interesting directions for future work. One direction for future work is how to adaptively select the propensity scores to maximize the number of discoveries made in the experiment. A second direction for future work is designing the treatment schedule; that is, coming up with strategies to remove treatments from the experiment while controlling for the type-II error.

\subsubsection*{Acknowledgments}
We would like to thank Keegan Harris, Emma Pierson, Lei Shi, Ritwik Sinha, Vasilis Syrgkanis, Ryan Tibshirani, Nikos Vlassis, Serena Wang, and Ian Waudby-Smith for insightful conversations.
MJ acknowledges funding from the European Union (ERC-2022-SYG-OCEAN-101071601).
Views and opinions expressed are however those of the author(s) only and do not
necessarily reflect those of the European Union or the European Research Council
Executive Agency. Neither the European Union nor the granting authority can be
held responsible for them.

\newpage
\bibliographystyle{abbrvnat}
\bibliography{refs}

\newpage
\appendix
\section{Further Information on Always-On Experimentation}
\label{sec:further-info-testing-by-betting}
In this section we provide further details about our Always-On experimentation framework. Recall that one of our goals in proposing this framework is to give practitioners the flexibility to add and remove treatments from a running experiment. We formalize the data collection protocol of the Always-On experimentation framework in \cref{algorithm:data-collection}.

\begin{algorithm}[!htbp]
    \caption{Data collection protocol}
    \label{algorithm:data-collection}

    Nature samples the potential outcomes $\Paren{Y_i(\bz)}_{\bz \in \cZ} \sim \P$ for units $i \in \cI$.
    
    \For{each sub-experiment $s = 1, \dots, t$}{
        \begin{enumerate}
            \item Determine the set of active treatments $\cA_s$.
            \item Choose the propensity score $\propscore_s
                \coloneqq \Paren{\pi_s(a)}_{a \in \cA_s \cup \Set{\control}}$.
        \end{enumerate}
         
        \For{each unit $i \in \cI_s$}{
          \begin{enumerate}
              \setcounter{enumi}{2}
              \item Randomly draw the treatment assignment $A_{i,s} \sim 
                  \RCT_s \equiv \Categorical_{\cA_s \cup \Set{\control}}
                  (\propscore_s)$.
              \item Observe the outcome $Y_{i,s}$.

          \end{enumerate}
        }
    }
\end{algorithm}

As we described in \cref{sec:testing-always-on}, the data collected according to \cref{algorithm:data-collection} is used to construct test supermartingales for each of the active treatments. An important concept---which we expand on in \cref{sec:testing-by-betting}---in testing by betting is that of portfolio regret, as it provides a sufficient condition for achieving asymptotic almost sure log-optimality. Nonetheless, in Always-On experimentation we must consider a modified definition of portfolio regret. More specifically, we must consider a definition of portfolio regret for a particular sub-experiment $s \in [t]$ with respect to a particular arm $a \in \cA_s$.
\begin{definition}[Per-sub-experiment portfolio regret]
    Fix a sub-experiment $s \in [t]$, a treatment $a \in \cA_s$, and let $\Paren{A_{i,s}, Y_{i,s}}_{i \in \cI_s}$ be the data collected in sub-experiment $s$. Let $W_{\chN_a(s)}$ form a process of the form \eqref{eq:test-supermartingale-general-form}. We define the \emph{per-sub-experiment portfolio regret} for treatment $a$ in sub-experiment $s$ to be
    \begin{equation}
        \cR_{\chN_a(s)} \coloneqq \max_{\bet \in [0,1]} \sum_{i \in \cI_s} \xi_{i,s}(a) \logp{1 - \bet + \bet g_s(\hate_{i,s}(a))/g_s(\delta)} - \logp{W_{\chN_a(s)}}\mcom
    \end{equation}
    where we recall that $\xi_{i,s}(a) \coloneqq \Ind{A_{i,s} = a~\tor~A_{i,s}=\control}$.
\end{definition}

The reason we must now consider a per-sub-experiment and per-arm definition of portfolio regret is because each sub-experiment has a particular joint distribution over the potential outcomes and treatment assignments. These different distributions, in turn, induce their own numeraire portfolios (see \cref{sec:numeraire-portfolios}), which means that for each sub-experiment we must construct processes that are ``competitive'' against those that have oracle access to these numeraire portfolios. \citet{waudby2025universal} showed that to construct processes that are competitive against an oracle that has access to the numeraire portfolios is to use algorithms that select the portfolios in such a way that the test supermartingales they induce enjoy \emph{sub-linear} portfolio regret; hence, explaining our portfolio selection strategy from \cref{sec:k-armed-experiments}. In the following subsection we present experiments whose aim is to provide some empirical comparison between Always-On experimentation and platform trials.

\subsection{Simulations to compare Always-On experimentation versus platform trials}
\label{sec:sim-set-up}

Conceptually, Always-On experimentation is a way to run platform trials while allowing for continuous monitoring and arbitrary treatment schedules. In this section we explore what is the price we have to pay in Always-On experimentation relative to platform trials.

To understand the price to pay, we run various simulations comparing our Always-On experimentation framework to an adapted version of the platform trials design from \cite{zehetmayer2022online}. More specifically, we use the gs-LOND algorithm proposed by \citet{zehetmayer2022online} in conjunction with the O’Brien–Fleming boundary to allow for interim analyses in a multiple testing setting. Furthermore, we use the Farrington-Manning test~\citep{farrington1990test} as our statistic to assess the effect of each treatment in the platform trial, as we consider Bernoulli outcomes within our simulations.

In the simulations we compare the difference in true discoveries made by our Always-On experimentation framework against the true discoveries made by the specification of the platform trial described in the previous paragraph. In each of the simulations we introduce a total of 60 treatments to the experiment, where 30 of them are under the null while the rest are under the alternative. In each of the simulations, we endow all the treatments under the alternative with the same effect (and similarly for those under the null) and vary the effect of the treatments under the alternative across simulations.

In the simulations we vary the maximum number of units allowed to be enrolled into a treatment (i.e., the cap on the number of units that can be assigned to each treatment) and consider different ``peeking schedules'' (i.e., number of times at which the test statistic can be computed) for always-on experimentation. Treatments are removed as soon as they are declared a discovery or whenever they hit the cap on the allowed number of units. Below we present the results of simulations for different treatment effects when testing that the effect is greater than the threshold $\delta = 0.05$.

\begin{figure}[htbp]
\begin{subcaptionblock}{0.49\textwidth}
         \centering 
         \includegraphics[width=\linewidth]{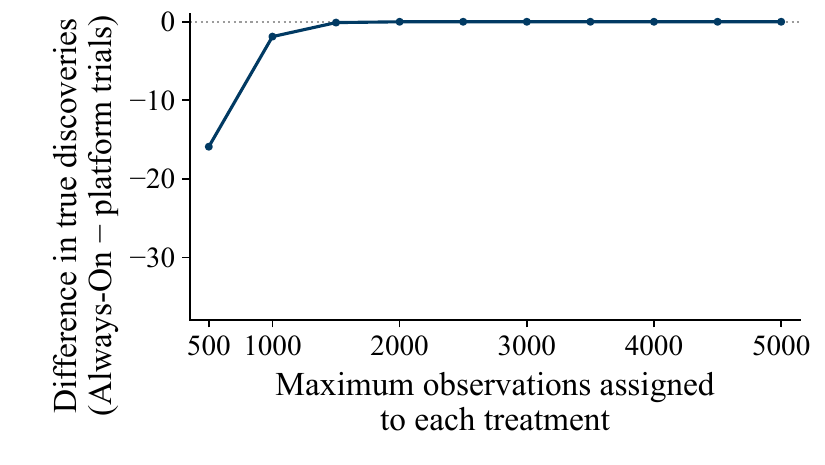}
     \end{subcaptionblock}
     \hfill
     \begin{subcaptionblock}{0.49\textwidth}
         \centering 
         \includegraphics[width=\linewidth]{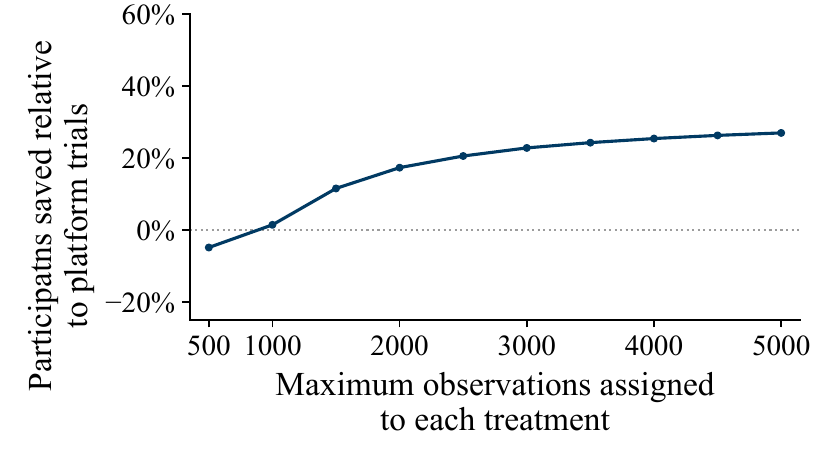}
     \end{subcaptionblock}
     \caption{The true treatment effect of all $30$ treatments under the alternative is equal to $0.25$. The interim analysis in the platform trials only take place when $50\%$ of the maximum number of units allowed for each treatment have been observed.}
\end{figure}

\begin{figure}[htbp]
\begin{subcaptionblock}{0.49\textwidth}
         \centering 
         \includegraphics[width=\linewidth]{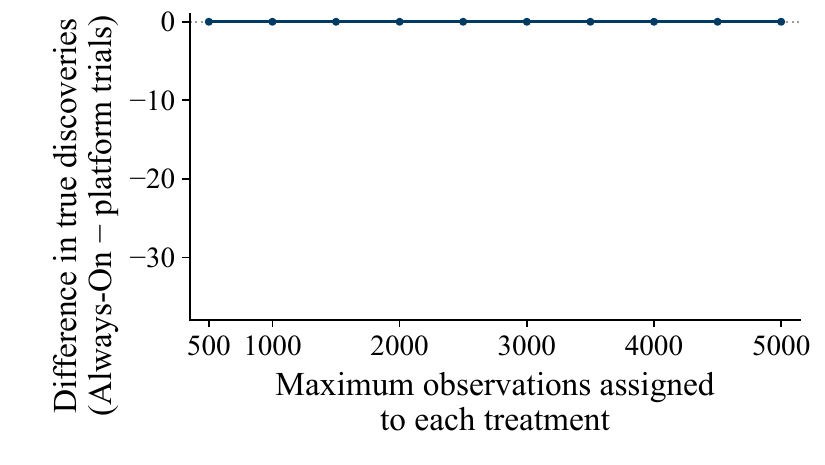}
     \end{subcaptionblock}
     \hfill
     \begin{subcaptionblock}{0.49\textwidth}
         \centering 
         \includegraphics[width=\linewidth]{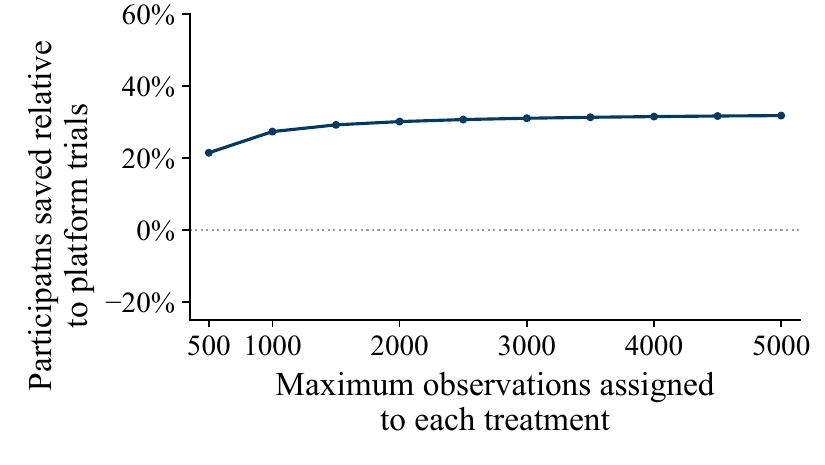}
     \end{subcaptionblock}
     \caption{The true treatment effect of all $30$ treatments under the alternative is equal to $0.55$. The interim analysis in the platform trials only take place when $50\%$ of the maximum number of units allowed for each treatment have been observed.}
\end{figure}

We are also interested in understanding the effect of being able to perform more interim analyses in platform trials has on the difference in true discoveries and the units needed to make such discoveries. \cref{fig:peeking-schedule-sim} depicts the effect that being able to perform more interim analyses in platform trials has on the difference in true discoveries with Always-On experimentation and the relative number of participants needed to make such discoveries.

\begin{figure}[htbp]
\begin{subcaptionblock}{0.49\textwidth}
         \centering 
         \includegraphics[width=\linewidth]{figures/increasing_cap/treatment_effect_035/alternatives_50/looks_50/true_discovery_difference.pdf}
     \end{subcaptionblock}
     \hfill
     \begin{subcaptionblock}{0.49\textwidth}
         \centering 
         \includegraphics[width=\linewidth]{figures/increasing_cap/treatment_effect_035/alternatives_50/looks_50/participant_use.pdf}
     \end{subcaptionblock}
    \begin{subcaptionblock}{0.49\textwidth}
         \centering 
         \includegraphics[width=\linewidth]{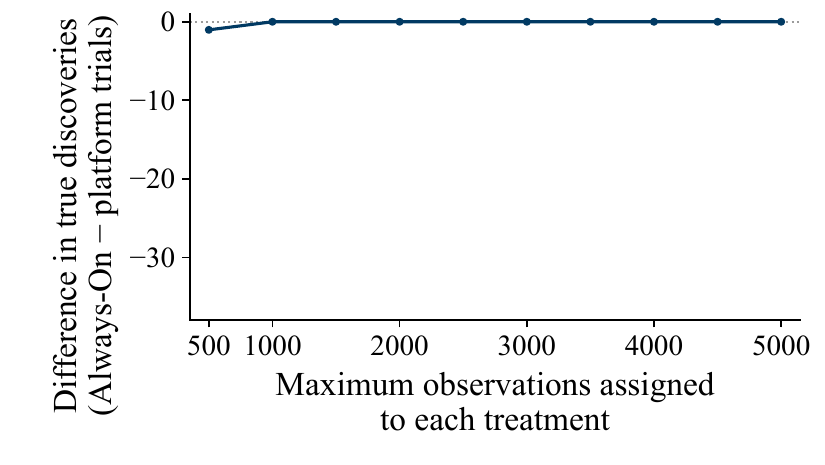}
     \end{subcaptionblock}
     \hfill
     \begin{subcaptionblock}{0.49\textwidth}
         \centering 
         \includegraphics[width=\linewidth]{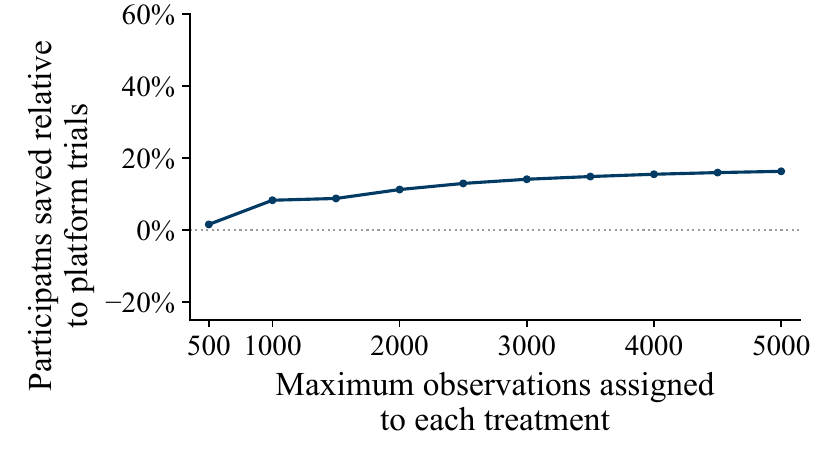}
     \end{subcaptionblock}
     \caption{The true treatment effect of all $30$ treatments under the alternative is equal to $0.35$. In the top row we only perform one interim analysis when $50\%$ of the maximum number of units allowed for each treatment have been observed. In the bottom row we perform interim analyses when $25\%$, $50\%$, and $75\%$ of the maximum number of units allowed for each treatment have been observed.}
\label{fig:peeking-schedule-sim}
\end{figure}

\subsection{Further information about the semi-synthetic experiment.}
\label{sec:semi-synthetic-experient-information}

We fix a maximum number of concurrent treatments at ten, and create a queue in which a treatment is allowed to enter only after another treatment has been removed.  
Within each month, the control mean is simulated by taking the median click-through-rate (CTR) of all of the controls tested throughout the month, i.e., $\mathrm{CTR}_0 = \mathrm{median}({\mathrm{CTR}_{0,j}})$, where $j$ indexes the individual tests run that month.
The outcomes are then simulated using the historical CTRs for each of the treatments, using the relative lift,
$\rho_a = \mathrm{CTR}_a / \mathrm{CTR}_{0,j}$,
  where $a$ indexes the treatment and $j$ is the corresponding test for treatment $a$. We then use the same proportion with respect to the constructed always-on control, $\mathrm{CTR}_a^{\mathrm{sim}} = \rho_a
  \cdot \mathrm{CTR}_0$.
  A treatment is removed from an experiment either when it has been declared statistically significant with respect to the control or once it exceeds the total observation budget, i.e., the number of impressions it
  received in the dataset, $N_a \geq N_a^{\max}$.
  Every month was run with 200 replications each.

\begin{figure}[tbp]
  \centering
  \includegraphics[width=0.5\textwidth]{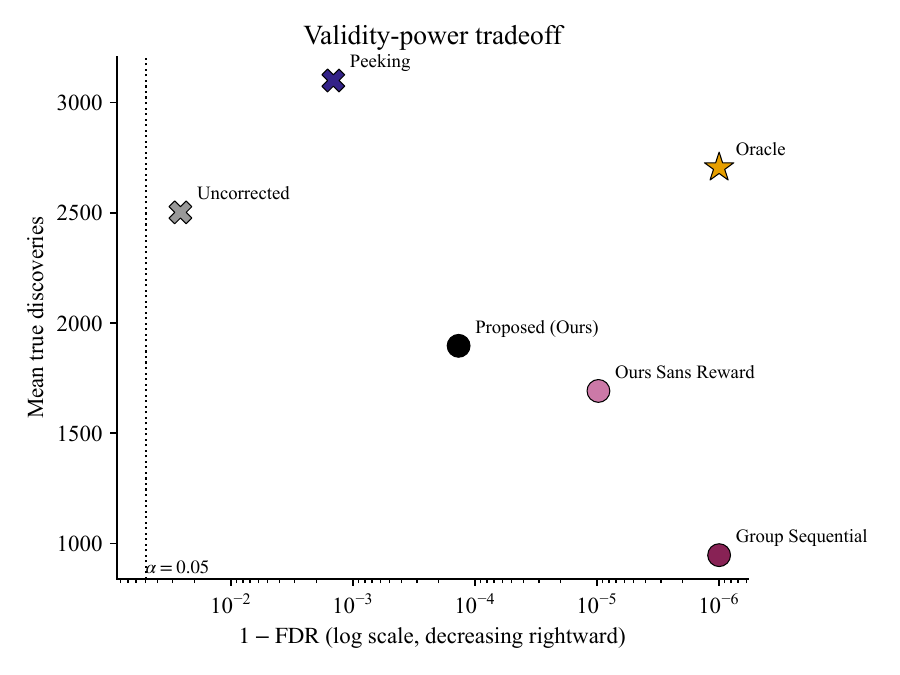}
  \caption{Validity and power at the primary traffic level ($10$ concurrent treatments): mean
  false discovery rate and mean true discoveries per replication.}
  \label{fig:upworthyTable}
\end{figure}

\section{Always-On Experimentation for Testing Treatment Histories}
\label{sec:treatment-histories}

In this section we focus on a setting in which it is of interest to test whether there exists a treatment history for which a treatment induces a positive treatment effect. For this reason, the experiment is run over the same set of units---meaning that units receive multiple treatments throughout their experiment. This, in turn, implies that the observed outcomes are now functions of the history of treatments (or treatment path) a unit has received. We define the \emph{treatment history} of unit $i$ starting in sub-experiment $r \in [t]$ and going up to sub-experiment $s \in [t]$, such that $r \leq s$, as $\bZ_{i, r:s} \coloneqq (A_{i,r}, \dots, A_{i,s})$. We denote the set of all observable treatment histories starting in sub-experiment $r$ and going up to sub-experiment $s$ as
\begin{equation}
    \cZ_{r:s} \coloneqq \Set{\Paren{a_r, \dots, a_s} \in \Paren{\cA \cup \Set{\control}}^{s-r+1} \smvert \forall j \in [r:s]\mcom~ a_j \in \cA_j \cup \Set{\control}} \mper
\end{equation}

Since units can receive more than one treatment in the experiment, we must now worry about carryover effects, and denote as $\ell \in \N_{0}$ the duration (in terms of sub-experiments) of the carryover effects---see \citep{senn2002crossover,bojinov2021panel} for examples of other settings in which carryover effects are of concern. Having carryover effects that last for $\ell$ sub-experiments implies that the \emph{effective} treatment history of unit $i \in \cI$ in sub-experiment $s \in [t]$ is given by $\bZ_{i, s} \equiv \bZ_{i, s-\ell:s} \coloneqq (A_{i, s-\ell}, \dots, A_{i,s})$, and the effective set of observable treatment histories is $\cZ_{s-\ell:s}$---where we remark that if $\ell \geq s$, then we let $s-\ell  = 1$. 

We assume each unit $i \in \cI$ has potential outcomes $Y_i(\bz)$ for each $\bz \in \bigcup_{s=1}^t \cZ_{s-\ell:s}$. Given that the carryover effects last for $\ell$ sub-experiments, the observed outcome for unit $i$ in sub-experiment $s$ now takes the form $Y_{i,s} \coloneqq Y_i(\bZ_{i,s}) = \sum_{\bz \in \cZ_{s-\ell, s}} \Ind{\bZ_{i,s} = \bz} Y_i(\bz)$. In the present setting we assume that the potential outcomes can take only a finite number of outcome levels.

\begin{assumption}[Finite number of outcome levels]
    \label{ass:finite-outcomes}
    We assume all potential outcomes take a finite number of outcome levels, and that they take their values from the set $\Set{0, \dots, m}$ for
    some $m \in \N$ such that $m < \infty$.
\end{assumption}
For the rest of the paper, we define the shorthand $\bY_{i,r:s} \coloneqq \Paren{Y_{i, r}, \dots, Y_{i, s}}$ to denote the outcomes that have been observed for unit $i$ starting from sub-experiment $r$ and going up to sub-experiment $s$. We define the history of observed outcomes and treatments for unit $i \in \cI$ as $\bH_{i, r:s} \coloneqq \Paren{\bY_{i,r:s}, \bZ_{i,r:s}}$, where for the first sub-experiment and instances in which $\ell = 0$ these histories are empty. Furthermore, we denote the set of \emph{observed} histories starting in sub-experiment $r \in [t]$ and going up to sub-experiment $s \in [t]$ as $\cH_{r:s} \coloneqq \Set{\bH_{i, r:s} \mcom \forall i \in [n]}$, and we use the shorthand $\cH_s \equiv \cH_{s-\ell+1:s}$. 

In addition to assuming that the outcomes are functions of only the past $\ell$ treatments, we impose the following Markovian assumption on the distribution from which the outcomes are sampled.

\begin{assumption}[Markov distribution over outcomes]
    \label{ass:markov-outcomes}
    Fix a distribution $\P \in \cP \cup \cQ$, let $\ell \in \N_0$ be the duration of the carryover effects, and define the realized history for unit $i$ from sub-experiment $r$ up to sub-experiment $s-1$ as $\bh_{i, r:s-1} \coloneqq \Paren{\by_{i, r:s-1}, \bz_{i, r:s-1}}$.
    We say that the distribution over the outcomes is \emph{$\ell$-Markov} if for each sub-experiment $s \in [t]$, treatment $a \in \cA_s$, and outcome value $y \in \Set{0, \dots, m}$ it holds that
    \begin{equation}
        \PP_{\P}\Paren{Y_i(\bZ_{i, s-\ell:s-1}, a) = y \smvert \bH_{i, 1:s-1} = \bh_{i, 1:s-1}} = \PP_{\P}\Paren{Y_i(\bZ_{i, s-\ell:s-1}, a) = y \smvert \bH_{i, s-\ell:s-1} = \bh_{i, s-\ell:s-1}} \mper
    \end{equation}
\end{assumption}

A consequence of \cref{ass:markov-outcomes} is that the expected outcomes are also $\ell$-Markov---i.e., only depend on the treatments and observed outcomes from the past $\ell$ sub-experiments. We leverage this assumption and define our target estimand with respect to the histories from the set $\cH_{s-1}$. More formally, in sub-experiment $s \in [t]$, we define the \emph{per-sub-experiment treatment effect} for each $a \in \cA_s$ for those units whose history $\bH_{i, s-1} \equiv \bH_{i, s-\ell:s-1}$ is equal to $\bh \in \cH_{s-1}$ as
\begin{equation}
\label{eq:blip}
    \ate_{\P}(a, \bh) \coloneqq \E_{\P}\Brac{Y_1(\bZ_{1, s-\ell:s-1}, a) -
    Y_1(\bZ_{1,s-\ell:s-1}, \control) \smvert \bH_{1, s-1} = \bh} \mper
\end{equation}
Our estimand from \eqref{eq:blip} captures the effect of a treatment $a \in \cA_s$ when compared against the control and when we condition on the history of realized outcomes and treatments that were assigned throughout the past $\ell$ sub-experiments. For the first sub-experiment we can see how the per sub-experiment treatment effect simplifies to $\ate_{\P}(a) \coloneqq \E_{\P}\Brac{Y_1(a) - Y_1(\control)}$, which is the standard average treatment effect for treatment $a \in \cA_1$. We note how the estimand in \eqref{eq:blip} resembles the blip function~\citep{robins1994correcting,robins2004optimal}; albeit, it only contrasts the outcome under treatment $a$ against the control in sub-experiment $s$ and not two different treatment regimes. Moreover, our estimand is analogous to the proximal effect from micro-randomized trials \citep{liao2016sample,boruvka2018assessing}, but differs from the proximal effect because these estimands typically condition on the availability of units and a summary of the pre-treatment history.

In the present Always-On setting, we slightly modify the hypotheses tests from \eqref{eq:hypotheses-general-form} and test for each treatment $a \in \cA$ the following null versus alternative hypotheses
\begin{equation}
\label{eq:always-on-hypothesis-test}
\begin{aligned}
    \cP(a) \coloneqq \{\P \svert 
        \forall \bh \in \barcH(a) \mcom~\ate_{\P}(a,\bh) \leq \delta\}
    \quad\text{versus}\quad
    \cQ(a) \coloneqq \{\P \svert 
        \exists \bh \in \barcH(a) \mcom~\ate_{\P}(a,\bh) > \delta\} \mcom
\end{aligned}
\end{equation}
where $\barcH(a) \coloneqq \bigcup_{s \in \cT_t(a)} \cH_{s-1}$ with $\cH_0 = \Set{\varnothing}$. We remark that $\barcH(a)$ is partially observed throughout the experiment and that it is only fully observed until the last sub-experiment in which treatment $a$ was active. This is due to the fact that $\barcH(a)$ consists of all possible histories that were observed in the sub-experiments in which treatment $a$ was active, and we can only determine all these histories until after the experiment as a whole comes to an end.

The hypotheses from \eqref{eq:always-on-hypothesis-test} depend on the per-sub-experiment treatment effect, a quantity that we must estimate. For each sub-experiment $s \in [t]$ we use the Horvitz-Thompson estimator to estimate the per-sub-experiment treatment effect for treatment $a \in \cA_s$ for those units whose treatment histories were $\bh \in \cH_{s-1}$ and this takes the form
\begin{equation}
\label{eq:ht-estimator-blip}
    \hate_{i,s}(a, \bh) \coloneqq Y_{i,s} \Paren{\frac{\Ind{A_{i,s} = a}}{\hpi_s(a)} - \frac{\Ind{A_{i,s} = \control}}{\hpi_s(\control)}} \mcom
\end{equation}
where the modified propensity scores for testing the average treatment effect of treatment $a$ are defined as in \eqref{eq:modified-testing-propensity-scores}.
We remark that the observed outcome $Y_{i,s}$ is a function of both the treatment history $\bZ_{i,s-1}$ and the treatment assignment in sub-experiment $s$; that is $Y_{i,s} \equiv Y_{i}\Paren{\bZ_{i, s-\ell:s-1}, A_{i,s}}$. Furthermore, $Y_{i,s}$ can be dependent on the observed outcomes $\bY_{i,s-1} = \by_{s-1}$ encoded in the history $\bh$. Under the present construction, the estimator from \eqref{eq:ht-estimator-blip} is a conditionally unbiased estimator for the per-sub-experiment treatment effect---where the conditioning is on the filtration $\cF_{i-1, s}$ and inclusion indicator $\xi_{i,s}(a, \bh) \coloneqq \Ind{\Paren{A_{i,s} = a~\tor~A_{i,s}=\control}~\tand~\bH_{i, s-1} = \bh}$ being one.

\begin{lemma}
\label{lem:unbiased-blip-ht-estimator}
Fix a distribution $\P \in \cP \cup \cQ$, a sub-experiment $s \in [t]$, and an arm $a \in \cA_s$. Under \cref{assumption:dgp-assumption,ass:treatment-assignments,assumption:treatment-schedule,ass:finite-outcomes,ass:markov-outcomes}, the Horvitz-Thompson estimator from \eqref{eq:ht-estimator-blip} for unit $i \in \cI_s$ belonging to strata $\bh \in \cH_{s-1}$ is a conditionally unbiased estimator, meaning that $\E_{\P_{\RCT}} \Brac{\hate_{i,s}(a, \bh) \smvert \cF_{i-1, s}, \xi_{i,s}(a, \bh)=1} = \ate_\P(a, \bh)$.
\end{lemma}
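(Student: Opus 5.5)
We will compute the conditional expectation directly, by splitting on which of the two arms $a$ or $\control$ the unit received. Conditioning on $\xi_{i,s}(a,\bh)=1$ fixes two things: the history $\bH_{i,s-1}=\bh$, and the fact that $A_{i,s}\in\Set{a,\control}$. On this event exactly one of the indicators in \eqref{eq:ht-estimator-blip} is nonzero, so
\begin{equation}
    \E_{\PRCT}\Brac{\hate_{i,s}(a,\bh) \smvert \cF_{i-1,s}, \xi_{i,s}(a,\bh)=1}
    = \sum_{b\in\Set{a,\control}} \frac{\pm 1}{\hpi_s(b)}\, \PP\Paren{A_{i,s}=b \smvert \cdot}\, \E\Brac{Y_i(\bZ_{i,s-\ell:s-1}, b) \smvert A_{i,s}=b, \cdot},
\end{equation}
where the sign is $+$ for $b=a$ and $-$ for $b=\control$, and "$\cdot$" abbreviates the conditioning event. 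On $\Set{A_{i,s}=b}$ we have replaced $Y_{i,s}$ by the potential outcome $Y_i(\bZ_{i,s-\ell:s-1}, b)$.

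\textbf{Step 1 (conditional assignment probabilities).} We show $\PP(A_{i,s}=b \mid \cdot)=\hpi_s(b)$. The draw $A_{i,s}\sim\RCT_s$ is i.i.d.\ given $\cF_{s-1}$, with $\cF_{s-1}$-measurable $\propscore_s$ by \cref{ass:treatment-assignments}. It is therefore independent of the other units in $\cI_s(i-1)$ and of unit $i$'s own past history $\bH_{i,s-1}$. Conditioning a categorical draw on $\Set{A_{i,s}\in\Set{a,\control}}$ gives $\pi_s(b)/(\pi_s(a)+\pi_s(\control))=\hpi_s(b)$ by \eqref{eq:modified-testing-propensity-scores}. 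The inverse weights then cancel, leaving
\begin{equation}
    \E\Brac{Y_i(\bZ_{i,s-\ell:s-1},a)\mid A_{i,s}=a,\cdot}-\E\Brac{Y_i(\bZ_{i,s-\ell:s-1},\control)\mid A_{i,s}=\control,\cdot}.
\end{equation}

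\textbf{Step 2 (removing the assignment from the conditioning).} By conditional ignorability in \cref{ass:treatment-assignments}, $A_{i,s}\independent (Y_i(\bz))_{\bz\in\cZ_{s:t}}$ given $\cF_{i-1,s}$. We apply this jointly with the event $\bH_{i,s-1}=\bh$; this is legitimate because $\bH_{i,s-1}$ is $\cF_{s-1}\subseteq\cF_{i-1,s}$-measurable, since unit $i$'s past outcomes and assignments were observed in earlier sub-experiments. We can then drop the conditioning on $A_{i,s}=b$ in each term.

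\textbf{Step 3 (reducing to the estimand).} The remaining task is to show
\begin{equation}
    \E\Brac{Y_i(\bZ_{i,s-\ell:s-1},b)\mid\cF_{i-1,s},\bH_{i,s-1}=\bh}=\E_\P\Brac{Y_1(\bZ_{1,s-\ell:s-1},b)\mid \bH_{1,s-1}=\bh}.
\end{equation}
The argument uses three facts:
\begin{enumerate}[(i)]
    \item By \cref{assumption:dgp-assumption}, units are i.i.d. The information in $\cF_{i-1,s}$ about other units, together with the schedule and propensities (which depend on the data only through $\cF_{s-1}$, by \cref{assumption:treatment-schedule}), enters unit $i$'s outcome law only through unit $i$'s own history $\bH_{i,1:s-1}$. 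Assignments are randomized independently of potential outcomes, so the conditioning reduces to unit $i$'s full history.
    \item By \cref{ass:markov-outcomes}, conditioning on $\bH_{i,1:s-1}$ reduces to conditioning on $\bH_{i,s-\ell:s-1}=\bh$. Under \cref{ass:finite-outcomes} this is a statement about finitely many outcome probabilities, which can be summed over $y\in\Set{0,\dots,m}$ to give conditional expectations without measure-theoretic difficulty, since conditioning is on positive-probability discrete events.
    \item Identical distribution lets us replace unit $i$ by unit $1$, matching \eqref{eq:blip}.
\end{enumerate}
Subtracting the two arms then yields $\ate_\P(a,\bh)$.

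The main obstacle is Step 3, part (i): justifying rigorously that conditioning on the other units' data (through $\cF_{i-1,s}$ and the adaptively chosen schedule) does not alter unit $i$'s conditional outcome law beyond its own history. The plan for this step is to write the joint law under $\PRCT$ as a product over units of potential-outcome laws, times assignment kernels that depend on the observed data only through past-measurable quantities. Factorizing this product shows that the posterior of $(Y_i(\bz))_\bz$ given everything observed depends only on unit $i$'s own observed history.
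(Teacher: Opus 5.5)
Your proposal is correct and follows essentially the same route as the paper. Both arguments handle the assignment first: you split on $A_{i,s}\in\{a,\control\}$ and drop the assignment by conditional ignorability, which is equivalent to the paper factorizing $\E[Y\,\Ind{A_{i,s}=a}\mid\cdot]$ into a product of expectations. The $\cF_{s-1}$-measurable propensities then give $\hpi_s(b)$ and cancel the inverse weights, and the conditional mean of the potential outcome reduces to $\ate_\P(a,\bh)$ via i.i.d.\ units and the Markov assumption. Your product-over-units factorization in Step 3 is a more careful justification than the paper gives for why conditioning on other units' data and the adaptive schedule leaves unit $i$'s outcome law dependent only on its own history; the paper simply asserts this.
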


The proof of \cref{lem:unbiased-blip-ht-estimator} can be found in \cref{proof:unbiased-blip-ht-estimator}. The Horvitz-Thompson estimator in the present setting takes its values from the interval $\hate_{i,s}(a) \in \Brac{-m/\hpi_s(\control), m/\hpi_s(a)}$, which means that we must again apply a transformation so that its values are almost surely nonnegative. In this setting, the transformation function for each sub-experiment $s \in \cT_t(a)$, $g_s : [-m/\hpi_s(\control), m/\hpi_s(a)] \to \R_{\geq 0}$, is given by
\begin{equation}
\label{eq:repeat-unis-map-to-positive-reals}
    g_s(x) \coloneqq m + \hpi_s(\control)x \mper
\end{equation}
Using the map from \eqref{eq:repeat-unis-map-to-positive-reals} alongside the estimator from \eqref{eq:ht-estimator-blip} in the test statistic given in \eqref{eq:test-supermartingale-general-form} results in a test $\cP(a)$-supermartingale.

\begin{proposition}
\label{prop:test-supermartingale-always-on}
    Fix an arm $a \in \cA$ and a distribution $\P \in \cP(a)$. Let $W_{N_a(t)}$ be the test statistic from \eqref{eq:test-supermartingale-general-form} constructed using the Horvitz-Thompson estimator from \eqref{eq:ht-estimator-blip} and the transform defined in \eqref{eq:repeat-unis-map-to-positive-reals}. Then, under \cref{assumption:dgp-assumption,ass:treatment-assignments,assumption:treatment-schedule,ass:finite-outcomes,ass:markov-outcomes}, $W_{N_a(t)}$ forms a test $\cP(a)$-supermartingale.
\end{proposition}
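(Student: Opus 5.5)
The proof will run exactly as for \cref{prop:test-supermartingale-warmup}: we reduce to the general criterion in \cref{lem:test-supermartingale-general-form}, now applied stratum by stratum. Fix $a$ and $\P \in \cP(a)$. The statistic is the product over $s \in \cT_t(a)$ and $i \in \cI_s$ of the factors $(1-\bet_{i,s}+\bet_{i,s} g_s(\hate_{i,s}(a,\bH_{i,s-1}))/g_s(\delta))^{\xi_{i,s}(a)}$. Each unit uses the estimator for its own observed stratum $\bh=\bH_{i,s-1}$, and $\xi_{i,s}(a)=\sum_{\bh}\xi_{i,s}(a,\bh)$.

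\emph{Step 1: nonnegativity.} By \cref{ass:finite-outcomes}, $\hate_{i,s}(a,\bh)\in[-m/\hpi_s(\control), m/\hpi_s(a)]$, so $g_s(\hate)\ge 0$. Also $g_s(\delta)=m+\hpi_s(\control)\delta>0$ provided $\delta>-m/\hpi_s(\control)$; we assume $\delta\in[-1,1]$ as before, or more generally $\delta$ in the range of the estimator. Since $\bet_{i,s}\in[0,1]$, each factor is a convex combination of $1$ and a nonnegative number, hence nonnegative. Moreover $g_s$ and $\hpi_s$ are $\cF_{s-1}$-measurable because $\propscore_s$ is.

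\emph{Step 2: the one-step conditional expectation.} Condition on $\cF_{i-1,s}$. On $\{\xi_{i,s}(a)=0\}$ the factor equals one. On $\{\xi_{i,s}(a,\bh)=1\}$, \cref{lem:unbiased-blip-ht-estimator} gives $\E[\hate_{i,s}(a,\bh)\mid \cF_{i-1,s},\xi_{i,s}(a,\bh)=1]=\ate_\P(a,\bh)$. Because $g_s$ is affine and $\cF_{s-1}$-measurable, it follows that $\E[g_s(\hate)\mid\cdot]=g_s(\ate_\P(a,\bh))$. The observed stratum satisfies $\bH_{i,s-1}\in\cH_{s-1}\subseteq\barcH(a)$, so the null gives $\ate_\P(a,\bh)\le\delta$. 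Since $g_s$ is nondecreasing ($\hpi_s(\control)>0$), $g_s(\ate_\P(a,\bh))/g_s(\delta)\le 1$. As $\bet_{i,s}$ is predictable, the conditional expectation of the factor is at most $1-\bet_{i,s}+\bet_{i,s}=1$. Averaging over strata, which are disjoint events of the conditioning, yields the supermartingale inequality for each unit.

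\emph{Step 3: chaining across units and sub-experiments.} We order the units across sub-experiments lexicographically in $(s,i)$. Sub-experiments with $s\notin\cT_t(a)$ contribute no factors. By \cref{assumption:treatment-schedule}, whether $a$ is active in $s$ is $\cF_{s-1}$-measurable, as are $\propscore_s$ and $g_s$; hence the product is adapted and the index set is predictable. The tower property then gives $\E[W]\le 1$ and the supermartingale property along this filtration.

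\emph{Main obstacle.} The delicate point is Step 2's use of \cref{lem:unbiased-blip-ht-estimator}. Units are reused across sub-experiments, so the conditioning on $\cF_{i-1,s}$ includes the unit's own past outcomes. We must check that conditioning on the whole past, rather than only on $\bH_{i,s-1}=\bh$, does not alter the conditional mean. This uses the $\ell$-Markov \cref{ass:markov-outcomes} together with the i.i.d.\ sampling of units in \cref{assumption:dgp-assumption}, which makes the other units' data irrelevant. It also uses conditional ignorability, which makes the assignment independent of the potential outcomes given the past. Together these let $\P(A_{i,s}=a\mid\xi=1,\cdot)=\hpi_s(a)$ cancel the inverse weights. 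Since that lemma is assumed, the remaining care is only in stating that the null bound must hold on every realized stratum. This holds because $\barcH(a)$ contains all observed histories in sub-experiments where $a$ is active.
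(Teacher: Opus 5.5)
Your proposal is correct and follows the paper's proof. The paper likewise runs the argument of \cref{lem:test-supermartingale-general-form} with the stratum-specific inclusion event $\{\xi_{i,s}(a,\bh)=1\}$, plugs in the conditional unbiasedness from \cref{lem:unbiased-blip-ht-estimator}, and uses that each $g_s$ is a nondecreasing affine map; your Steps 1--3 just make explicit what its one-line proof leaves implicit (nonnegativity, $g_s(\delta)>0$, the realized stratum lying in $\barcH(a)$, and predictability of the schedule).
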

\begin{proof}
    \cref{prop:test-supermartingale-always-on} follows immediately from \cref{lem:unbiased-blip-ht-estimator}, \cref{lem:test-supermartingale-general-form} when the inclusion event conditioned on is $\Set{\xi_{i,s}(a, \bh) = 1}$, where $\xi_{i,s}(a, \bh) \coloneqq \Ind{\Paren{A_{i,s} = a~\tor~A_{i,s}=\control}~\tand~\bH_{i,s-1} = \bh}$,  and the fact that $g_s(\cdot)$ is a non-decreasing linear map for each $s \in [t]$.
\end{proof}

As in the main text, one of our goals is to construct powerful test statistics. Nonetheless, the comparator class we consider in this section is one whose test supermartingales are constructed using the estimator from \eqref{eq:ht-estimator-blip} and transform from \eqref{eq:repeat-unis-map-to-positive-reals}. For each treatment $a \in \cA$, we denote it as
\begin{equation}
\label{eq:comparator-class-histories}
    \barcW(a) \coloneqq \{W:W~\text{is a test $\cP(a)$-supermartingale of the form \eqref{eq:test-supermartingale-general-form}}\} \mper
\end{equation}

In the present setting, we can no longer follow the same strategy to select the portfolios as that from the main text. This is because in the present setting we get to run the experiment over the same set of units; thus, we get to repeatedly observe the same unit and must take into account the temporal dependency this induces---since the potential outcomes for each unit can be dependent with each other. \cref{ass:finite-outcomes,ass:markov-outcomes} alongside our carryover effect assumption limit the extent of the temporal dependency. More specifically, the temporal dependency in sub-experiment $s$ is encoded in the history $\bh \in \cH_{s-1}$. As a consequence, each history $\bh \in \cH_{s-1}$ and treatment $a \in \cA_s$ pair induces (possibly) different numeraire portfolios. For this reason, in the present setting we must control a version of per-sub-experiment portfolio regret that takes into account each history $\bh \in \cH_{s-1}$.

\begin{definition}[Per-sub-experiment and history portfolio regret]
    Fix a sub-experiment $s \in [t]$, a treatment $a \in \cA_s$, and let $(A_{i,s}, Y_{i,s})_{i \in \cI_s}$ be the data collected in sub-experiment $s$. Let $W_{\chN_{a,\bh}}$ form a process of the form \eqref{eq:test-supermartingale-general-form} but just over those units whose history is equal to $\bh \in \cH_{s-1}$. We define the \emph{per-sub-experiment and history portfolio regret} for treatment $a$ and history $\bh \in \cH_{s-1}$ in sub-experiment $s$ to be
    \begin{equation}
        \cR_{\chN_{a,\bh}(s)} \coloneqq \max_{\bet \in [0,1]} \sum_{i \in \cI_s} \xi_{i,s}(a, \bh) \logp{1 - \bet + \bet g_s(\hate_{i,s}(a, \bh)) / g_s(\delta)} - \logp{W_{\chN_{a,\bh}(s)}} \mcom
    \end{equation}
    where we have defined $\xi_{i,s}(a, \bh) \coloneqq \Ind{\Paren{A_{i,s} = a ~\tor~A_{i,s}=\control}~\tand~\bH_{i,s-1} = \bh}$.
\end{definition}

Hence, we must now select the portfolios in such a way that controls the per-sub-experiment and history portfolio regret. To do so, we now use independent copies of the universal portfolio algorithm for each history in each sub-experiment. Hence, the portfolio for arm $a \in \cA_s$ in sub-experiment $s \in [t]$ for unit $i \in \cI_s$ whose history is $\bh \in \cH_{s-1}$ is given by

\begin{align}
\label{eq:portfolio-selection-strategy-histories}
    &\bet_{i,s}(a,\bh) \equiv \UP\Paren{W_{\chN_{a,\bh}(s;i-1)}^\brackbet} \\ 
    &\quad\text{where}\quad
    W_{\chN_{a,\bh}(s;i-1)}^{(\bet)} \coloneqq \prod_{j \in \cI_s(i-1)} \Brac{(1 - \bet) + \bet g_s(\hate_{j,s}(a,\bh)) / g_s(\delta)}^{\xi_{j,s}(a,\bh)} \mper
\end{align}
This strategy results in test supermartingales that are asymptotically almost sure log-optimal.

\begin{theorem}
\label{thm:log-optimality-always-on-same-units}
    Fix the treatment schedule for the experiment and an arm $a \in \cA$. Let $W_{N_a(t)}$ be a test $\cP(a)$-supermartingale of the form \eqref{eq:test-supermartingale-general-form} constructed using the transformed Horvitz-Thompson estimators and thresholds given in \eqref{eq:ht-estimator-blip} and \eqref{eq:repeat-unis-map-to-positive-reals}, respectively, and whose portfolios are selected using the strategy from \eqref{eq:portfolio-selection-strategy-histories}. 
    Suppose that \cref{assumption:dgp-assumption,ass:treatment-assignments,assumption:treatment-schedule,ass:finite-outcomes,ass:markov-outcomes} hold. Then, for any $\Q \in \cQ(a)$, $W_{N_a(t)}$ is $\QRCT$-almost sure log-optimal---in the sense of \cref{def:log-optimality-dynamic-experiments}---within the class $\barcW(a)$.
\end{theorem}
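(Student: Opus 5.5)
We prove \cref{thm:log-optimality-always-on-same-units} by reducing it, stratum by stratum, to the per-sub-experiment argument behind \cref{thm:log-optimality-dynamic-k-armed-experiments}. Fix the schedule, the arm $a$, $\Q \in \cQ(a)$, and a comparator $\tW_{N_a(t)} \in \barcW(a)$ with its own predictable portfolios. \cref{ass:finite-outcomes} and the finiteness of each $\cA_s \cup \Set{\control}$ imply that each $\cH_{s-1}$ is a finite set, and so is $\barcH(a)$. We can therefore split the logarithm of our process as
\begin{equation}
    \logp{W_{N_a(t)}} = \sum_{s \in \cT_t(a)} \sum_{\bh \in \cH_{s-1}} \logp{W_{\chN_{a,\bh}(s)}} \mcom
\end{equation}
and split the comparator the same way. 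The plan is to show that, for each of the finitely many pairs $(s,\bh)$,
\begin{equation}
    \frac{1}{N_a(t)}\Paren{\logp{W_{\chN_{a,\bh}(s)}} - \logp{\tW_{\chN_{a,\bh}(s)}}}
\end{equation}
has nonnegative $\liminf$ almost surely. Summing these finitely many statements then gives \eqref{eq:log-optimality-dynamic-expeirments}.

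\emph{Step 1: a regret bound.} For each $(s,\bh)$ the portfolios in \eqref{eq:portfolio-selection-strategy-histories} form an independent run of the Universal Portfolio algorithm on the subsequence of units with $\xi_{i,s}(a,\bh)=1$. Cover's deterministic bound therefore gives $\cR_{\chN_{a,\bh}(s)} \leq \tfrac12\log(\chN_{a,\bh}(s)+1)+\log 2$ pathwise. This bound needs no probabilistic assumptions, so it holds even though the strata are defined by past, data-dependent histories.

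\emph{Step 2: bounding the comparator.} For fixed $\bh$, conditional on $\cF_{i-1,s}$ and $\xi_{i,s}(a,\bh)=1$, \cref{ass:markov-outcomes,ass:treatment-assignments} imply that the increments $g_s(\hate_{i,s}(a,\bh))/g_s(\delta)$ share a common conditional law $\Q_{s,\bh}$. The argument mirrors \cref{lem:unbiased-blip-ht-estimator}: the history screens off the rest of the past, and $A_{i,s}$ is drawn freshly and independently. Let $\bet^\star_{s,\bh}$ be the numeraire (log-optimal) portfolio for $\Q_{s,\bh}$. We write the comparator's log-wealth as
\begin{equation}
    \sum_i \xi_{i,s}(a,\bh)\, \logp{1-\tilde\bet_{i,s}+\tilde\bet_{i,s} X_i} \mcom \qquad X_i \coloneqq g_s(\hate_{i,s}(a,\bh))/g_s(\delta) \mper
\end{equation}
Subtracting the oracle's log-wealth, each summand has conditional expectation at most $0$ by the numeraire property,
\begin{equation}
    \E\Brac{\frac{1-\tilde\bet+\tilde\bet X}{1-\bet^\star+\bet^\star X}} \leq 1 \mper
\end{equation}
The summands are also bounded: $X_i$ is bounded, and $\bet^\star<1$ whenever $X$ can be $0$. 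A strong law for martingale differences, applied along the random subsequence of included units, then gives that $\limsup$ of the comparator-minus-oracle log-wealth divided by $\chN_{a,\bh}(s)$ is at most $0$ almost surely. The oracle's log-wealth is in turn at most $\max_{\bet}$ of the empirical log-wealth, which is the quantity appearing in $\cR$.

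\emph{Step 3: normalization.} Step 1 shows the regret divided by $\chN_{a,\bh}(s)$ vanishes. Multiplying by $\chN_{a,\bh}(s)/N_a(t) \leq 1$ transfers the bound to the normalization by $N_a(t)$. If a stratum has $\chN_{a,\bh}(s)$ bounded, it contributes $O(1)/N_a(t)\to 0$, because $N_a(t)\to\infty$ almost surely by positivity and $n_\undert\to\infty$.

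\emph{Main obstacle.} The hardest step is Step 2. The strata $\bh$ are data-dependent, and within the same units the outcomes are temporally dependent, so the included units form a random subsequence. Stationarity of the conditional law $\Q_{s,\bh}$ must be justified carefully from \cref{ass:markov-outcomes} and \cref{assumption:dgp-assumption}. Units are i.i.d. across $i$, so within sub-experiment $s$ the pairs (history, current outcome) are i.i.d. across units. Combined with the fresh randomization, this yields an i.i.d. structure within each stratum, and I would prove it first. Boundedness of the log-ratio when $\bet^\star$ sits at the boundary is a technical side point; I would handle it by truncating $\bet^\star$ to $[0,1-\epsilon]$ and letting $\epsilon\to0$.
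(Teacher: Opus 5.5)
Your argument is correct in outline but takes a different route from the paper.

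\textbf{How the paper argues.} It first proves the two-sided equivalence $\frac{1}{N_a(t)}\bigl(\log W_{N_a(t)}-\log W^{\brackQ}_{N_a(t)}\bigr)\to 0$ almost surely, via \cref{lem:deviation-probability-always-on-same-units}. The lower deviation comes from $W^\star\ge W^{\brackQ}$ plus the Universal Portfolio regret. The regret sum is bounded globally by $\sum_{s}|\cH_{s-1}|R_{\chN_a(s)}/\chN_a(s)$, using monotonicity of $R_n$, so no stratum count needs to diverge. The upper deviation comes from the exponential-moment bound of \cref{lem:numeraire-property-on-expected-number-of-units} with Markov and a union bound. The comparator is then handled globally. $\tW_{N_a(t)}/W^{\brackQ}_{N_a(t)}$ has expectation at most one, so Markov at level $n_{\undert}^2$, Borel--Cantelli, and $\log n_{\undert}/N_a(t)\to 0$ give $\liminf\frac{1}{N_a(t)}\log(W^{\brackQ}_{N_a(t)}/\tW_{N_a(t)})\ge 0$.

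\textbf{How yours compares.} Your sandwich $\log W\ge\log W^\star-\sum R\ge\log W^{\brackQ}-\sum R$ is exactly the lower half of the paper's lemma. You correctly see that this is all log-optimality needs, so the upper half and \cref{lem:numeraire-property-on-expected-number-of-units} are dispensable. The price is your per-stratum martingale SLLN. It forces a case split between strata whose counts diverge and strata whose counts stay bounded. It also needs each stratum's units to form one adapted sequence that is extended as $n_{\undert}$ grows. The paper's Markov/Borel--Cantelli step uses only a marginal bound for each value of $n_{\undert}$. It is therefore insensitive to $\cF_{s-1}$-measurable propensities, and hence later strata and numeraires, changing as earlier sub-experiments receive more units.

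\textbf{A step that needs repair.} Your claim that the summands $D_i=\log\frac{1-\tilde\lambda_i+\tilde\lambda_i X_i}{1-\lambda^\star+\lambda^\star X_i}$ are bounded is false. They are bounded above but not below: a comparator may put $\tilde\lambda_i$ arbitrarily close to $1$ on units with $X_i=0$. This happens when a control unit has $Y_{i,s}=m$, since then $g_s(\hate_{i,s})=m-Y_{i,s}=0$. So a bounded-difference or $L^2$ martingale SLLN does not apply as written.

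Your proposed truncation of $\lambda^\star$ addresses the wrong side. Under \cref{ass:finite-outcomes}, $X$ takes finitely many values. Either $X=0$ has positive probability, which forces $\lambda^\star<1$, or $X$ is bounded below by a positive constant. In both cases the denominator is already bounded away from zero. Truncating $\lambda^\star$ would only cost you the numeraire inequality.

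The clean fix is one-sided. By \cref{lem:numeraire-always-on}, $\prod_i(e^{D_i})^{\xi_{i,s}(a,\bh)}$ is a nonnegative supermartingale, hence almost surely bounded. So $\sum_i\xi_{i,s}(a,\bh)D_i\le C(\omega)$, and its normalized $\limsup$ is at most $0$ with no moment condition. Alternatively, replace $D_i$ by $\max(D_i,-K)$, whose conditional mean is at most $\log(1+e^{-K})$. Apply the bounded-increment SLLN and let $K\to\infty$. Applied to the full product over all strata and sub-experiments at once, the first fix is essentially the paper's argument, and it also removes your bounded-stratum case.
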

We provide the proof of \cref{thm:log-optimality-always-on-same-units} in \cref{proof:log-optimality-always-on-same-units}. It follows similar steps to the proof of \cref{thm:log-optimality-dynamic-k-armed-experiments}, but controls the additional complexity that arises from running the experiment over the same set of units. Hence, \cref{thm:log-optimality-always-on-same-units} states that regardless of the treatment schedule, we can construct powerful test statistics in Always-On experimentation even in settings where we get to repeatedly test over the same set of units.

\section{Sequential hypothesis testing by betting}
\label{sec:testing-by-betting}

We now review some important concepts of sequential hypothesis testing by betting---a research program whose inception can be traced back to the works of Abraham Wald \citep{wald1945sequential,wald1947sequential} and Herbert Robbins and colleagues \citep{darling1967confidence,robbins1968iterated,robbins1970statistical,robbins1974expected}. For in-depth reviews, see the books of Shafer and Vovk \citep{shafer2005probability,shafer2019game}, Ramdas and Wang \citep{ramdas2025hypothesis}, and the review paper of \citet{ramdas2023game}.

Let $\infseqn{Y_n}$ be a sequence of \iidtext{} random variables on a filtered space $(\Omega, \cF)$, where $\cF \equiv \infseqnz{\cF_n}$ is the filtration generated by $\infseqn{Y_n}$ and $\cF_0 = \Set{\Omega, \varnothing}$ is the trivial sigma-algebra.
We are particularly interested in testing some null
hypothesis---which is represented by the collection of distributions
$\cP$---against an alternative---from a collection of distributions
$\cQ$---such that $\cP \cap \cQ = \varnothing$. In testing by betting, one fixes
a desired type-I error rate $\alpha \in (0,1)$ and aims to construct
a binary-valued $\cF$-adapted map $\infseqn{\phi_n^\brackalpha}$, where
$\phi_n^\brackalpha = 1$ should be interpreted as ``reject the null'' and
$\phi_n^\brackalpha = 0$ as ``do not reject.'' The map $\phi_n^\brackalpha$ is commonly referred to as the ``level-$\alpha$ sequential test.'' Importantly, it is desired that the level-$\alpha$ sequential test provides the following guarantee:
\begin{equation}
\label{eq:type-I-error-guarantee}
    \sup_{\P \in \cP} \PP_{\P}\Paren{\exists n \in \N :  \phi_n^\brackalpha = 1}
    \leq \alpha \quad\text{or equivalently}\quad \sup_{\P \in
    \cP}\PP_{\P}\Paren{\phi_{\tau}^\brackalpha = 1} \leq \alpha \mcom
\end{equation}
where $\tau$ is an arbitrary $\cF$-stopping time, meaning that $\Set{\tau = n}
\in \cF_n$ for any $n \in \N$. One way of constructing such tests is via
\emph{test supermartingales}.

\begin{definition}[Test supermartingales]
    Let $\infseqn{W_n}$ be an $\cF$-adapted process, meaning that $W_n$ is
    $\cF_n$-measurable for each $n \in \N$. For a given distribution $\P \in
    \cP$, we say that $\infseqn{W_n}$ is a test $\P$-supermartingale if:
    \begin{enumerate}[(i)]
        \item $\Paren{W_n}_{n \in \N_0}$ is a $\P$-supermartingale, meaning that
        $\E_{\P}\Brac{W_n \smvert \cF_{n-1}} \leq W_{n-1}$ $\P$-almost surely,
        \item $W_n \geq 0$ for every $n \in \N$ $\P$-almost surely, and
        \item $\E_{\P}\Brac{W_1} \leq 1$.
    \end{enumerate}
    We say that $\Paren{W_n}_{n \in \N_0}$ is a test $\cP$-supermartingale if it
    is a test $\P$-supermartingale for every $\P \in \cP$.
\end{definition}

In this work we focus on test supermartingales that take the following form
\begin{equation}
\label{eq:test-supermartingale-form}
    W_n \coloneqq \prod_{i=1}^n \Brac{(1 - \lambda_i) + \lambda_i E_i} \mcom
\end{equation}
where $\infseqn{\lambda_n}$ is a $[0,1]$-valued predictable process (in the
sense that $\lambda_n$ is $\cF_{n-1}$-measurable for every $n \in \N$), and
$\infseqn{E_n}$ are sequential $e$-variables under the null hypothesis $\cP$---meaning that
$E_n$ is nonnegative with $\P$-probability one and $\E_{\P}\Brac{E_n \smvert \cF_{n-1}} \leq 1$
for every $\P \in \cP$.

Level-$\alpha$ sequential tests can be constructed by thresholding test
$\P$-supermartingales at $1/\alpha$, and \citet{ramdas2020admissible} show that
this is the only admissible way to construct level-$\alpha$ sequential tests.
More specifically, letting $\phi_n^\brackalpha \coloneqq \Ind{W_n \geq 1/\alpha}$ we can
satisfy the type-I error control from \eqref{eq:type-I-error-guarantee} 
by appealing to Ville's inequality for nonnegative 
supermartingales~\citep{ville1939etude}:
\begin{equation}
    \sup_{\P \in \cP} \PP_{\P}\Paren{\exists n \in \N : \phi_n^\brackalpha = 1}
    = \sup_{\P \in \cP} \PP_{\P}\Paren{\exists n \in \N : W_n \geq 1/\alpha}
    \leq \alpha \E_{\P}\Brac{W_1} \leq \alpha \mper
\end{equation}
Indeed, many of the sequential tests that have been proposed in the literature
have been constructed in this way, and we refer the reader to 
\citep{robbins1970statistical,robbins1974expected,howard2021time,waudby2024estimating,orabona2023tight,waudby2025universal,sandoval2026multi}
for a non-exhaustive list of examples.

The level-$\alpha$ sequential test
($\phi_n^\brackalpha$) surfaces the following interpretation of the 
test supermartingale $W_n$: it is our evidence against the null $\cP$. 
For this reason, it is of interest to obtain 
test supermartingales that diverge to infinity as quickly as possible under the alternative.
Test supermartingales with large growth rates are, therefore, considered
powerful---a designation that can be traced back to the works of 
\citet{kelly1956new}, \citet{breiman1961optimal}, and
\citet{long1990numeraire}. A more recent literature has made connections 
between large growth rates and hypothesis
testing~\citep{grunwald2019safe,waudby2024estimating,larsson2025numeraire,waudby2025universal,orabona2023tight,wang2025backtesting},
and most recently \citet{waudby2025universal} introduce the following definition of
\emph{almost-sure (growth-rate) log-optimality} of test supermartingales.

\begin{definition}[Almost sure log-optimality \citep{waudby2025universal}]
\label{def:log-optimality}
    Fix a null $\cP$ and an alternative $\cQ$. For $\Q \in \cQ$, a
    $\cP$-test supermartingale $W \equiv \infseqn{W_n}$ is said to be 
    $\Q$-log-optimal in a class of test supermartingales $\cW$ if 
    for any $W' \in \cW'$,
    \begin{equation}
        \liminf_{n \to \infty} \Paren{\frac{1}{n}\logp{W_n} -
        \frac{1}{n}\logp{W_n'}} \geq 0 \quad\text{$\Q$-almost surely}\mper
    \end{equation}
    We say that $W$ is $\cQ$-log-optimal if the same process is $\Q$-log-optimal
    for every $\Q \in \cQ$.
\end{definition}

Instantiating the class of test supermartingales $\cW$ referenced in 
\cref{def:log-optimality} as
\begin{equation}
    \cW' \coloneqq \Set{W : W~\text{is a test $\cP$-supermartingale of the form
            given in \eqref{eq:test-supermartingale-form}}}\mcom
\end{equation}
\citet{waudby2025universal} showed\footnote{\citet{waudby2025universal} provide their results for a more general class of test supermartingales than that in \eqref{eq:test-supermartingale-form}, but we focus our discussion on test supermartingales of the form \eqref{eq:test-supermartingale-form} to avoid introducing any extra notation.} that a sufficient condition to achieve log-optimality with respect to $\cW$ is to select the \emph{portfolios} $\infseqn{\bet_n}$ using algorithms that enjoy sub-linear portfolio regret.

\begin{definition}[Portfolio regret]
\label{def:portfolio-regret}
    Let $\infseqn{e_n}$ be a deterministic sequence lying in $[0, \infty)$. We
    define the \emph{portfolio regret} of the real sequence $\infseqn{W_n}$
    as
    \begin{equation}
        \cR_n \equiv \cR_n(W_n) \coloneqq \max_{\lambda \in [0,1]} \sum_{i=1}^n \logp{(1 -
        \lambda) + \lambda e_i} - \logp{W_n} \mper
    \end{equation}
\end{definition}

Importantly, the Universal Portfolio algorithm from \citet{cover1991universal}
achieves sub-linear portfolio regret~\citep{cover1996universal}, meaning that
$\cR_n = o(n)$, and it computes the portfolios as
\begin{equation}
\label{eq:universal-portfolio}
    \UP(W_{n-1}^\brackbet) \coloneqq 
    \frac{\int_{\lambda \in [0,1]} \lambda
    W_{n-1}^\brackbet \mathrm{d}F(\lambda)}{\int_{\lambda \in [0,1]}
    W_{n-1}^\brackbet \mathrm{d}F(\lambda)} \mcom
\end{equation}
where $F(\cdot)$ is set to be the $\text{Beta}(1/2, 1/2)$ distribution, and $W_{n-1}^\brackbet \coloneqq \prod_{i = 1}^{n-1} \Brac{(1 - \bet) + \bet E_i}$. Hence, to achieve the notion of optimality defined in \cref{def:log-optimality} it is sufficient to select the portfolios $\infseqn{\bet_n}$ using Cover's Universal Portfolio algorithm from \eqref{eq:universal-portfolio}.

\subsection{Numeraire Portfolios}\label{sec:numeraire-portfolios}

In this section we briefly review numeraire portfolios~\citep{long1990numeraire,karatzas2007numeraire,larsson2025numeraire}, objects that are fundamental to many of our proofs. To ease the notation in the discussion and proofs to follow, we employ the following shorthand throughout the appendix:
\begin{equation}
    \bbet_{i,t} \coloneqq \Paren{1 - \bet_{i,t} ~~ \bet_{i,t}}^\trans
    \quad\mathrm{and}\quad \bE_{i,t}(a) \coloneqq \Paren{1 ~~ g_t(\hate_{i,t}(a)) /
    g_t(\delta)}^\trans \mcom
\end{equation}
for all treatment arms $a \in \cA$, with analogous notation for the setting in which units can be repeatedly observed in the experiment (see \cref{sec:treatment-histories}). The above notation allows us to write the process from \eqref{eq:test-supermartingale-general-form} in the following compact form
\begin{equation}
    W_{N_a(t)} = \prod_{s \in \cT_t(a)} \prod_{i \in \cI_s} \Paren{\bbet_{i,s}^\trans \bE_{i,s}(a)}^{\xi_{i,s}(a)} \mper
\end{equation}

Numeraire portfolios are those portfolios which maximize the expected logarithmic increment under the distribution generating the data---which can also be thought of as those portfolios which maximize the growth-rate of the test supermartingale. In our Always-On setting it is the case that the random variables we work with are generated by different distributions, as the distributions over the active set of treatments differ across sub-experiments. For this reason, every sub-experiment $s \in [t]$ has its own set of numeraire portfolios corresponding to each of the active arms $a \in \cA_s$. Hence, the numeraire portfolios for the setting considered in \cref{sec:k-armed-experiments} are defined as
\begin{equation}
\label{eq:numeraire-def-new-units}
    \bbet_{s}^\brackQ(a) \coloneqq \argmax_{\bbet \in \triangle_{1}} \E_{\Q_{\RCT}}\Brac{\logp{\bbet^\trans \bE_{1,s}(a)} \smvert \cF_{s-1}, \xi_{1,s}(a) = 1} \mcom
\end{equation}
for each $s \in [t]$ and $a \in \cA_s$. The numeriare portfolios for the setting considered in \cref{sec:treatment-histories} are defined in a similar way, but under that setting we endow each arm $a \in \cA_s$ and history $\bh \in \cH_{s-1}$ in sub-experiment $s \in [t]$ their own numeraire portfolio. That is, in sub-experiment $s \in [t]$, we define the numeraire portfolio under treatment $a \in \cA_s$ for those units $i \in \cI_s$ with history $\bh \in \cH_{s-1}$ as,
\begin{equation}
\label{eq:numeraire-def-repeated-units}
    \bbet_s^\brackQ(a, \bh) \coloneqq \argmax_{\bbet \in \simplexone} \E_{\Q_{\RCT}}\Brac{\logp{\bbet^\trans \bE_{i,s}(a, \bh)} \smvert \cF_{s-1}, \xi_{i,s}(a, \bh) = 1} \mper
\end{equation}

An important property of numeraire portfolios is that for any other $\bbet \in \simplex$, the expected sub-optimality ratio is upper bounded by one. 
\begin{lemma}
\label{lem:numeraire-always-on}
    Fix a sub-experiment $s \in [t]$ and treatment $a \in \cA_s$. Suppose that \cref{assumption:dgp-assumption,ass:treatment-assignments,assumption:treatment-schedule} hold. Then, for any unit $i \in \cI_s$ and $\bbet \in \simplexone$ we have that
    \begin{equation}
        \E_{\QRCT}\Brac{\frac{\bbet^\trans \bE_{i,s}(a)}{\bbet_s^\brackQ(a)^\trans \bE_{i,s}(a)} \smvert \cF_{i-1,s}, \xi_{i,s}(a) = 1} \leq 1 \mper
    \end{equation}
    Furthermore, suppose that \cref{ass:markov-outcomes} also holds. We then have that for any history $\bh \in \cH_{s-1}$ the following inequality holds for any $\bbet \in \simplexone$ and those units $i \in \cI_s$ with history $\bh$,
    \begin{equation}
        \E_{\QRCT}\Brac{\frac{\bbet^\trans \bE_{i,s}(a, \bh)}{\bbet_s^\brackQ(a, \bh)^\trans \bE_{i,s}(a, \bh)} \smvert \cF_{i-1,s}, \xi_{i,s}(a, \bh) = 1} \leq 1 \mper
    \end{equation}
\end{lemma}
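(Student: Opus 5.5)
The argument is the standard first-order optimality argument for log-optimal (numeraire) portfolios. We reduce the conditional expectation given $\cF_{i-1,s}$ and $\xi_{i,s}(a)=1$ to one given $\cF_{s-1}$ and $\xi_{1,s}(a)=1$, since that is the object the numeraire portfolio in \eqref{eq:numeraire-def-new-units} is optimal for.

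\textbf{Step 1 (reduction).} Under \cref{assumption:dgp-assumption} the units' potential outcomes are \iidtext{}. The propensity scores $\propscore_s$ and the active set $\cA_s$ are $\cF_{s-1}$-measurable by \cref{assumption:treatment-schedule} and the design. Within sub-experiment $s$, the assignments are drawn \iidtext{} from $\RCT_s$, and conditional ignorability holds by \cref{ass:treatment-assignments}(i). Together these imply that, conditional on $\cF_{i-1,s}$ and $\{\xi_{i,s}(a)=1\}$, the vector $\bE_{i,s}(a)$ has the same law as $\bE_{1,s}(a)$ given $\cF_{s-1}$ and $\{\xi_{1,s}(a)=1\}$.

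This law is the mixture, with weights $\hpi_s(a)$ and $\hpi_s(\control)$, of the two cases $A=a$ and $A=\control$, with $Y$ drawn from the respective potential-outcome marginal. In the history setting the same reasoning uses \cref{ass:markov-outcomes}. Unit $i$'s outcome in sub-experiment $s$ depends on the past only through $\bH_{i,s-1}=\bh$, so conditioning on $\cF_{i-1,s}$ together with $\xi_{i,s}(a,\bh)=1$ yields the law that defines \eqref{eq:numeraire-def-repeated-units}.

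\textbf{Step 2 (KKT).} Write $f(\bet) = \E[\log((1-\bet) + \bet E)]$, where $E = g_s(\hate)/g_s(\delta) \geq 0$ and $E$ is bounded: it lies in $[0, g_s(1/\hpi_s(a))/g_s(\delta)]$ when $g_s(\delta)>0$. The function $f$ is concave on $[0,1]$; let $\bet^*$ be its maximizer. We show $\E\big[(E-1)/((1-\bet^*)+\bet^* E)\big]$ has the right sign at each location of $\bet^*$:
\begin{itemize}
\item if $\bet^*\in(0,1)$, this quantity equals $0$;
\item if $\bet^*=0$, it is $\le 0$;
\item if $\bet^*=1$, it is $\ge 0$.
\end{itemize}
This needs differentiation under the integral. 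Dominated convergence handles interior points because the denominator is bounded below by $\min(1-\bet^*,\bet^*)>0$ times a positive constant. At the endpoints we use monotone convergence of the difference quotients, which are monotone by concavity.

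\textbf{Step 3 (ratio bound).} For any $\bbet=(1-\bet,\bet)$,
\[
\frac{\bbet^\trans \bE}{\bbet^{*\trans}\bE} = 1 + (\bet-\bet^*)\,\frac{E-1}{(1-\bet^*)+\bet^*E}.
\]
Taking expectations gives $1 + (\bet-\bet^*)\,\E\big[(E-1)/((1-\bet^*)+\bet^* E)\big]$. By the sign conditions of Step 2, the second term is $\le 0$ in each of the three cases: at $\bet^*=0$ we have $\bet-\bet^*\ge 0$ and the expectation is $\le 0$; at $\bet^*=1$ we have $\bet-\bet^*\le 0$ and the expectation is $\ge 0$. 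Hence the expected ratio is at most $1$.

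\textbf{Main obstacle.} The delicate point is the endpoint case $\bet^*=1$ when $E$ can equal $0$ with positive probability. This happens when $Y=0$ under treatment, since then $\hate=0$ and $E=1/g_s(\delta)$, and more generally at the lower boundary of $g_s$. In that case $\log(\bbet^{*\trans}\bE)$ may be $-\infty$ and the ratio may be infinite.

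We handle it as follows. If $\PP(E=0)>0$, then $f(1)=-\infty$, so $\bet^*<1$ and the denominator is bounded away from zero. If $\PP(E=0)=0$, we use the monotone-convergence form of the derivative at $1$; alternatively, following \citet{larsson2025numeraire}, we apply Fatou's lemma to the inequality $\E[\bbet_\epsilon^\trans\bE/\bbet^{*\trans}\bE]\le 1$ along perturbations $\bbet_\epsilon=(1-\epsilon)\bbet^*+\epsilon\bbet$.

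The reduction in Step 1, especially the identically-distributed claim across units under random inclusion, is routine but should be written out carefully.
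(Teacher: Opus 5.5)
Your proposal is correct and follows essentially the paper's argument. The paper's proof simply invokes the definition of the numeraire portfolio together with the Kuhn--Tucker/ratio property of log-optimal portfolios (Cover and Thomas, Theorem 15.2.2). Your Steps 1--3 do two things the paper leaves implicit. They spell out the conditional-law reduction from $(\cF_{i-1,s},\,\xi_{i,s}(a)=1)$ to the law defining $\bbet_s^\brackQ(a)$; for the first claim this tacitly uses that unit $i$ is fresh, as in \cref{assumption:warmup-dgp}. They also reprove that ratio property on $\simplexone$ by one-dimensional concavity.

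There is one harmless slip in your ``main obstacle'' paragraph. The second coordinate of $\bE_{i,s}(a)$ vanishes exactly when $A_{i,s}=\control$ and $Y_{i,s}=1$, since there $g_s(\hate_{i,s}(a))=1-Y_{i,s}$. By contrast, $Y=0$ under treatment gives the strictly positive value $1/g_s(\delta)$.
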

\begin{proof}
    The proofs of both statements follow from the definitions of their numeraire portfolios---given in \eqref{eq:numeraire-def-new-units} and \eqref{eq:numeraire-def-repeated-units}---and from \citep[Theorem 15.2.2]{cover1999elements}.
\end{proof}
In addition to ``inducing $e$-values'' under the alternative distribution, numeraire portfolios help characterize certain optimality quantities of test supermartingales, but we defer a detailed treatment of this fact to the work of \citet{waudby2025universal}.

\section{Proofs of the Main Results}

In this section we present the proofs of our main results. Some of the proofs to follow make use of processes that have oracle access to either the numeraire portfolios or the best in hindsight portfolios. We define these processes below.
The proofs of the results from \cref{sec:k-armed-experiments} make use of the process that has access to the numeraire portfolios in each sub-experiment
\begin{align}
\label{eq:oracle-process-warmup}
    &W_{N_a(t)}^\brackQ \coloneqq \prod_{s \in \cT_t(a)} \prod_{i \in \cI_s} \Paren{\bbet_{s}^\brackQ(a)^\trans \bE_{i,s}(a)}^{\xi_{i,s}(a)} \mcom
\end{align}
where $\bbet_s^\brackQ(a)$ is defined as in \cref{eq:numeraire-def-new-units} for each $s \in [t]$ and $a \in \cA_s$.
The second process is one that uses the best in hindsight portfolio for each of the active treatment arms in every sub-experiment:
\begin{equation}
    W_{N_a(t)}^\star \coloneqq \prod_{s \in \cT_t(a)} \prod_{i \in \cI_s} \Paren{\bbet_s^\star(a)^\trans \bE_{i,s}(a)}^{\xi_{i,s}(a)} \quad\mathrm{where}\quad \bbet_s^\star(a) \coloneqq \argmax_{\bbet \in \triangle_1} \sum_{i \in \cI_s} \xi_{i,s}(a) \logp{\bbet^\trans \bE_{i,s}(a)} \mper
\end{equation}

We slightly abuse the notation and define the two analogous oracle processes used in the proofs of the results from \cref{sec:treatment-histories} as follows---where the processes that has oracle access to the numeraire portfolios in each sub-experiment and for each strata in that sub-experiment is given by
\begin{align}
\label{eq:oracle-process-stratified}
    &W_{N_a(t)}^\brackQ \coloneqq \prod_{s \in \cT_t(a)} \prod_{i \in \cI_s} \prod_{\bh \in \cH_{s-1}} \Paren{\bbet_s^\brackQ(a, \bh)^\trans \bE_{i,s}(a,\bh)}^{\xi_{i,s}(a,\bh)} \mcom
\end{align}
where $\bbet_s^\brackQ(a, \bh)$ is defined as in \cref{eq:numeraire-def-repeated-units} for each $s \in [t]$, $a \in \cA_s$, and $\bh \in \cH_{s-1}$.
While the second process is the one that has oracle access to the best in hindsight portfolio for each of the active treatment arms in every sub-experiment and for each stratum $\bh \in \cH_{s-1}$ takes the form
\begin{align}
    &W_{N_a(t)}^\star \coloneqq \prod_{s \in \cT_t(a)} \prod_{i \in \cI_s} \prod_{\bh \in \cH_{s-1}} \Paren{\bbet_s^\star(a, \bh)^\trans \bE_{i,s}(a, \bh)}^{\xi_{i,s}(a, \bh)} \\
    &\text{where for each } s \in [t] \text{ and } \bh \in \cH_{s-1}\mcom ~ \bbet_s^\star(a, \bh) \coloneqq \argmax_{\bbet \in \simplexone} \sum_{i \in \cI_s} \xi_{i,s}(a, \bh) \logp{\bbet^\trans \bE_{i,s}(a, \bh)} \mper
\end{align}

It will be clear from context which of these processes we are invoking in each of the proofs.

\subsection{Proof of \cref{lem:test-supermartingale-general-form}}
\label{proof:test-supermartingale-general-form}
\begin{proof}[Proof of \cref{lem:test-supermartingale-general-form}]
    To show that $W_{N_a(t)}$ (given in \eqref{eq:test-supermartingale-general-form}) forms a test supermartingale we must argue that $W_{N_a(t)}$ is nonnegative almost surely, that it forms a supermartingale, and that its mean is upper bounded by one. Indeed, nonnegativity follows by construction so all it remains to show is that $W_{N_a(t)}$ is a $\P$-supermartingale and that its mean is upper bounded by one.

    We start by showing that $W_{N_a(t)}$ is a $\P$-supermartingale. To do so, we first notice how $W_{N_a(t)}$ can be equivalently characterized as
    \begin{equation}
        W_{N_a(t)} = \prod_{s = 1}^t \Paren{W_{\chN_a(s)}}^{\Ind{a \in \cA_s}} \mper
    \end{equation}
    Let $n^\star$ be the, fixed and arbitrary, index of the last unit to enter the experiment---which corresponds to the index of the last unit to enter sub-experiment $t$---and $n^\star - 1$ correspond to the index of the second to last unit to enter the experiment. The conditional expectation of $W_{N_a(t)}$ simplifies to
    \begin{equation}
        \E_{\PRCT}\Brac{W_{N_a(t)} \smvert \cF_{n^\star-1, t}} = \E_{\PRCT}\Brac{\Paren{W_{\chN_a(t)}} \smvert \cF_{n^\star-1, t}} \prod_{s=1}^{t-1}\Paren{W_{\chN_a(s)}}^{\Ind{a \in \cA_s}} \mper
    \end{equation}
    From \cref{assumption:treatment-schedule} we have that the decision to keep treatment $a$ in sub-experiment $t$ is $\cF_{t-1}$-measurable, which means that if $a \notin \cA_t$ then $\E_{\PRCT}\Brac{W_{N_a(t)} \smvert \cF_{n^\star-1,t}} = W_{N_a(t-1)}$ $\PRCT$-almost surely. On the other hand, if $a \in \cA_t$ we have that the conditional expectation term is equal to 
    \begin{align}
        &\E_{\PRCT}\Brac{\prod_{i \in \cI_t} \Paren{\bbet_{i,t}^\trans \bE_{i,t}(a)}^{\xi_{i,t}(a)} \smvert \cF_{n^\star - 1, t}} \\
        &\quad= \underbrace{\E_{\PRCT}\Brac{\Paren{\bbet_{n^\star,t}^\trans \bE_{n^\star,t}(a)}^{\xi_{n^\star,t}} \smvert \cF_{n^\star - 1, t}}}_{
        \eqref{eq:proof-supermartingale-general-form-conditional-expectation}.(1)}
        \prod_{i \in \cI_t(n^\star - 1)} \Paren{\bbet_{i,t}^\trans \bE_{i,t}(a)}^{\xi_{i, t}(a)} \mper
        \label{eq:proof-supermartingale-general-form-conditional-expectation}
    \end{align}
    We now focus on $\eqref{eq:proof-supermartingale-general-form-conditional-expectation}.(1)$ and apply to law of total expectation to see how it can be upper bounded as
    \begin{align}
        \eqref{eq:proof-supermartingale-general-form-conditional-expectation}.(1)
        &= \PP_{\PRCT}\Paren{\xi_{n^\star, t} = 1 \smvert \cF_{n^\star - 1, t}} \E_{\PRCT}\Brac{\bbet_{n^\star, t}^\trans \bE_{n^\star, t}(a) \smvert \cF_{n^\star-1, t}, \xi_{n^\star, t}(a) = 1}
        \\
        &\qquad + \PP_{\PRCT}\Paren{\xi_{n^\star, t} = 0 \smvert \cF_{n^\star - 1, t}} \E_{\PRCT}\Brac{1 \smvert \cF_{n^\star-1, t}, \xi_{n^\star, t} = 0} \\
        &= \PP_{\PRCT}\Paren{\xi_{n^\star, t} = 1 \smvert \cF_{n^\star - 1, t}} \left((1 - \bet_{n^\star, t}) \right. \\
        &\quad \left. + \bet_{n^\star, t} \E_{\PRCT}\Brac{g_t(\hate_{n^\star, t}(a)) \smvert \cF_{n^\star-1, t}, \xi_{n^\star, t}(a) = 1} / g_t(\delta)\right)
        \label{eq:proof-supermartingale-general-form-mearusbility} \\
        &\quad + \Paren{1 - \PP_{\PRCT}\Paren{\xi_{n^\star, t} = 1 \smvert \cF_{n^\star - 1, t}}} \\
        &= \PP_{\PRCT}\Paren{\xi_{n^\star, t} = 1 \smvert \cF_{n^\star - 1, t}} \Paren{(1 - \bet_{n^\star, t}) + \bet_{n^\star, t} g_t(\ate_{\P}(a)) / g_t(\delta)} 
        \label{eq:proof-supermartingale-general-form-unbiasedness}\\
        &\quad + \Paren{1 - \PP_{\PRCT}\Paren{\xi_{n^\star, t} = 1 \smvert \cF_{n^\star - 1, t}}}\\
        &\leq \PP_{\PRCT}\Paren{\xi_{n^\star, t} = 1 \smvert \cF_{n^\star - 1, t}} \Paren{(1 - \bet_{n^\star, t}) + \bet_{n^\star, t}} + \Paren{1 - \PP_{\PRCT}\Paren{\xi_{n^\star, t} = 1 \smvert \cF_{n^\star - 1, t}}} 
        \label{eq:proof-supermartingale-general-form-under-the-null}\\
        &= 1 \mcom
    \end{align}
    where \eqref{eq:proof-supermartingale-general-form-mearusbility} follows from the predictability of the portfolios and the transformation $g_t(\cdot)$ on the threshold $\delta$; \eqref{eq:proof-supermartingale-general-form-unbiasedness} follows from the assumption that $\E_{\PRCT}\Brac{g_t(\hate_{n^\star, t}(a)) \smvert \cF_{n^\star - 1, t}, \xi_{n^\star,t}(a)=1} = g_t(\ate_{\P}(a))$; and \eqref{eq:proof-supermartingale-general-form-under-the-null} follows because under the null, $\cP(a)$, $\ate_{\P}(a) \leq \delta$. 

    Putting the previous steps together we can conclude that
    \begin{equation}
        \E_{\PRCT}\Brac{W_{N_a(t)} \smvert \cF_{n^\star - 1}, t} \leq W_{N_a(t-1)} \Brac{\prod_{i \in \cI_t(n^\star - 1)} \Paren{\bbet_{i,t}^\trans \bE_{i,t}(a)}^{\xi_{i,t}(a)}}^{a \in \cA_t} \mcom
    \end{equation}
    which shows that $W_{N_a(t)}$ is a supermartingale. Instantiating the above for the first sub-experiment with only one unit, and through similar arguments we can conclude that $\E_{\PRCT}\Brac{\Paren{\bbet_{1, 1}^\trans \bE_{1, 1}(a)}^{\xi_{1, 1}(a)}} \leq 1$, which completes the proof.
\end{proof}

\subsection{Proof of \cref{prop:fdr-control}}
\label{proof:fdr-control}

We handle the proofs of the online false discovery rate control for \cref{sec:k-armed-experiments,sec:treatment-histories} simultaneously. To do so, we introduce the variable $J_s(a)$ for sub-experiment $s \in [t]$. Under the setting of \cref{sec:k-armed-experiments} it takes the form $J_s(a) \coloneqq \Ind{a \in \cH_0}$, which indicates that the treatment $a$ is null. For the setting considered in \cref{sec:treatment-histories}, the variable takes the form $J_s(a) \coloneqq \Ind{\ate_{\P}(a, \bh) \leq \delta,~\forall \bh \in \bigcup_{r \in \cT_{s}(a)} \cH_{r-1}}$ which states that the effect of treatment $a$ under all of the histories observed up until sub-experiment $s$ is less than the threshold $\delta$. We note how $J_s(a)$ under both settings is $\cF_{s-1}$-measurable.
With this notation at hand, we proceed to prove \cref{prop:fdr-control}
\begin{proof}[Proof of \cref{prop:fdr-control}]
    Fix a distribution $\P \in \cP \cup \cQ$, a sub-experiment $s \in [t]$, and define $\barcA_s \coloneqq \bigcup_{r = 1}^s \cA_r$. We start by noting how the number of false discoveries can be written as
    \begin{equation}
        \Abs{\cD_s \cap \cH_0} = \sum_{a \in \barcA_s} J_s(a)\Ind{W_{N_a(s)} \geq 1/\alpha_a} \mcom
    \end{equation}
    where we recall that for any treatment $a \in \cA_s$, the treatment is removed from the experiment as soon as its test supermartingale surpasses the threshold. This is equivalent to ``freezing'' the test supermartingale for treatment $a$; that is, for all subsequent sub-experiments the value of the test supermartingale is not updated.

    By the definition of sub-experiments, we know that $\cD_s \supseteq \cD_{s-1}$ for all $s \in [t]$ where $\cD_0 = \varnothing$. This implies that $\Abs{\cD_s} \geq \Abs{\cD_{s-1}}$ for all $s \in [t]$, where $\Abs{\varnothing} = 0$. Let $s_0(a) \coloneqq \min \Set{r \in [s] \smvert a \in \cA_r}$ be the sub-experiment in which treatment $a$ was introduced. We therefore have the inequalities $\Abs{\cD_s} \geq \Abs{\cD_{s_0(a)}} \geq \Abs{\cD_{s_0(a) - 1}}$, and use these inequalities to upper bound the proportion of false discoveries as
    \begin{align}
        \frac{\Abs{\cD_s \cap \cH_0}}{\Abs{\cD_s} \vee 1}
        &= \sum_{a \in \barcA_s} \frac{J_s(a) \Ind{W_{N_a(s)} \geq 1/\alpha_a}}{\Abs{\cD_s} \vee 1} \\
        &\leq \sum_{a \in \barcA_s} \frac{J_s(a) \Ind{W_{N_a(s)} \geq 1/\alpha_a}}{\Abs{\cD_{s_0(a) - 1}} + 1} \\
        &\leq \sum_{a \in \barcA_s} \frac{J_s(a) \alpha_a W_{N_a(s)}}{\Abs{\cD_{s_0(a) - 1}} + 1} \mper
        \label{eq:proof-fdr-control-fdp-upper-bound}
    \end{align}
    Recall the definition of $\alpha_a \coloneqq \alpha \gamma_a \Paren{\Abs{\cD_{s_0(a) - 1}} + 1}$. Substituting the definition of $\alpha_a$ into \eqref{eq:proof-fdr-control-fdp-upper-bound} gives us
    \begin{align}
        \eqref{eq:proof-fdr-control-fdp-upper-bound}
        &= \sum_{a \in \barcA_s} \frac{J_s(a) \alpha \gamma_a \Paren{\Abs{\cD_{s_0(a) - 1}} + 1} W_{N_a(s)}}{\Abs{\cD_{s_0(a) - 1}} + 1} \\
        &= \sum_{a \in \barcA_s} \alpha J_s(a) \gamma_a W_{N_a(s)} \mper
    \end{align}
    
    The above upper bounds are deterministic, which allow us to upper bound the false discover rate as
    \begin{equation}
        \FDR(\cD_s) \leq \E_{\PRCT}\Brac{\sum_{a \in \barcA_s} \alpha J_s(a) \gamma_a W_{N_a(s)}} \mper
    \end{equation}
    We now focus on the upper bound on the false discovery rate and note how due to the law of total expectation it can be equivalently written as
    \begin{align}
        &\E_{\PRCT}\Brac{\sum_{a \in \barcA_s}  \alpha J_s(a) \gamma_a W_{N_a(s)}} \\
        &= \E_{\PRCT}\Brac{\E_{\PRCT}\Brac{\sum_{a \in \cA'_s} \alpha J_s(a) \gamma_a W_{N_a(s)} + \sum_{a \in \barcA_{s-1}} \alpha J_s(a) \gamma_a W_{N_a(s)} \smvert \cF_{s-1}}} \\
        &= \E_{\PRCT}\Brac{\sum_{a \in \cA'_s} \E_{\PRCT}\Brac{\alpha J_s(a) \gamma_a W_{N_a(s)} \smvert \cF_{s-1}}}+ \E_{\PRCT}\Brac{\sum_{a \in \barcA_{s-1}} \E_{\PRCT}\Brac{\alpha J_s(a) \gamma_a W_{N_a(s)} \smvert \cF_{s-1}}} \mcom
        \label{eq:proof-fdr-control-iterated-and-linearity-of-expectations}
    \end{align}
    where the last equality follows from the $\cF_{s-1}$-measurability of the treatments in $\barcA_s$ and the linearity of expectations. 

    We now notice two facts. First, from the definitions of $J_s(a)$ we have that it is $\cF_{s-1}$-measurable and that it satisfies $J_s(a) \leq J_{s-1}(a)$. Second, for those $a \in \barcA_s$ for which $\Set{J_s(a) = 1}$ (i.e., $a$ is considered to be under the null in sub-experiment $s$) the following inequality holds
    \begin{equation}
        \E_{\PRCT}\Brac{J_s(a) W_{N_a(s)} \smvert \cF_{s-1}} \leq J_s(a) W_{N_a(s-1)} \leq J_{s-1}(a) W_{N_a(s-1)} \mcom
        \label{eq:proof-masked-test-supermartingale-inequality}
    \end{equation}
    where the first inequality follows from $W_{N_a(s)}$ being a test supermartingale for those $a \in \barcA_{s}$ for which $\Set{J_s(a) = 1}$.
    By the law of iterated expectation and the inequality from \eqref{eq:proof-masked-test-supermartingale-inequality}, we can conclude that for those $a \in \barcA_s$ for which $\Set{J_s(a) = 1}$, $\E_{\PRCT}\Brac{J_s(a) W_{N_a(s)}} \leq 1$.
    On the other hand, for those $a \in \barcA_s$ for which $\Set{J_s(a) = 0}$ we have that $\E_{\PRCT}\Brac{J_s(a) W_{N_a(s)} \smvert \cF_{s-1}} = 0$.
    
    The previous arguments allow us to conclude that 
    \begin{equation}
        \eqref{eq:proof-fdr-control-iterated-and-linearity-of-expectations}
        \leq \alpha \E_{\PRCT}\Brac{\sum_{a \in \cA_s'} \gamma_a} + \E_{\PRCT}\Brac{\sum_{a \in \barcA_{s-1}} \alpha J_{s-1}(a) \gamma_a W_{N_a(s-1)}} \mper
    \end{equation}
    By continuously applying the above arguments to the resulting upper bounds, we end up with the following inequality
    \begin{equation}
        \E_{\PRCT}\Brac{\sum_{a \in \barcA_s} \alpha J_s(a) \gamma_a W_{N_a(s)}} \leq \alpha \E_{\PRCT}\Brac{\sum_{r=1}^s \sum_{a \in \cA_r'} \gamma_a} \mper
        \label{eq:proof-fdr-control-upper-bound-by-step-sizes}
    \end{equation}
    Since the sequence $(\gamma_r)_{r=1}^\infty$ is chosen to satisfy the inequality $\sum_{r=1}^\infty \gamma_r \leq 1$, we have that the upper bound
    in \eqref{eq:proof-fdr-control-upper-bound-by-step-sizes} can be further upper bounded as
    \begin{equation}
        \alpha \E_{\PRCT}\Brac{\sum_{r=1}^s \sum_{a \in \cA_r'} \gamma_a} \leq \alpha \E_{\PRCT}\Brac{\sum_{r=1}^\infty \gamma_r}\leq \alpha \mper
    \end{equation}
    Hence, we can conclude that
    \begin{equation}
        \FDR(\cD_s) \leq \alpha \mper
    \end{equation}
    Since the choice of $s \in [t]$ was arbitrary, we have that
    \begin{equation}
        \FDR(\cD_s) \leq \alpha \quad\text{for all } s \in [t] \mcom
    \end{equation}
    completing the proof.
\end{proof}

\subsection{Proof of \cref{lem:warmup-unbiased-ht-estimator}}
\label{proof:warmup-unbiased-ht-estimator}
\begin{proof}[Proof of \cref{lem:warmup-unbiased-ht-estimator}]
    Fix a $\P \in \cP \cup \cQ$, a sub-experiment $s \in [t]$, and any treatment $a \in \cA_s$. Our goal is to show that the Horivtz-Thompson estimator from \eqref{eq:warmup-ht-estimator} is conditionally unbiased when we condition on the filtration and the event that the inclusion indicator is equal to one. Hence, we can write the conditional expectation as
    \begin{align}
        \E_{\PRCT}\Brac{\hate_{i,s}(a) \smvert \cF_{i-1, s}, \xi_{i,s}(a) = 1}
        &= \E_{\PRCT}\Brac{Y_{i}(a) \frac{\Ind{A_{i,s} = a}}{\hpi_s(a)}\smvert \cF_{i-1, s}, \xi_{i,s}(a) = 1} \\
        &\qquad - \E_{\PRCT}\Brac{Y_{i}(\control) \frac{\Ind{A_{i,s} = \control}}{\hpi_s(\control)}\smvert \cF_{i-1, s}, \xi_{i,s}(a) = 1} \mper
    \end{align}
    We now focus on the conditional expectation under treatment $a$. Due to the $\cF_{s-1}$-measurability of $\pi_s(a)$ and $\pi_s(\control)$---and hence of $\hpi_s(a)$---alongside the conditional conditional ignorability assumption (\cref{ass:treatment-assignments}$.(i)$) we can further rewrite the conditional expectation as
    \begin{align}
        &\E_{\PRCT}\Brac{Y_{i}(a) \frac{\Ind{A_{i,s} = a}}{\hpi_s(a)}\smvert \cF_{i-1, s}, \xi_{i,s}(a) = 1} \\
        &\quad = \frac{1}{\hpi_s(a)} \E_{\PRCT}\Brac{Y_i(a) \smvert \cF_{i-1, s}, \xi_{i,s}(a) = 1} \E_{\PRCT}\Brac{\Ind{A_{i,s} = a} \smvert \cF_{i-1, s}, \xi_{i,s}(a) = 1} \mper
        \label{eq:proof-warmup-unbiased-ht-esimator-conditional-decomposition}
    \end{align}
    Since units are independent and identically distributed together with the conditional ignorability assumption (\cref{ass:treatment-assignments}$.(i)$) we have the identity $\E_{\PRCT}\Brac{Y_i(a) \smvert \cF_{i-1, s}, \xi_{i,s}(a) = 1} = \E_{\P}\Brac{Y_1(a)}$. 

    We now turn our attention to the conditional expectation of the indicator. The conditional expectation of the indicator follows the identity $\E_{\PRCT}\Brac{\Ind{A_{i,s} = a} \smvert \cF_{i-1,s}, \xi_{i,s}(a) = 1} = \PP_{\PRCT}\Paren{A_{i,s} = a \smvert \cF_{i-1,s}, \xi_{i,s}(a) = 1}$. Since we are conditioning on the event $\Set{\xi_{i,s}(a) = 1}$ we know that $A_{i,s} = a$ or $A_{i,s} = \control$. This, in turn, implies that
    \begin{equation}
        \PP_{\PRCT}\Paren{A_{i,s} = a \smvert \cF_{i-1,s}, \xi_{i,s}(a) = 1} = \frac{\pi_s(a)}{\pi_s(a) + \pi_s(\control)} \equiv \hpi_s(a) \mper
    \end{equation}
    The above arguments imply that \eqref{eq:proof-warmup-unbiased-ht-esimator-conditional-decomposition} is equal to
    \begin{equation}
        \frac{1}{\hpi_s(a)} \E_{\PRCT}\Brac{Y_i(a) \smvert \cF_{i-1,s}, \xi_{i,s}(a) = 1} \E_{\PRCT}\Brac{\Ind{A_{i,s} = a} \smvert \cF_{i-1,s}, \xi_{i,s}(a) = 1} = \E_{\P}\Brac{Y_1(a)} \mper
    \end{equation}
    Through the same set of steps we can conclude that
    \begin{equation}
        \E_{\PRCT}\Brac{Y_{i,s}(\control) \frac{\Ind{A_{i,s} = \control}}{\hpi_s(\control)} \smvert \cF_{i-1,s}, \xi_{i,s}(a) = 1} = \E_{\P}\Brac{Y_1(\control)} \mper
    \end{equation}
    Hence, all of the above steps imply that
    \begin{equation}
        \E_{\PRCT}\Brac{\hate_{i,s}(a) \smvert \cF_{i-1,s}, \xi_{i,s}(a) = 1} = \ate_{\P}(a) \mcom
    \end{equation}
    completing the proof of \cref{lem:warmup-unbiased-ht-estimator}.
\end{proof}

\subsection{Proof of \cref{prop:test-supermartingale-warmup}}
\label{proof:test-supermartingale-warmup}
\begin{proof}[Proof of \cref{prop:test-supermartingale-warmup}]
    \cref{prop:test-supermartingale-warmup} is an immediate consequence of \cref{lem:test-supermartingale-general-form,lem:warmup-unbiased-ht-estimator}, and the fact that $g_s(\cdot)$ is a non-decreasing linear map for each $s \in [t]$.
\end{proof}

\subsection{Proof of \cref{thm:log-optimality-dynamic-k-armed-experiments}}
\label{proof:log-optimality-dynamic-k-armed-experiments}
In this subsection we present the proof of \cref{thm:log-optimality-dynamic-k-armed-experiments}. Nonetheless, before providing the proof we present the following lemma (whose proof is given in \cref{proof:deviation-probability-k-armed}) that is central to the proof of \cref{thm:log-optimality-dynamic-k-armed-experiments}.

\begin{lemma}
\label{lem:deviation-probability-k-armed}
    Fix an arm $a \in \cA$ and a probability $\Q \in \cQ(a)$. Let $t \in \N$ denote the number of sub-experiments that have transpired in the experiment, and let $R_{\chN_a(s)} \coloneqq \logp{\chN_a(s) + 1} / 2 + \logp{2}$ be the upper bound on the portfolio regret of the Universal Portfolio algorithm \citep{cover1996universal} used to select the portfolios under treatment $a$ in each sub-experiment $s \in \cT_t(a)$. Then, under \cref{assumption:dgp-assumption,ass:treatment-assignments,assumption:treatment-schedule,assumption:warmup-dgp} the following inequality holds for any $\epsilon > 0$,
    \begin{align}
        &\PP_{\QRCT}\Paren{\sup_{n_\undert \geq m} \frac{1}{N_a(t)} \Abs{\logp{W_{N_a(t)}} - \logp{W_{N_a(t)}^\brackQ}} \geq \epsilon}\\ 
        &\quad\leq \sum_{n_\undert = m}^\infty \exps{-2(1-\exps{-\epsilon}) \sum_{s \in \cT_t(a)}n_s p_s} + \PP_{\QRCT}\Paren{\sup_{n_\undert \geq m} \frac{1}{N_a(t)} \sum_{s \in \cT_t(a)}R_{\chN_a(s)} \geq \epsilon} \mper
    \end{align}
\end{lemma}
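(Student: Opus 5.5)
I will split the log-ratio into two pieces by inserting the best-in-hindsight process $W^\star_{N_a(t)}$:
\begin{equation}
    \logp{W_{N_a(t)}} - \logp{W_{N_a(t)}^\brackQ} = \Paren{\logp{W_{N_a(t)}} - \logp{W^\star_{N_a(t)}}} + \Paren{\logp{W^\star_{N_a(t)}} - \logp{W_{N_a(t)}^\brackQ}} \mper
\end{equation}
By definition of $\bbet^\star_s(a)$, the second bracket is nonnegative. The first bracket equals $-\sum_{s \in \cT_t(a)} \cR_{\chN_a(s)}$, which lies in $[-\sum_s R_{\chN_a(s)}, 0]$ by Cover's regret bound applied separately in each sub-experiment. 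The bound applies because the universal portfolio is restarted in each sub-experiment and only units with $\xi_{i,s}(a)=1$ enter its product. Consequently, $\logp{W} - \logp{W^\brackQ} \geq -\sum_s R_{\chN_a(s)}$ deterministically. The lower deviation event is therefore contained in $\{\sum_s R_{\chN_a(s)}/N_a(t) \geq \epsilon\}$, which is exactly the second term of the claimed bound.

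For the upper deviation $\logp{W} - \logp{W^\brackQ} \geq \epsilon N_a(t)$, I will bound $\logp{W} \leq \logp{W^\star}$ and control $W^\star/W^\brackQ$ using the numeraire property. By \cref{lem:numeraire-always-on}, for every fixed $\bbet \in \simplexone$ the process $\prod_{s,i} (\bbet^\trans \bE_{i,s}(a) / \bbet_s^\brackQ(a)^\trans \bE_{i,s}(a))^{\xi_{i,s}(a)}$ is a nonnegative supermartingale with mean at most one. Since $W^\star$ uses a data-dependent $\bbet^\star_s$, I plan to replace this step by a mixture (Beta$(1/2,1/2)$ in each sub-experiment). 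The mixture is still a supermartingale, and the universal-portfolio regret bound shows it is within $\exp\{-\sum_s R_{\chN_a(s)}\}$ of $W^\star/W^\brackQ$. Markov's inequality then gives, for each fixed $n_\undert$, a bound on $\PP(\logp{W^\star/W^\brackQ} \geq \epsilon N_a(t))$ of order $\E[e^{-\epsilon N_a(t)}]$, after absorbing the regret into the second term.

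It remains to evaluate $\E[e^{-\epsilon N_a(t)}]$. Conditionally on $\cF_{s-1}$, $\chN_a(s)$ is a sum of $n_s$ i.i.d.\ Bernoulli variables with success probability $\pi_s(a)+\pi_s(\control) \geq 2p_s \geq p_s$, and the sub-experiments chain through the tower property. Since the schedule is fixed, the MGF bound $(1-q(1-e^{-\epsilon}))^{n_s} \leq \exps{-q(1-e^{-\epsilon})n_s}$ with $q \geq 2p_s$ gives $\exps{-2(1-e^{-\epsilon})\sum_s n_s p_s}$. A union bound over $n_\undert \geq m$ yields the summed first term.

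The delicate step is the uniform-in-$n_\undert$ supremum combined with the data-dependent comparator $\bbet^\star$. I expect the cleanest route to be the one above: use the sign decomposition so that only the upper tail requires probability. Then reduce that upper tail to the supermartingale/Markov bound at each fixed $n_\undert$, and union bound over $n_\undert$ rather than attempting a maximal inequality. Some care is also needed to justify conditioning on $\xi_{i,s}(a)=1$ when chaining the numeraire inequality across units whose inclusion is random.
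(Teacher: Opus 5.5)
Your lower-tail argument is the paper's: $\logp{W^\star_{N_a(t)}} \geq \logp{W^\brackQ_{N_a(t)}}$ by definition of the best-in-hindsight portfolios, together with Cover's regret bound in each restarted sub-experiment, gives $\logp{W_{N_a(t)}} - \logp{W^\brackQ_{N_a(t)}} \geq -\sum_{s} R_{\chN_a(s)}$ pathwise. That yields the second term.

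\textbf{The gap is in the upper tail.} Bounding $\logp{W_{N_a(t)}}$ by $\logp{W^\star_{N_a(t)}}$ and then coming back through a Beta$(1/2,1/2)$ mixture is circular. That mixture is exactly Cover's universal-portfolio wealth, i.e.\ $W_{N_a(t)}$ itself. The detour is also lossy. After it you must bound the probability of $\Set{\logp{W_{N_a(t)}/W^\brackQ_{N_a(t)}} \geq \epsilon N_a(t) - \sum_s R_{\chN_a(s)}}$, which strictly contains the event you started with, since $R_{\chN_a(s)} \geq \logp{2} > 0$.

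This enlarged event cannot be ``absorbed'' into the second term at the same level $\epsilon$, because $\Set{X \geq \epsilon - Y}$ is not contained in $\Set{X \geq \epsilon} \cup \Set{Y \geq \epsilon}$. You would have to split $\epsilon = \epsilon' + (\epsilon - \epsilon')$. That gives $\exps{-2(1-\exps{-\epsilon'})\sum_s n_s p_s}$ in the first term and a regret probability at the smaller threshold $\epsilon - \epsilon'$. The result is strictly weaker than the stated inequality, although it would still suffice for \cref{thm:log-optimality-dynamic-k-armed-experiments}.

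The missing observation is that no comparator is needed for the upper tail. The universal-portfolio bets $\bet_{i,s}(a)$ are $\cF_{i-1,s}$-measurable, so \cref{lem:numeraire-always-on} applies to them directly with $\bbet = \bbet_{i,s}$. The paper therefore does the following:
\begin{enumerate}[(i)]
\item At each fixed $n_\undert$, apply Markov's inequality to $\exps{-\epsilon N_a(t)}\, W_{N_a(t)}/W^\brackQ_{N_a(t)}$, then union bound over $n_\undert \geq m$.
\item Bound the expectation unit by unit via \cref{lem:numeraire-property-on-expected-number-of-units}. Conditioning on $\cF_{i-1,s}$ and splitting on $\xi_{i,s}(a)$, each unit contributes a factor of at most $1 - (\pi_s(a)+\pi_s(\control))(1-\exps{-\epsilon}) \leq \exps{-2p_s(1-\exps{-\epsilon})}$.
\end{enumerate}
This chaining must be carried out jointly with the wealth ratio. $\E_{\QRCT}\Brac{\exps{-\epsilon N_a(t)} W_{N_a(t)}/W^\brackQ_{N_a(t)}}$ does not decouple into something bounded by $\E_{\QRCT}\Brac{\exps{-\epsilon N_a(t)}}$. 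Your binomial-MGF computation produces the right number only once it is done inside this per-unit recursion, which is the care you flagged at the end.

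With this change the upper tail contributes only the first term, at the full $\epsilon$. The regret term then appears once, coming solely from the lower tail, exactly as in the statement.
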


With \cref{lem:deviation-probability-k-armed} in hand, we now proceed to prove \cref{thm:log-optimality-dynamic-k-armed-experiments}. The proof follows two steps. In the first step, we show that the normalized $\logp{W_{N_a(t)}}$ and $\logp{W_{N_a(t)}^\brackQ}$ are almost surely equivalent as the total number of units who enter the experiment goes to infinity. In the second step, we use the aforementioned result to show that in the limit no other process $\tW_{N_a(t)}$ can exhibit a larger growth rate than $W_{N_a(t)}$.

\begin{proof}[Proof of the almost sure equivalence with the oracle process]
    We begin by upper bounding the following probability for any $\epsilon > 0$,
    \begin{equation}
        \PP_{\QRCT}\Paren{\sup_{n_{\undert} \geq m} \frac{1}{N_a(t)} \Abs{\logp{W_{N_a(t)}} - \logp{W_{N_a(t)}^\brackQ}} \geq \epsilon} \mper
    \label{eq:proof-deviation-inequality-sup}
    \end{equation}
    From \cref{lem:deviation-probability-k-armed} we have that
    \begin{align}
        \eqref{eq:proof-deviation-inequality-sup}
        &\leq \underbrace{\sum_{n_\undert = m} \exps{-2(1-\exps{-\epsilon}) \sum_{s \in \cT_t(a)}n_s p_s}}_{\eqref{eq:proof-deviation-inequality-union-bound-decomposition}.(1)} 
        + \underbrace{\PP_{\QRCT}\Paren{\sup_{n_\undert \geq m} \frac{1}{N_a(t)} \sum_{s \in
        \cT_t(a)}R_{\chN_a(s)} \geq \epsilon}}_{
        \eqref{eq:proof-deviation-inequality-union-bound-decomposition}.(2)}
        \label{eq:proof-deviation-inequality-union-bound-decomposition} \mper
    \end{align}

    We now recall the fact that for any random variables $\infseqn{X_n}$ drawn
    from a distribution $\P$,
    \begin{equation}
        \PP_\P\Paren{\lim_{n \to \infty} \Abs{X_n} = 0} = 1 \quad\text{if and
        only if}\quad \forall \epsilon > 0\mcom \quad \lim_{m \to \infty} \PP_{\P}\Paren{\sup_{n
        \geq m} \Abs{X_n} > \epsilon} = 0 \mper
    \end{equation}
    Hence, as we take $m \to \infty$ we can see that $\eqref{eq:proof-deviation-inequality-union-bound-decomposition}.(1) \to 0$ almost surely. Similarly, we have that $\eqref{eq:proof-deviation-inequality-union-bound-decomposition}.(2) \to 0$ as $m \to \infty$, which follows from \cref{lem:path-wise-control-portfolio-regret-warmup}. Thus, we can conclude that
    \begin{equation}
        \lim_{n_\undert \to \infty} \frac{1}{N_a(t)} \Paren{\logp{W_{N_a(t)}} -
        \logp{W_{N_a(t)}^\brackQ}} = 0 \quad\text{$\QRCT$-almost surely} \mper
        \label{eq:proof-almost-sure-equivalence-k-armed}
    \end{equation}
\end{proof}

\begin{proof}[Proof of the log-optimality of $W_{N_a(t)}$]
    We begin by lower bounding the difference as
    \begin{align}
        &\liminf_{n_\undert \to \infty } \frac{1}{N_a(t)}\Paren{\logp{W_{N_a(t)}} -
        \logp{\tW_{N_a(t)}}} \\
        &\quad= \liminf_{n_\undert \to \infty} \frac{1}{N_a(t)} \Paren{\logp{W_{N_a(t)} / W_{N_a(t)}^\brackQ} +
        \logp{W_{N_a(t)}^\brackQ / \tW_{N_a(t)}}} \\
        &\quad\geq \underbrace{\liminf_{n_\undert \to \infty} \frac{1}{N_a(t)} \logp{W_{N_a(t)} /
        W_{N_a(t)}^\brackQ}}_{\eqref{eq:proof-log-optimality-liminf-decomp-k-armed}.(1)} 
        + \underbrace{\liminf_{n_\undert \to \infty} \frac{1}{N_a(t)} \logp{W_{N_a(t)}^\brackQ /
        \tW_{N_a(t)}}}_{\eqref{eq:proof-log-optimality-liminf-decomp-k-armed}.(2)}
        \label{eq:proof-log-optimality-liminf-decomp-k-armed}
    \end{align}
    Notice how $\eqref{eq:proof-log-optimality-liminf-decomp-k-armed}.(1) = 0$
    which follows from the proof of the almost sure equivalence between
    $W_{N_a(t)}$ and the oracle process---the result which is given in \eqref{eq:proof-almost-sure-equivalence-k-armed}.
    Hence, all it remains to show is that
    $\eqref{eq:proof-log-optimality-liminf-decomp-k-armed}.(2) \geq 0$.

    From the arm-wise numeraire property \citep[Lemma B.1]{sandoval2026multi} we have that for any $\epsilon_{n_\undert} > 0$,
    \begin{equation}
        \PP_{\QRCT}\Paren{\tW_{N_a(t)} / W_{N_a(t)}^\brackQ \geq \epsilon_{n_\undert}}
        \leq \frac{1}{\epsilon_{n_\undert}} \mcom
    \end{equation}
    from which we can also conclude that 
    \begin{equation}
        \PP_{\QRCT}\Paren{\frac{1}{N_a(t)}\logp{\tW_{N_a(t)} /
        W_{N_a(t)}^\brackQ} \geq \frac{1}{N_a(t)}\logp{\epsilon_{n_\undert}}}
        \leq \frac{1}{\epsilon_{n_\undert}} \mper
    \end{equation}
    Letting $\epsilon_{n_{\undert}} = n_{\undert}^2$ and summing over
    $n_\undert$ we can see how
    \begin{align}
        \sum_{n_\undert = 1}^\infty \PP_{\QRCT}\Paren{\frac{1}{N_a(t)}
        \logp{\tW_{N_a(t)} / W_{N_a(t)}^\brackQ} \geq \frac{2\logp{n_\undert}}{N_a(t)}}
        &\leq \sum_{n_\undert = 1}^\infty \frac{1}{n_\undert^2} =
        \frac{\pi^2}{6} \mper
    \end{align}
    Hence, by the Borel-Cantelli lemma we have that
    \begin{equation}
        \PP_{\QRCT}\Paren{\limsup_{n_\undert \to \infty} \Paren{\frac{1}{N_a(t)} \logp{\tW_{N_a(t)}  / W_{N_a(t)}^\brackQ} - \frac{2 \logp{n_{\undert}}}{N_a(t)}} > 0} = 0 \mper
    \end{equation}
    By \cref{lem:almost-sure-infinite-total-allocation} and the definition of $n_\undert$ it is the case that $2\logp{n_\undert} / N_a(t) \to 0$ $\QRCT$-almost surely, which together with the above arguments we can conclude that
    \begin{equation}
        \liminf_{n_\undert \to \infty} \frac{1}{N_a(t)} \logp{W_{N_a(t)}^\brackQ / \tW_{N_a(t)}} \geq 0 \quad\text{$\QRCT$-almost surely} \mper
    \end{equation}
    Putting all of the previous steps together we can conclude that
    \begin{equation}
        \liminf_{n_\undert \to \infty } \frac{1}{N_a(t)} \Paren{\logp{W_{N_a(t)}} -
        \logp{\tW_{N_a(t)}}} \geq 0 \quad\text{$\QRCT$-almost surely}\mcom 
    \end{equation}
    completing the proof of the log-optimality of $W_{N_a(t)}$.
\end{proof}

\subsection{Proof of \cref{lem:unbiased-blip-ht-estimator}}
\label{proof:unbiased-blip-ht-estimator}
\begin{proof}[Proof of \cref{lem:unbiased-blip-ht-estimator}]
    Fix a distribution $\P \in \cP \cup \cQ$, a sub-experiment $s \in [t]$, a treatment arm $a \in \cA_s$, and a history $\bh \in \cH_{s-1}$. Taking the
    conditional expectation of the Horvitz-Thompson estimator for treatment $a$
    for units whose history is equal to $\bh$ in sub-experiment $s$, we have that:
    \begin{align}
        &\E_{\P_{\RCT}} \Brac{\hate_{i, s}(a, \bh) \smvert \cF_{i-1, s}, \xi_{i,s}(a, \bh) = 1} \\
        &= \E_{\P_{\RCT}}\Brac{Y_{i, s}\Paren{\frac{\Ind{A_{i, s} = a}}{\hpi_s(a)} 
        - \frac{\Ind{A_{i,s} = \control}}{\hpi_s(\control)}} \smvert \cF_{i-1, s}, \xi_{i,s}(a, \bh) = 1} \\
        &= \E_{\P_{\RCT}}\Brac{Y_{i}(\bZ_{i, s-\ell:s-1}, A_{i,s})
            \Paren{\frac{\Ind{A_{i, s} = a}}{\hpi_s(a)} 
        - \frac{\Ind{A_{i,s} = \control}}{\hpi_s(\control)}} \smvert \cF_{i-1, s}, \xi_{i,s}(a, \bh) = 1} \mcom
        \label{eq:proof-unbiased-blip-def-obs-outcome}
    \end{align}
    where the second equality follows from the definition of the observed outcome in sub-experiment $s$. Due to the linearity of expectation we can see how \eqref{eq:proof-unbiased-blip-def-obs-outcome} can be rewritten as: \begin{align}
        \eqref{eq:proof-unbiased-blip-def-obs-outcome}
        &= \E_{\P_\RCT}\Brac{\frac{Y_i(\bZ_{i, s-\ell:s-1}, a)\Ind{A_{i, s} =
        a}}{\hpi_s(a)} \smvert \cF_{i-1, s}, \xi_{i,s}(a, \bh) = 1}\\ 
        &\quad - \E_{\P_\RCT}\Brac{\frac{Y_i(\bZ_{i,
        s-\ell:s-1}, \control) \Ind{A_{i, s} = \control}}{\hpi_s(\control)}
        \smvert \cF_{i-1, s}, \xi_{i,s}(a, \bh) = 1} \mper
    \end{align}
    Focusing on the first summand we can see how it can be further simplified to:
    \begin{align}
        &\E_{\P_\RCT}\Brac{\frac{Y_i(\bZ_{i, s-\ell:s-1}, a)\Ind{A_{i, s} =
        a}}{\hpi_s(a)} \smvert \cF_{i-1, s}, \xi_{i,s}(a, \bh) = 1} \\
        &\quad= \frac{\E_{\P_\RCT}\Brac{Y_i(\bZ_{i, s-\ell:s-1}, a) \smvert \cF_{i-1, s}, \xi_{i,s}(a, \bh) = 1}
        \E_{\P_\RCT}\Brac{\Ind{A_{i,s} = a} \smvert \cF_{i-1, s}, \xi_{i,s}(a, \bh) = 1}}{\hpi_s(a)} \\
        &\quad= \frac{\E_{\P_\RCT}\Brac{Y_i(\bZ_{i, s-\ell:s-1}, a) \smvert \cF_{i-1, s}, \xi_{i,s}(a, \bh) = 1} \hpi_s(a)}{\hpi_s(a)} \\ 
        &\quad= \E_{\P}\Brac{Y_i(\bZ_{i, s-\ell:s-1}, a) \smvert \cF_{i-1, s}, \xi_{i,s}(a, \bh) = 1} \mcom
    \end{align}
    where the first equality follows from the independence of the treatment assignment (\cref{ass:treatment-assignments}) and the $\cF_{s-1}$-measurability of the propensity score. The second equality follows because the treatment assignments in each sub-experiment are drawn from the same distribution no matter which history $\bh$ the unit experiences, and since
    conditioning on the event $\Set{\xi_{i,s}(a, \bh) = 1}$ implies that $A_{i,s} = a$ or $A_{i,s} = \control$. Hence, we have that 
    \begin{align}
        \E_{\PRCT}\Brac{\Ind{A_{i,s} = a} \smvert \cF_{s-1}, \xi_{i,s}(a) = 1}
        = \PP_{\PRCT}\Paren{A_{i,s} = a \smvert \cF_{s-1}, \xi_{i,s}(a) = 1}
        = \frac{\pi_s(a)}{\pi_s(a) + \pi_s(\control)} \equiv \hpi_s(a) \mper
    \end{align}
    Through the same set of steps one can show that
    \begin{equation}
        \E_{\P}\Brac{\frac{Y_i(\bZ_{i,
        s-\ell:s-1}, \control) \Ind{A_{i, s} = \control}}{\hpi_s(\control)}
        \smvert \cF_{i-1, s}, \xi_{i,s}(a, \bh) = 1} = \E_{\P}\Brac{Y_i(\bZ_{i, s-\ell:s-1}, \control) \smvert \cF_{i-1, s}, \xi_{i,s}(a, \bh) = 1} \mper
    \end{equation}
    Hence, putting the previous steps together we have that
    \begin{equation}
        \E_{\P}\Brac{\hate_{i, s}(a, \bh) \smvert \cF_{i-1, s}, \xi_{i,s}(a, \bh) = 1}
        = \E_{\P}\Brac{Y_i(\bZ_{i, s-\ell:s-1},a) - Y_i(\bZ_{i, s-\ell:s-1}, \control) \smvert \cF_{i-1, s}, \xi_{i,s}(a, \bh) = 1} \mper
    \end{equation}

    The result follows by noting how the potential outcomes for unit $i$ are independent from those of all other units $j \in \cI$ (such that $j \neq i$) as well as their treatment assignments. Furthermore, from \cref{ass:treatment-assignments}$.(i)$ and \cref{ass:markov-outcomes} we have that the expected outcomes in the current sub-experiment $s$ only depend on the observed outcomes and treatment assignments from the previous $\ell$ sub-experiments. We therefore have that
    \begin{align}
        &\E_{\P}\Brac{Y_i(\bZ_{i, s-\ell:s-1},a) - Y_i(\bZ_{i, s-\ell:s-1},  \control) \smvert \cF_{i-1, s}, \xi_{i,s}(a, \bh)} \\
        &\quad = \E_{\P}\Brac{Y_i(\bZ_{i, s-\ell:s-1},a) - Y_i(\bZ_{i, s-\ell:s-1}, \control) \smvert \bH_{i,s-1}=\bh} \equiv \ate_{\P}(a,\bh) \mcom
    \end{align}
    completing the proof.
\end{proof}

\subsection{Proof of \cref{thm:log-optimality-always-on-same-units}}
\label{proof:log-optimality-always-on-same-units}
In this subsection we present the proof of \cref{thm:log-optimality-always-on-same-units}. It follows the same steps as the proof of \cref{thm:log-optimality-dynamic-k-armed-experiments}, but we provide them here for the sake of completeness. Nonetheless, before providing the proof we state the following lemma (whose proof is given in \cref{proof:deviation-probability-always-on-same-units}) that plays a central role in the proof of \cref{thm:log-optimality-always-on-same-units}.

\begin{lemma}
\label{lem:deviation-probability-always-on-same-units}
    Fix an arm $a \in \cA$ and a probability $\Q \in \cQ(a)$. Let $t \in \N$ denote the number of sub-experiments that were run, and let $R_{\chN_{a,\bh}(s)} \coloneqq \logp{\chN_{a,\bh}(s) + 1} / 2 + \logp{2}$ be the upper bound on the portfolio regret of the Universal Portfolio algorithm used to select the portfolios under treatment $a$, for history $\bh$, in each sub-experiment $s \in \cT_t(a)$. Then, under \cref{assumption:dgp-assumption,ass:treatment-assignments,assumption:treatment-schedule,ass:finite-outcomes,ass:markov-outcomes} the following inequality holds for any $\epsilon > 0$,
    \begin{align}
        &\PP_{\QRCT}\Paren{\sup_{n_\undert \geq m} \frac{1}{N_a(t)}\Abs{\logp{W_{N_a(t)}} - \logp{W_{N_a(t)}}^\brackQ} \geq \epsilon} \\
        &\quad\leq \sum_{n_\undert = m}^\infty \exps{-2(1-\exps{-\epsilon}) \sum_{s \in \cT_t(a)}n_s p_s} + \PP_{\QRCT}\Paren{\sup_{n_\undert \geq m} \frac{1}{N_a(t)} \sum_{s \in \cT_t(a)} \sum_{\bh \in \cH_{s-1}} R_{\chN_{a,\bh}(s)} \geq \epsilon} \mper
    \end{align}
\end{lemma}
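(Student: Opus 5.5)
The plan is to mirror the (deferred) proof of \cref{lem:deviation-probability-k-armed}, with each pair (sub-experiment $s$, history $\bh \in \cH_{s-1}$) playing the role that a sub-experiment alone played before. The absolute difference splits via the best-in-hindsight process $W_{N_a(t)}^\star$ of the stratified form:
\begin{equation}
    \Abs{\logp{W_{N_a(t)}} - \logp{W_{N_a(t)}^\brackQ}} \leq \Paren{\logp{W_{N_a(t)}^\star} - \logp{W_{N_a(t)}}} + \Abs{\logp{W_{N_a(t)}^\star} - \logp{W_{N_a(t)}^\brackQ}} \mper
\end{equation}
Since the portfolios in \eqref{eq:portfolio-selection-strategy-histories} are independent Universal Portfolio runs, one per $(s,\bh)$, and $W_{N_a(t)}$ and $W^\star_{N_a(t)}$ both factor over $(s,\bh)$, the first term equals $\sum_{s \in \cT_t(a)}\sum_{\bh \in \cH_{s-1}} \cR_{\chN_{a,\bh}(s)}$. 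By the regret bound of \citet{cover1996universal}, this is at most $\sum_s\sum_{\bh} R_{\chN_{a,\bh}(s)}$ pathwise. That produces the second probability in the statement, once we use the usual split $\Set{X+Y\ge \epsilon}$ (with $\epsilon$ rescaled, which only changes constants) and a union bound.

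The second term is nonnegative apart from the sign issue, because $W^\star \geq W^\brackQ$ by definition of best in hindsight. So the task is to bound the probability that the normalized log-ratio $\logp{W^\star/W^\brackQ}/N_a(t)$ exceeds $\epsilon$, uniformly over $n_\undert \geq m$. I will follow the Waudby-Smith et al.\ route and use the numeraire property in \cref{lem:numeraire-always-on}, second part. For any fixed collection of portfolios $\Paren{\bbet_{s,\bh}}$, the process $\prod_{s,i,\bh}\Paren{\bbet_{s,\bh}^\trans \bE_{i,s}(a,\bh)/\bbet_s^\brackQ(a,\bh)^\trans\bE_{i,s}(a,\bh)}^{\xi_{i,s}(a,\bh)}$ is a nonnegative $\QRCT$-supermartingale with respect to $\cF_{i,s}$. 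The conditioning argument is the same law-of-total-expectation step used in the proof of \cref{lem:test-supermartingale-general-form}, with $\xi_{i,s}(a,\bh)$ in place of $\xi_{i,s}(a)$. Here \cref{ass:markov-outcomes} is what makes the conditional ratio expectation bounded by one given $\cF_{i-1,s}$, and not merely given $\cF_{s-1}$.

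To pass from fixed portfolios to the data-dependent $\bbet^\star$, I discretize the finite-dimensional set $\simplexone^{\sum_s|\cH_{s-1}|}$. This is finite because \cref{ass:finite-outcomes} together with the fixed schedule makes $\cH_{s-1}$ a finite set. Alternatively, one can exploit the concavity of $\log$ in $\bbet$ and the boundedness of the increments (under \cref{ass:finite-outcomes} the ratios $g_s(\hate)/g_s(\delta)$ are bounded) to get a Chernoff-type bound. The target form $\exps{-2(1-\exps{-\epsilon})\sum_s n_s p_s}$ suggests the following. On the event that the log-ratio exceeds $\epsilon N_a(t)$, apply a Markov/Ville bound to the product with exponent tilted toward $\bbet^\star$. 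Combine this with a Hoeffding lower bound showing $N_a(t) \geq \sum_s n_s p_s$ up to a constant fraction, which holds because each unit satisfies $\xi_{i,s}(a)=1$ with probability at least $2p_s$ by positivity. Then sum over $n_\undert \geq m$ by a union bound.

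\textbf{The main obstacle} is this last step. It requires controlling the supremum over the random best-in-hindsight portfolios when $N_a(t)$ and the stratum counts $\chN_{a,\bh}(s)$ are random. It also requires getting exactly the stated exponent, in which the constant $1-\exps{-\epsilon}$ comes from the Hoeffding/Ville combination. I would first try to copy the argument of \cref{proof:deviation-probability-k-armed} verbatim, replacing $\xi_{i,s}(a)$ by $\xi_{i,s}(a,\bh)$ and noting that $\sum_{\bh}\xi_{i,s}(a,\bh)=\xi_{i,s}(a)$. Because the strata partition the units with $\xi_{i,s}(a)=1$, the concentration of $N_a(t)$ is unchanged, and the only new ingredient is the double sum over $\bh$ in the regret term.
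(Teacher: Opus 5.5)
Your regret side is fine. The gap is in the other tail: your decomposition routes the event on which $W_{N_a(t)}$ beats the oracle through the best-in-hindsight process, and the step you flag as the main obstacle is never closed.

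Writing $\Abs{\logp{W_{N_a(t)}}-\logp{W^\brackQ_{N_a(t)}}}\leq(\logp{W^\star_{N_a(t)}}-\logp{W_{N_a(t)}})+(\logp{W^\star_{N_a(t)}}-\logp{W^\brackQ_{N_a(t)}})$ forces you to bound the upper tail of $\logp{W^\star_{N_a(t)}/W^\brackQ_{N_a(t)}}$. But $\bbet_s^\star(a,\bh)$ is chosen using all of sub-experiment $s$, so this ratio is not a supermartingale, and \cref{lem:numeraire-always-on} says nothing about it directly.

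Your discretization and Chernoff repairs are only sketched, and they could not return the stated inequality anyway. A union bound over a grid of portfolios multiplies the exponential by the grid size. That size must grow with $n_\undert$, because $\logp{\bbet^\trans\bE_{i,s}(a,\bh)}$ is not Lipschitz near $\bet=1$ when $g_s(\hate_{i,s}(a,\bh))=0$. Splitting $\Set{X+Y\geq\epsilon}$ also replaces $\epsilon$ by $\epsilon/2$ in both terms, whereas the statement has exactly $\epsilon$ and exactly $\exps{-2(1-\exps{-\epsilon})\sum_{s}n_sp_s}$. Finally, no Hoeffding bound on $N_a(t)$ is involved in producing that exponent.

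The missing idea is to union-bound the two one-sided events of the absolute value, with no rescaling of $\epsilon$, and treat them asymmetrically.

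For $\Set{\logp{W_{N_a(t)}/W^\brackQ_{N_a(t)}}\geq\epsilon N_a(t)}$, use that your own Universal Portfolio bets $\bet_{i,s}(a,\bh)$ are predictable. Union-bound over $n_\undert\geq m$, apply Markov to $\exps{-\epsilon N_a(t)}W_{N_a(t)}/W^\brackQ_{N_a(t)}$, and peel off units backwards. Given $\cF_{i-1,s}$, the stratum $\bh=\bH_{i,s-1}$ of unit $i$ is known. The numeraire property, $1+x\leq e^x$, and positivity then yield
\begin{equation}
  \E_{\QRCT}\Brac{\exps{-\epsilon\xi_{i,s}(a,\bh)}\Paren{\frac{\bbet_{i,s}(a,\bh)^\trans\bE_{i,s}(a,\bh)}{\bbet_s^\brackQ(a,\bh)^\trans\bE_{i,s}(a,\bh)}}^{\xi_{i,s}(a,\bh)}\smvert\cF_{i-1,s}}\leq1-\Paren{\pi_s(a)+\pi_s(\control)}\Paren{1-\exps{-\epsilon}}\leq\exps{-2p_s\Paren{1-\exps{-\epsilon}}}\mper
\end{equation}
This is \cref{lem:numeraire-property-on-expected-number-of-units}, and it absorbs the randomness of $N_a(t)$ inside the expectation without any concentration argument.

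Only for the reverse event $\Set{\logp{W^\brackQ_{N_a(t)}/W_{N_a(t)}}\geq\epsilon N_a(t)}$ do you bring in $W^\star$. You have $W^\brackQ_{N_a(t)}\leq W^\star_{N_a(t)}$ by the hindsight maximization in each $(s,\bh)$. You also have $\logp{W^\star_{N_a(t)}/W_{N_a(t)}}\leq\sum_{s\in\cT_t(a)}\sum_{\bh\in\cH_{s-1}}R_{\chN_{a,\bh}(s)}$ by the per-run regret bound. Together these give the second term exactly.

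Your closing fallback, copying the $k$-armed argument with $\xi_{i,s}(a,\bh)$, points in this direction. As written, though, the proposal's main line does not reach the statement.
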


Having stated \cref{lem:deviation-probability-always-on-same-units}, we now prove \cref{thm:log-optimality-always-on-same-units}. The proof proceeds in two steps. In the first step, we show that $W_{N_a(t)}$ is almost surely equivalent to a process that has oracle access to all the numeraire portfolios, $W_{N_a(t)}^\brackQ$. We then use this result in the second step to argue about the log-optimality of our process.

\begin{proof}[Proof of the almost sure equivalence with the oracle process]
    We begin by upper bounding the following probability for any $\epsilon > 0$,
    \begin{equation}
        \PP_{\QRCT}\Paren{\sup_{n_{\undert} \geq m} \frac{1}{N_a(t)} \Abs{\logp{W_{N_a(t)}} - \logp{W_{N_a(t)}^\brackQ}} \geq \epsilon} \mper
    \label{eq:proof-deviation-inequality-sup-same-units}
    \end{equation}
    From \cref{lem:deviation-probability-always-on-same-units} we have that
    \begin{align}
        \eqref{eq:proof-deviation-inequality-sup-same-units}
        &\leq \underbrace{\sum_{n_\undert = m}\exps{-2(1-\exps{-\epsilon}) \sum_{s \in \cT_t(a)}n_s p_s}}_{\eqref{eq:proof-deviation-inequality-union-bound-decomposition-same-units}.(1)} 
        + \underbrace{\PP_{\QRCT}\Paren{ \sup_{n_\undert \geq m} \frac{1}{N_a(t)} \sum_{s \in
        \cT_t(a)} \sum_{\bh \in \cH_{s-1}} R_{\chN_{a,\bh}(s)} \geq \epsilon}}_{
        \eqref{eq:proof-deviation-inequality-union-bound-decomposition-same-units}.(2)}
        \label{eq:proof-deviation-inequality-union-bound-decomposition-same-units} \mper
    \end{align}

    We now recall the fact that for any random variables $\infseqn{X_n}$ drawn
    from a distribution $\P$,
    \begin{equation}
        \PP_\P\Paren{\lim_{n \to \infty} \Abs{X_n} = 0} = 1 \quad\text{if and
        only if}\quad \forall \epsilon > 0\mcom \quad \lim_{m \to \infty} \PP_{\P}\Paren{\sup_{n
        \geq m} \Abs{X_n} > \epsilon} = 0 \mper
    \end{equation}
    Hence, as we take $m \to \infty$ we can see that $\eqref{eq:proof-deviation-inequality-union-bound-decomposition-same-units}.(1)
    \to 0$ almost surely. Similarly, we have that
    $\eqref{eq:proof-deviation-inequality-union-bound-decomposition-same-units}.(2) \to 0$ as $m \to \infty$, which follows from \cref{lem:path-wise-control-portfolio-regret-same-units}. Thus, we can conclude that 
    \begin{equation}
        \lim_{n_\undert \to \infty} \frac{1}{N_a(t)} \Paren{\logp{W_{N_a(t)}} -
        \logp{W_{N_a(t)}^\brackQ}} = 0 \quad\text{$\QRCT$-almost surely} \mcom
        \label{eq:proof-almost-sure-equivalence-always-on-same-units}
    \end{equation}
    completing the proof of the almost sure equivalence between our process and the oracle process.
\end{proof}

\begin{proof}[Proof of the log-optimality of $W_{N_a(t)}$]
    We begin by lower bounding the difference as
    \begin{align}
        &\liminf_{n_\undert \to \infty} \frac{1}{N_a(t)}\Paren{\logp{W_{N_a(t)}} -
        \logp{\tW_{N_a(t)}}} \\
        &\quad= \liminf_{n_\undert \to \infty} \frac{1}{N_a(t)}\Paren{\logp{W_{N_a(t)} / W_{N_a(t)}^\brackQ} +
        \logp{W_{N_a(t)}^\brackQ / \tW_{N_a(t)}}} \\
        &\quad\geq \underbrace{\liminf_{n_\undert \to \infty} \frac{1}{N_a(t)}\logp{W_{N_a(t)} /
        W_{N_a(t)}^\brackQ}}_{\eqref{eq:proof-log-optimality-liminf-decomp-always-on-same-units}.(1)} 
        + \underbrace{\liminf_{n_\undert \to \infty} \frac{1}{N_a(t)}\logp{W_{N_a(t)}^\brackQ /
        \tW_{N_a(t)}}}_{\eqref{eq:proof-log-optimality-liminf-decomp-always-on-same-units}.(2)}
        \label{eq:proof-log-optimality-liminf-decomp-always-on-same-units}
    \end{align}
    Notice how $\eqref{eq:proof-log-optimality-liminf-decomp-always-on-same-units}.(1) = 0$
    which follows from the proof of the almost sure equivalence between
    $W_{N_a(t)}$ and the oracle process \eqref{eq:proof-almost-sure-equivalence-always-on-same-units}.
    Hence, all it remains to show is that
    $\eqref{eq:proof-log-optimality-liminf-decomp-always-on-same-units}.(2) \geq 0$.

    From the arm-wise numeraire property~\citep[Lemma B.1]{sandoval2026multi} we have that for any $\epsilon_{n_\undert} > 0$,
    \begin{equation}
        \PP_{\QRCT}\Paren{\tW_{N_a(t)} / W_{N_a(t)}^\brackQ \geq \epsilon_{n_\undert}}
        \leq \frac{1}{\epsilon_{n_\undert}} \mcom
    \end{equation}
    from which we can also conclude that 
    \begin{equation}
        \PP_{\QRCT}\Paren{\frac{1}{N_a(t)}\logp{\tW_{N_a(t)} /
        W_{N_a(t)}^\brackQ} \geq \frac{1}{N_a(t)}\logp{\epsilon_{n_\undert}}}
        \leq \frac{1}{\epsilon_{n_\undert}} \mper
    \end{equation}
    Letting $\epsilon_{n_{\undert}} = n_{\undert}^2$ and summing over
    $n_\undert$ we can see how
    \begin{align}
        \sum_{n_\undert = 1}^\infty \PP_{\QRCT}\Paren{\frac{1}{N_a(t)}
        \logp{\tW_{N_a(t)} / W_{N_a(t)}^\brackQ} \geq \frac{2\logp{n_\undert}}{N_a(t)}}
        &\leq \sum_{n_\undert = 1}^\infty \frac{1}{n_\undert^2} =
        \frac{\pi^2}{6} \mper
    \end{align}
    Hence, by the Borel-Cantelli lemma we have that
    \begin{equation}
        \PP_{\QRCT}\Paren{\limsup_{n_\undert \to \infty} \Paren{\frac{1}{N_a(t)} \logp{\tW_{N_a(t)} / W_{N_a(t)}^\brackQ} - \frac{2\logp{n_\undert}}{N_a(t)}} > 0} = 0 \mper
    \end{equation}
    By \cref{lem:almost-sure-infinite-total-allocation} and the definition of $n_\undert$ it is the case that $2\logp{n_\undert} / N_a(t) \to 0$ $\RCT$-almost surely, which together with the above arguments we can conclude that
    \begin{equation}
        \liminf_{n \to \infty} \frac{1}{N_a(t)} \logp{W_{N_a(t)}^\brackQ/ \tW_{N_a(t)}} \geq 0 \quad\text{$\QRCT$-almost surely}  \mper
    \end{equation}
    Putting the steps together we can conclude that
    \begin{equation}
        \liminf_{n_\undert \to \infty } \frac{1}{N_a(t)}\Paren{\logp{W_{N_a(t)}} -
        \logp{\tW_{N_a(t)}}} \geq 0 \quad\text{$\QRCT$-almost surely}\mcom 
    \end{equation}
    completing the proof of the log-optimality of $W_{N_a(t)}$.
\end{proof}

\section{Auxiliary Results}

\subsection{Proof of \cref{lem:deviation-probability-k-armed}}
\label{proof:deviation-probability-k-armed}
\begin{proof}[Proof of \cref{lem:deviation-probability-k-armed}]
    Our aim is to control the following probability for any $\epsilon > 0$,
    \begin{equation}
        \PP_{\QRCT}\Paren{\sup_{n_\undert \geq m} \frac{1}{N_a(t)} \Abs{\logp{W_{N_a(t)}} - \logp{W_{N_a(t)}^\brackQ}} \geq \epsilon} \mper
        \label{eq:proof-deviation-probability-k-armed}
    \end{equation}
    We start by upper bounding it by the probability of the upper and lower deviations
    \begin{equation}
        \eqref{eq:proof-deviation-probability-k-armed} 
        \leq \underbrace{\PP_{\QRCT}\Paren{\sup_{n_\undert \geq m} \frac{1}{N_a(t)} \logp{W_{N_a(t)} / W_{N_a(t)}^\brackQ} \geq \epsilon}}_{\eqref{eq:proof-deviation-probability-decomposition-k-armed}.(1)} + \underbrace{\PP_{\QRCT}\Paren{\sup_{n_\undert \geq m} \frac{1}{N_a(t)} \logp{W_{N_a(t)}^\brackQ / W_{N_a(t)}} \geq \epsilon}}_{\eqref{eq:proof-deviation-probability-decomposition-k-armed}.(2)} \mcom
        \label{eq:proof-deviation-probability-decomposition-k-armed}
    \end{equation}
    and we proceed by analyzing the terms in the upper bound independently.

    We first analyze $\eqref{eq:proof-deviation-probability-decomposition-k-armed}.(1)$, and proceed as follows:
    \begin{align}
        \eqref{eq:proof-deviation-probability-decomposition-k-armed}.(1)
        &\leq \sum_{n_\undert = m}^\infty  \PP_{\QRCT}\Paren{\frac{1}{N_a(t)} \logp{W_{N_a(t)} / W_{N_a(t)}^\brackQ} \geq \epsilon} \\
        &= \sum_{n_\undert = m}^\infty \PP_{\QRCT}\Paren{W_{N_a(t)} / W_{N_a(t)}^\brackQ \geq \exps{N_a(t) \epsilon}} \\
        &= \sum_{n_\undert = m}^\infty \PP_{\QRCT}\Paren{\exps{- N_a(t) \epsilon}\Paren{W_{N_a(t)} / W_{N_a(t)}^\brackQ} \geq 1} \\
        &\leq \sum_{n_\undert = m}^\infty \E_{\QRCT}\Brac{\exps{- N_a(t) \epsilon} \Paren{W_{N_a(t)} / W_{N_a(t)}^\brackQ}} \\
        &\leq \sum_{n_\undert = m}^\infty \exps{-2(1-\exps{-\epsilon}) \sum_{s \in \cT_t(a)}n_s p_s}
    \end{align}
    where the first inequality follows from a union bound, the second inequality is due to Markov's inequality, and the third inequality is due to \cref{lem:numeraire-property-on-expected-number-of-units}.

    We now turn our attention to $\eqref{eq:proof-deviation-probability-decomposition-k-armed}.(2)$, and recall that by definition $\logp{W_{N_a(t)}^\star} \geq \logp{W_{N_a(t)}^\brackQ}$. Hence, we have the following series of upper bounds
    \begin{align}
        \eqref{eq:proof-deviation-probability-decomposition-k-armed}.(2)
        &\leq \PP_{\QRCT}\Paren{\sup_{n_\undert \geq m} \frac{1}{N_a(t)}\Paren{\logp{W_{N_a(t)}^\star} - \logp{W_{N_a(t)}}} \geq  \epsilon} \\
        &= \PP_{\QRCT}\Paren{\sup_{n_\undert \geq m} \frac{1}{N_a(t)}\Paren{\sum_{s \in \cT_t(a)}\Brac{\sum_{i \in \cI_s}\xi_{i,s}(a)\logp{\bbet_s^\star(a)^\trans \bE_{i,s}(a)} - \sum_{i \in \cI_s}\xi_{i,s}(a)\logp{\bbet_{i,s}(a)^\trans \bE_{i,s}(a)}}} \geq  \epsilon} \\
        &\leq \PP_{\QRCT}\Paren{\sup_{n_\undert \geq m} \frac{1}{N_a(t)}\sum_{s \in \cT_t(a)} R_{\chN_a(s)} \geq \epsilon} \mcom
    \end{align}
    where the second inequality follows because for each arm and in each sub-experiment we run a new instance of the Universal Portfolio algorithm~\citep{cover1991universal} whose upper bound on the portfolio regret is denoted by $R_{\chN_a(s)}$ for each sub-experiment $s \in \cT_t(a)$. 

    Putting the above steps together we have that
    \begin{align}
        &\PP_{\QRCT}\Paren{\sup_{n_\undert \geq m} \frac{1}{N_a(t)} \Abs{\logp{W_{N_a(t)}} - \logp{W_{N_a(t)}^\brackQ}} \geq \epsilon}\\ 
        &\quad\leq \sum_{n_\undert = m}^\infty \exps{-2(1-\exps{-\epsilon}) \sum_{s \in \cT_t(a)}n_s p_s} + \PP_{\QRCT}\Paren{\sup_{n_\undert \geq m} \frac{1}{N_a(t)} \sum_{s \in \cT_t(a)}R_{\chN_a(s)} \geq \epsilon} \mcom
    \end{align}
    completing the proof.
\end{proof}

\subsection{Proof of \cref{lem:deviation-probability-always-on-same-units}}
\label{proof:deviation-probability-always-on-same-units}

\begin{proof}[Proof of \cref{lem:deviation-probability-always-on-same-units}]
    Our aim is to control the following probability for any $\epsilon > 0$,
    \begin{equation}
        \PP_{\QRCT}\Paren{\sup_{n_\undert \geq m} \frac{1}{N_a(t)} \Abs{\logp{W_{N_a(t)}} - \logp{W_{N_a(t)}^\brackQ}} \geq \epsilon} \mper
        \label{eq:proof-deviation-probability-always-on-same-units}
    \end{equation}
    We start by upper bounding it by the probability of the upper and lower deviations
    \begin{equation}
        \eqref{eq:proof-deviation-probability-always-on-same-units} 
        \leq \underbrace{\PP_{\QRCT}\Paren{\sup_{n_\undert \geq m} \frac{1}{N_a(t)} \logp{W_{N_a(t)} / W_{N_a(t)}^\brackQ} \geq \epsilon}}_{\eqref{eq:proof-deviation-probability-decomposition-always-on-same-units}.(1)} + \underbrace{\PP_{\QRCT}\Paren{\sup_{n_\undert \geq m}\frac{1}{N_a(t)} \logp{W_{N_a(t)}^\brackQ / W_{N_a(t)}} \geq \epsilon}}_{\eqref{eq:proof-deviation-probability-decomposition-always-on-same-units}.(2)} \mcom
        \label{eq:proof-deviation-probability-decomposition-always-on-same-units}
    \end{equation}
    and we proceed by analyzing the terms in the upper bound independently.

    We first analyze $\eqref{eq:proof-deviation-probability-decomposition-always-on-same-units}.(1)$, and proceed as follows:
    \begin{align}
        \eqref{eq:proof-deviation-probability-decomposition-always-on-same-units}.(1)
        &\leq \sum_{n_\undert = m}^\infty \PP_{\QRCT}\Paren{\frac{1}{N_a(t)}\logp{W_{N_a(t)} / W_{N_a(t)}^\brackQ} \geq \epsilon} \\
        &= \sum_{n_\undert = m}^\infty \PP_{\QRCT}\Paren{W_{N_a(t)} / W_{N_a(t)}^\brackQ \geq \exps{N_a(t) \epsilon}} \\
        &= \sum_{n_\undert = m}^\infty \PP_{\QRCT}\Paren{\exps{- N_a(t) \epsilon}\Paren{W_{N_a(t)} / W_{N_a(t)}^\brackQ} \geq 1} \\
        &\leq \sum_{n_\undert = m}^\infty \E_{\QRCT}\Brac{\exps{- N_a(t) \epsilon} \Paren{W_{N_a(t)} / W_{N_a(t)}^\brackQ}} \\
        &\leq \sum_{n_\undert = m}^\infty \exps{- 2(1-\exps{-\epsilon}) \sum_{s \in \cT_t(a)}n_s p_s}
    \end{align}
    where the first inequality is due to a union bound, the second inequality is due to Markov's inequality, and the third inequality is due to \cref{lem:numeraire-property-on-expected-number-of-units}.

    We now turn our attention to $\eqref{eq:proof-deviation-probability-decomposition-always-on-same-units}.(2)$. Notice how by definition of $\bbet_s^\star(a, \bh)$ and $\bbet_s^\brackQ(a, \bh)$ we have the following upper bound for each $s \in \cT_t(a)$,
    \begin{equation}
        \sum_{\bh \in \cH_{s-1}} \sum_{i \in \cI_s} \xi_{i,s}(a, \bh) \logp{\bbet_s^\star(a, \bh)^\trans \bE_{i,s}} \geq
        \sum_{\bh \in \cH_{s-1}} \sum_{i \in \cI_s}\xi_{i,s}(a, \bh)\logp{\bbet_s^\brackQ(a, \bh)^\trans \bE_{i,s}} \mper
    \end{equation}
    From the portfolio regret of the Universal Portfolio algorithm~\citep{cover1996universal} we have that for each $s \in \cT_t(a)$,
    \begin{equation}
        \sum_{\bh \in \cH_{s-1}} \sum_{i \in \cI_s} \xi_{i,s}(a, \bh)\logp{\bbet_s^\star(a, \bh)^\trans \bE_{i,s}} - 
        \sum_{\bh \in \cH_{s-1}} \sum_{i \in \cI_s} \xi_{i,s}(a, \bh)\logp{\bbet_{i,s}(a, \bh)^\trans \bE_{i,s}} \leq \sum_{\bh \in \cH_{s-1}} R_{\chN_{a, \bh}(s)} \mper
    \end{equation}
    Putting the inequalities from the above two displays together allows us to upper bound $\eqref{eq:proof-deviation-probability-decomposition-always-on-same-units}.(2)$ as
    \begin{align}
        \eqref{eq:proof-deviation-probability-decomposition-always-on-same-units}.(2)
        &\leq \PP_{\QRCT}\Paren{\sup_{n_\undert \geq m} \frac{1}{N_a(t)} \sum_{s \in \cT_t(a)} \sum_{\bh \in \cH_{s-1}} R_{\chN_{a,\bh}(a)} \geq\epsilon} \mper
    \end{align}
    Hence, all of the above steps allow us to conclude that
    \begin{align}
        &\PP_{\QRCT}\Paren{\sup_{n_\undert \geq m} \frac{1}{N_a(t)}\Abs{\logp{W_{N_a(t)}} - \logp{W_{N_a(t)}^\brackQ}} \geq \epsilon} \\
        &\quad\leq \sum_{n_\undert = m}^\infty \exps{-2(1-\exps{-\epsilon}) \sum_{s \in \cT_t(a)}n_s p_s} + \PP_{\QRCT}\Paren{\sup_{n_\undert \geq m} \frac{1}{N_a(t)} \sum_{s \in \cT_t(a)} \sum_{\bh \in \cH_{s-1}} R_{\chN_{a,\bh}(s)} \geq \epsilon} \mcom
    \end{align}
    completing the proof.
\end{proof}

\subsection{Path-wise control of portfolio regret}
\begin{lemma}
    \label{lem:path-wise-control-portfolio-regret-warmup}
    Fix a distribution $\P \in \cP \cup \cQ$ and for any $n \in \N$ let $R_{n} \coloneqq \logp{n + 1} / 2 + \logp{2}$ denote the portfolio regret bound from the Universal Portfolio algorithm \citep{cover1996universal}. We then have that for any $\epsilon > 0$,
    \begin{equation}
        \lim_{m \to \infty} \PP_{\PRCT}\Paren{\sup_{n_\undert \geq m} \sum_{s \in \cT_t(a)}  \frac{R_{\chN_a(s)}}{N_a(t)} \geq \epsilon} = 0 \mper
    \end{equation}
\end{lemma}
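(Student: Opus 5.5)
Since $t$ and the schedule $\cT_t(a)$ are fixed, the sum $\sum_{s \in \cT_t(a)} R_{\chN_a(s)}/N_a(t)$ has at most $t$ terms. It therefore suffices to show that each term tends to zero $\PRCT$-almost surely as $n_\undert \to \infty$. Almost-sure convergence of the (finite) sum to zero then yields the statement through the equivalence already used in the paper:
\begin{equation}
    X_n \to 0 \text{ a.s.} \iff \forall \epsilon > 0,\ \lim_{m \to \infty} \PP\Paren{\sup_{n \geq m} \Abs{X_n} > \epsilon} = 0 \mper
\end{equation}
If $a$ is never active, then $N_a(t) = 0$ and the convention of \cref{def:log-optimality-dynamic-experiments} makes the claim trivial, so we may assume $\cT_t(a) \neq \varnothing$.

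The key step is a pathwise deterministic bound. Fix $s \in \cT_t(a)$. Since $\chN_a(s) \leq N_a(t)$, we have
\begin{equation}
    \frac{R_{\chN_a(s)}}{N_a(t)} \leq \frac{\logp{N_a(t) + 1}/2 + \logp{2}}{N_a(t)} \mcom
\end{equation}
so the whole sum is bounded by $\Abs{\cT_t(a)} \cdot \Paren{\logp{N_a(t)+1}/2 + \logp{2}}/N_a(t)$. The function $x \mapsto (\logp{x+1}/2 + \logp 2)/x$ is decreasing and tends to $0$ as $x \to \infty$. Hence it only remains to show that $N_a(t) \to \infty$ almost surely as $n_\undert \to \infty$.

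This is the main obstacle. Each unit $i \in \cI_s$ with $s \in \cT_t(a)$ has $\xi_{i,s}(a) = 1$ with conditional probability $\pi_s(a) + \pi_s(\control) \geq 2p_s > 0$, by positivity in \cref{ass:treatment-assignments}. The assignments within sub-experiment $s$ are drawn \iidtext{} given $\cF_{s-1}$. By the strong law of large numbers, or more robustly by a Chernoff/Hoeffding bound of the form
\begin{equation}
    \PP\Paren{\chN_a(s) \leq n_s p_s \smvert \cF_{s-1}} \leq \exps{-n_s p_s^2/2} \mcom
\end{equation}
followed by Borel--Cantelli over $n_\undert$ (the bounds are summable because $n_s \geq n_\undert$), we get that eventually $\chN_a(s) \geq n_s p_s \geq n_\undert p_s$ almost surely. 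Because $p_s$ is deterministic, the conditional bound integrates to an unconditional one. This is presumably the content of \cref{lem:almost-sure-infinite-total-allocation}, which is cited elsewhere, and I would invoke or prove it in this way.

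The one subtlety is that $n_\undert \to \infty$ is a limit along a possibly non-monotone indexing of the $(n_s)_s$. I would handle this by bounding
\begin{equation}
    \PP\Paren{\sup_{n_\undert \geq m} \cdots}
\end{equation}
directly via a union bound over $n_\undert \geq m$ of the above exponential tails. This produces a tail of a convergent series, which vanishes as $m \to \infty$. On the complementary event we have $N_a(t) \geq n_\undert \min_s p_s \geq m \min_s p_s$, so the deterministic bound from the previous step is below $\epsilon$ for $m$ large. Combining these two pieces gives the stated limit.
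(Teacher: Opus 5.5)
Your argument is correct, but it differs from the paper's in two places.

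The paper bounds the sum by $\sum_{s \in \cT_t(a)} R_{\chN_a(s)}/\chN_a(s)$, using $\chN_a(s)/N_a(t) \leq 1$. This requires every $\chN_a(s) \to \infty$, which the paper obtains from \cref{lem:almost-sure-infinite-allocation} via the strong law of large numbers. It then asserts that pathwise convergence makes the supremum over $n_\undert \geq m$ vanish almost surely, and concludes by dominated convergence of the indicator.

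You instead use monotonicity of $R_n$ to bound the whole sum by $\Abs{\cT_t(a)}\, R_{N_a(t)}/N_a(t)$. This only requires $N_a(t) \to \infty$, so a single growing active sub-experiment suffices. You also replace the strong law with a conditional Hoeffding bound plus a union bound over $n_\undert \geq m$.

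The paper's route is shorter and reuses an existing lemma. Yours gives an explicit bound on the probability and avoids dividing by $\chN_a(s)$, which can be zero. It is also more robust on two points the paper glosses over:
\begin{enumerate}[(i)]
    \item Only the deterministic floor $p_s$ enters your bound, so it does not matter that $\propscore_s$ is $\cF_{s-1}$-measurable and may change as the earlier $n_r$ grow. The strong-law argument implicitly treats $\RCT_s$ as a fixed law along a single sequence.
    \item The union bound handles the supremum over the possibly non-monotone configurations directly. It needs only the same countable indexing by $n_\undert$ that the paper itself uses in \cref{lem:deviation-probability-k-armed}.
\end{enumerate}
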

\begin{proof}
    Fix a distribution $\P \in \cP \cup \cQ$. We start by upper bounding the normalized experiment-wide regret as
    \begin{equation}
        \sum_{s \in \cT_t(a)} \frac{1}{N_a(t)} R_{\chN_a(s)}
        = \sum_{s \in \cT_t(a)} \frac{\chN_a(s)}{N_a(t)} \frac{R_{\chN_a(s)}}{\chN_a(s)} \leq \sum_{s \in \cT_t(a)} \frac{R_{\chN_a(s)}}{\chN_a(s)} \mcom
    \end{equation}
    where the inequality follows from the definition of $N_a(t) = \sum_{s \in \cT_t(a)} \chN_a(s)$ which implies that for any $s \in \cT_t(a)$, $\chN_a(s) / N_a(t) \leq 1$.

    From assumption we have that for every $s \in [t]$, $n_s \to \infty$ as $n_\undert \to \infty$. When combined with \cref{lem:almost-sure-infinite-allocation}, which tells us that $\chN_a(s) \to \infty$ $\PRCT$-almost surely as $n_s \to \infty$, we can conclude that
    \begin{equation}
        \sum_{s \in \cT_t(a)} \frac{R_{\chN_a(s)}}{N_a(t)} \leq
        \sum_{s \in \cT_t(a)} \frac{R_{\chN_a(s)}}{\chN_a(s)} \to 0 \quad\text{$\PRCT$-almost surely}
    \end{equation}
    because $R_n = \logp{n + 1} / 2 + \logp{2}$ for every $n \in \N$. Since the convergence holds pathwise we can conclude that as $m \to \infty$,
    \begin{equation}
        \sup_{n_\undert  \geq m} \sum_{s \in \cT_t(a)} \frac{R_{\chN_a(s)}}{N_a(t)} \to 0 \quad\text{$\PRCT$-almost surely} \mcom
    \end{equation}
    which in turn implies that for any $\epsilon > 0$ as $m \to \infty$,
    \begin{equation}
        \Ind{\sup_{n_\undert \geq m} \sum_{s \in \cT_t(a)} \frac{R_{\chN_a(s)}}{N_a(t)} \geq \epsilon} \to 0 \quad\text{$\PRCT$-almost surely}\mper
    \end{equation}
    Putting all of the above steps together we have that as $m \to \infty$,
    \begin{equation}
        \PP_{\PRCT}\Paren{\sup_{n_\undert \geq m} \sum_{s \in \cT_t(a)} \frac{R_{\chN_a(s)}}{N_a(t)} \geq \epsilon}
        = \E_{\PRCT}\Brac{\Ind{\sup_{n_\undert \geq m} \sum_{s \in \cT_t(a)} \frac{R_{\chN_a(s)}}{N_a(t)} \geq \epsilon}} \to 0 \mcom
    \end{equation}
    completing the proof.
\end{proof}

\begin{lemma}
    \label{lem:path-wise-control-portfolio-regret-same-units}
    Fix a distribution $\P \in \cP \cup \cQ$ and for any $n \in \N$ let $R_{n} \coloneqq \logp{n + 1} / 2 + \logp{2}$ denote the portfolio regret bound from the Universal Portfolio algorithm \citep{cover1996universal}. We then have that for any $\epsilon > 0$,
    \begin{equation}
        \lim_{m \to \infty} \PP_{\PRCT}\Paren{\sup_{n_\undert \geq m} \sum_{s \in \cT_t(a)} \sum_{\bh \in \cH_{s-1}} \frac{R_{\chN_{a,\bh}(s)}}{N_a(t)} \geq \epsilon} = 0 \mper
    \end{equation}
\end{lemma}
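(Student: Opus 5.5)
We will follow the proof of \cref{lem:path-wise-control-portfolio-regret-warmup} essentially verbatim, with one extra layer for the finitely many histories. The plan is to show that the normalized regret sum tends to zero $\PRCT$-almost surely along $n_\undert \to \infty$. The limit statement then follows from pathwise convergence of the supremum and dominated convergence applied to the indicator.

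\emph{First step (deterministic bound).} For each $s \in \cT_t(a)$ and $\bh \in \cH_{s-1}$ we have $\chN_{a,\bh}(s) \leq \chN_a(s) \leq N_a(t)$. Since $R_n = \logp{n+1}/2 + \logp{2}$ is nondecreasing in $n$,
\begin{equation}
    \sum_{s \in \cT_t(a)} \sum_{\bh \in \cH_{s-1}} \frac{R_{\chN_{a,\bh}(s)}}{N_a(t)} \leq \sum_{s \in \cT_t(a)} \Abs{\cH_{s-1}} \frac{R_{N_a(t)}}{N_a(t)} \mper
\end{equation}
This step avoids having to argue that every stratum $\bh$ receives infinitely many units. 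Strata with $\chN_{a,\bh}(s)=0$ contribute only the constant $\logp{2}$, which is still absorbed by the bound.

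\emph{Second step (finiteness of the strata).} Under \cref{ass:finite-outcomes}, the effective histories $\bH_{i,s-\ell:s-1}$ take values in the finite set $\Set{0,\dots,m}^{\ell} \times \cZ_{s-\ell:s-1}$. Each $\cA_r$ is finite and the schedule is fixed, so $\Abs{\cH_{s-1}} \leq C_s < \infty$ deterministically. Because $t$ is fixed as well, the prefactor $K \coloneqq \sum_{s \in \cT_t(a)} C_s$ is a finite constant that does not depend on $n_\undert$.

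\emph{Third step (growth of $N_a(t)$).} We invoke \cref{lem:almost-sure-infinite-total-allocation} (equivalently \cref{lem:almost-sure-infinite-allocation} applied to a single $s \in \cT_t(a)$, using positivity $p_s>0$) to get $N_a(t) \to \infty$ $\PRCT$-almost surely as $n_\undert \to \infty$. We may assume $\cT_t(a) \neq \varnothing$; otherwise the quantity is trivially zero by convention. Then $K R_{N_a(t)}/N_a(t) \to 0$ almost surely because $\logp{n+1}/n \to 0$. Consequently $\sup_{n_\undert \geq m}$ of the left-hand side tends to $0$ almost surely as $m \to \infty$. The indicator of the event $\Set{\sup \geq \epsilon}$ therefore tends to $0$ almost surely, and bounded convergence gives the claimed limit of probabilities.

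\emph{Main obstacle.} The delicate point is the second step: making sure the number of strata $\Abs{\cH_{s-1}}$ stays bounded, even though $\cH_{s-1}$ is a random set of \emph{observed} histories. We handle this by bounding it by the deterministic cardinality of all possible histories, which is finite only because of \cref{ass:finite-outcomes} together with the fixed schedule.
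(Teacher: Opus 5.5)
Your proof is correct and follows the same route as the paper's: monotonicity of $R_n$, a finite bound on $\Abs{\cH_{s-1}}$ from \cref{ass:finite-outcomes} and the fixed schedule, divergence of the allocation counts, and then pathwise convergence of the supremum plus bounded convergence. The only difference is minor. You bound every term by $R_{N_a(t)}/N_a(t)$ and invoke \cref{lem:almost-sure-infinite-total-allocation}, whereas the paper passes to $R_{\chN_a(s)}/\chN_a(s)$ for each $s \in \cT_t(a)$ and invokes \cref{lem:almost-sure-infinite-allocation}; as a result, your version needs only the total count to diverge and never divides by a possibly-zero $\chN_a(s)$.
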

\begin{proof}
    Fix a distribution $\P \in \cP \cup \cQ$. We start by upper bounding the normalized experiment-wide regret as
    \begin{equation}
        \sum_{s \in \cT_t(a)} \sum_{\bh \in \cH_{s-1}} \frac{R_{\chN_{a,\bh}(s)}}{N_a(t)} 
        = \sum_{s \in \cT_t(a)} \sum_{\bh \in \cH_{s-1}} \frac{\chN_a(s)}{N_a(t)} \frac{R_{\chN_{a, \bh}(s)}}{\chN_a(s)} \leq \sum_{s \in \cT_t(a)} \sum_{\bh \in \cH_{s-1}} \frac{R_{\chN_{a, \bh}(s)}}{\chN_a(s)} \mcom
    \end{equation}
    where the inequality follows from the definition of $N_a(t) = \sum_{s \in \cT_t(a)} \chN_a(s)$ which implies that for any $s \in \cT_t(a)$, $\chN_a(s) / N_a(t) \leq 1$. Recall that $R_n = \logp{n+1} / 2 + \logp{2}$ for every $n \in \N$. Since $R_n$ is nondecreasing we have that for any $\bh \in \cH_{s-1}$, $R_{\chN_{a,\bh}(s)} \leq R_{\chN_{a}(s)}$. This in turn implies that
    \begin{equation}
        \sum_{s \in \cT_t(a)} \sum_{\bh \in \cH_{s-1}} \frac{R_{\chN_{a, \bh}(s)}}{\chN_a(s)} \leq \sum_{s \in \cT_t(a)} \Abs{\cH_{s-1}} \frac{R_{\chN_a(s)}}{\chN_a(s)} \mper
    \end{equation}
    We note how $\Abs{\cH_{s-1}} < \infty$ (i.e., is finite) which is a consequence of \cref{ass:finite-outcomes} and our assumption that the carryover effects last for $\ell$ sub-experiments. From assumption we have that for every $s \in [t]$, $n_s \to \infty$ as $n_\undert \to \infty$. Hence, from the finiteness of $\Abs{\cH_s}$, the sublinearity of $R_{n}$, and \cref{lem:almost-sure-infinite-allocation}, which tells us that $\chN_a(s) \to \infty$ $\PRCT$-almost surely as $n_s \to \infty$, we can conclude that as $n_\undert \to \infty$,
    \begin{equation}
        \sum_{s \in \cT_t(a)} \Abs{\cH_{s-1}}\frac{R_{\chN_a(s)}}{\chN_a(s)} \to 0 \quad\text{$\PRCT$-almost surely} \mper
    \end{equation}
    
    Since the convergence holds pathwise we can conclude that as $m \to \infty$,
    \begin{equation}
        \sup_{n_\undert \geq m} \sum_{s \in \cT_t(a)} \sum_{\bh \in \cH_{s-1}} \frac{R_{\chN_{a,\bh}(s)}}{N_a(t)} \to 0 \quad\text{$\PRCT$-almost surely}\mcom
    \end{equation}
    which in turn implies that for every $\epsilon > 0$ as $m \to \infty$,
    \begin{equation}
        \Ind{\sup_{n_\undert \geq m} \sum_{s \in \cT_t(a)} \sum_{\bh \in \cH_{s-1}} \frac{R_{\chN_{a,\bh}(s)}}{N_a(t)} \geq \epsilon} \to 0 \quad\text{$\PRCT$-almost surely}\mper
    \end{equation}
    Putting all of the above steps together we have that as $m \to \infty$,
    \begin{equation}
        \PP_{\PRCT}\Paren{\sup_{n_\undert \geq m} \sum_{s \in \cT_t(a)} \sum_{\bh \in \cH_{s-1}} \frac{R_{\chN_{a, \bh}(s)}}{N_a(t)} \geq \epsilon}
        = \E_{\PRCT}\Brac{\Ind{\sup_{n_\undert \geq m} \sum_{s \in \cT_t(a)} \sum_{\bh \in \cH_{s-1}} \frac{R_{\chN_{a, \bh}(s)}}{N_a(t)} \geq \epsilon}} \to 0 \mcom
    \end{equation}
    completing the proof.
\end{proof}

\subsection{Upper bound on the expected exponential decay as a function of the number of units in a treatment arm.}
\label{proof:numeraire-property-on-expected-number-of-units}

In this subsection we present an upper bound on the expected exponential decay as a function of the number of units assigned to a treatment arm when multiplied by the sub-optimality ratio of wealth processes. We prove the result for the setting in which units are classified into different strata in each sub-experiment, as the same result in which units are not stratified in each sub-experiment follows as a consequence.

\begin{lemma}
    \label{lem:numeraire-property-on-expected-number-of-units}
    Fix the total number of sub-experiments $t \in \N$, an arm $a \in \cA$ and a distribution $\Q \in \cQ(a)$. Let $W_{N_a(t)}^\brackQ$ be the process that has oracle access to the numeraire portfolios in every sub-experiment, as defined in \eqref{eq:oracle-process-stratified}. We then have that
    \begin{equation}
        \E_{\QRCT}\Brac{\exps{-N_a(t) \epsilon}\Paren{W_{N_a(t)} / W_{N_a(t)}^\brackQ}} \leq \exps{- 2\Paren{1 - \exps{-\epsilon}} \sum_{s \in \cT_t(a)}n_s p_s} \mper
    \end{equation}
\end{lemma}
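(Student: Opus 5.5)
The plan is to peel off one unit at a time with the tower property, going backwards through the units of the experiment. Since the treatment schedule is fixed, $\cT_t(a)$ and the sizes $n_s$ are deterministic. Because $N_a(t) = \sum_{s \in \cT_t(a)} \sum_{i \in \cI_s} \xi_{i,s}(a)$, the integrand factors over units:
\begin{equation}
    \exps{-N_a(t)\epsilon}\frac{W_{N_a(t)}}{W_{N_a(t)}^\brackQ} = \prod_{s \in \cT_t(a)} \prod_{i \in \cI_s} \prod_{\bh \in \cH_{s-1}} \Paren{e^{-\epsilon}\,\frac{\bbet_{i,s}(a,\bh)^\trans \bE_{i,s}(a,\bh)}{\bbet_s^\brackQ(a,\bh)^\trans \bE_{i,s}(a,\bh)}}^{\xi_{i,s}(a,\bh)} \mcom
\end{equation}
where for each unit at most one stratum indicator is active. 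Every factor other than the one for the last unit (the $n^\star$-th unit, as in the proof of \cref{lem:test-supermartingale-general-form}) is $\cF_{n^\star-1,t}$-measurable and can be pulled out of the conditional expectation. The task therefore reduces to a one-step bound for a single unit, which is then iterated.

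\textbf{One-step bound.} Fix $s$, a unit $i \in \cI_s$, and condition on $\cF_{i-1,s}$. The history $\bH_{i,s-1}$ is $\cF_{s-1}$-measurable, so the relevant stratum $\bh$ is known. Split according to whether $\xi_{i,s}(a,\bh)=1$, exactly as in the total-expectation step of the proof of \cref{lem:test-supermartingale-general-form}. On $\{\xi=0\}$ the factor equals $1$. On $\{\xi=1\}$ the portfolio $\bbet_{i,s}(a,\bh)$ is predictable, so \cref{lem:numeraire-always-on} gives a conditional expectation of the ratio at most $1$. Writing $q_s \coloneqq \PP(\xi_{i,s}=1 \mid \cF_{i-1,s}) = \pi_s(a)+\pi_s(\control)$, the conditional expectation is at most
\begin{equation}
    q_s e^{-\epsilon} + (1-q_s) = 1 - q_s(1-e^{-\epsilon}) \leq \exps{-q_s(1-e^{-\epsilon})} \leq \exps{-2p_s(1-e^{-\epsilon})} \mcom
\end{equation}
The second inequality uses $1-x\le e^{-x}$. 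The third uses positivity (\cref{ass:treatment-assignments}.(ii)) applied to both $a$ and $\control$, which gives $q_s \geq 2p_s$; this is where the factor $2$ comes from. Since $\propscore_s$ is fixed within the sub-experiment, $q_s$ does not depend on $i$.

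\textbf{Iteration.} Apply the one-step bound to the last unit, then to the second-to-last, and so on through all units of sub-experiment $s$. Because the bound $\exps{-2p_s(1-e^{-\epsilon})}$ is deterministic, it can be pulled out at each step. Crossing sub-experiment boundaries uses $\cF_{s-1} \subseteq \cF_{0,s}$ and the $\cF_{s-1}$-measurability of $\propscore_s$, $\cA_s$ and the numeraire portfolios. Sub-experiments with $a \notin \cA_s$ contribute trivial factors. Collecting the $n_s$ factors from each $s \in \cT_t(a)$ yields $\exps{-2(1-e^{-\epsilon})\sum_{s\in\cT_t(a)} n_s p_s}$. The unstratified version is the special case with a single stratum.

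\textbf{Main obstacle.} The delicate step is justifying \cref{lem:numeraire-always-on} at the level $\cF_{i-1,s}$, given that the numeraire portfolio is defined by conditioning only on $\cF_{s-1}$. This requires the conditional law of $\bE_{i,s}(a,\bh)$ given $\cF_{i-1,s}$ and $\{\xi=1\}$ to equal its law given $\cF_{s-1}$ and $\{\xi=1\}$. I would argue this from three facts: units are i.i.d. (\cref{assumption:dgp-assumption}); assignments within $\cI_s$ are i.i.d. draws from $\RCT_s$ and conditionally ignorable; and, in the repeated-units setting, the $\ell$-Markov property (\cref{ass:markov-outcomes}) makes the stratum-$\bh$ law depend only on $\bh$. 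Once that identification holds, the remaining work is the bookkeeping above.
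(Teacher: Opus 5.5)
Your proposal is correct and follows the paper's proof essentially step for step. Both arguments condition on $\cF_{n^\star-1,t}$ and split the last unit's factor on $\xi=0$ versus $\xi=1$ for its (known) stratum. Both then invoke the numeraire property, use $1-x\le e^{-x}$ together with $\pi_t(a)+\pi_t(\control)\ge 2p_t$, and recurse backwards through the units. The identification issue you flag as the main obstacle is not argued in the paper: it simply cites the numeraire property (\cref{lem:numeraire-always-on}, via Cover--Thomas) at the level $\cF_{i-1,s}$, so your sketched justification from i.i.d.\ units, i.i.d.\ assignments and the $\ell$-Markov property is more careful than what the paper writes.
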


\begin{proof}
    Fix the total number of sub-experiments $t \in \N$, an arm $a \in \cA$, and a distribution $\Q \in \cQ(a)$. Let $n^\star$ be the index of the last unit to enter the experiment and $n^\star-1$ be the index of the second to last unit to enter the experiment.
    The law of iterated expectation gives us the following identity on the expectation
    \begin{equation}
        \E_{\QRCT}\Brac{\exps{-N_a(t)\epsilon} \Paren{W_{N_a(t)} / W_{N_a(t)}^\brackQ}} = \E_{\QRCT}\Brac{\E_{\QRCT}\Brac{\exps{-N_a(t)\epsilon} \Paren{W_{N_a(t)} / W_{N_a(t)}^\brackQ} \smvert \cF_{n^\star-1, t}}} \mper
    \end{equation}
    For now, we will focus on the conditional expectation, which simplifies to
    \begin{align}
        &\E_{\QRCT}\Brac{\exps{-N_a(t)\epsilon} \Paren{W_{N_a(t)} / W_{N_a(t)}^\brackQ} \smvert \cF_{n^\star-1, t}}\\
        &= \E_{\QRCT}\Brac{\Brac{\exps{-\epsilon \chN_a(t)} \Paren{W_{\chN_a(t)}/W_{\chN_a(t)}^\brackQ}}^{\Ind{a \in \cA_t}} \smvert \cF_{n^\star-1, t}} \label{eq:proof-numeraire-num-assignments-conditional-expectation}\\ 
        &\qquad\cdot \Paren{W_{N_a(t-1)}/W_{N_a(t-1)}^\brackQ}\exps{-N_a(t-1)\epsilon} \mper
    \end{align}
    Further focusing on the conditional expectation from \eqref{eq:proof-numeraire-num-assignments-conditional-expectation}, we can see that if $a \notin \cA_t$ then $\eqref{eq:proof-numeraire-num-assignments-conditional-expectation} = 1$. Otherwise, if $a \in \cA_t$, the conditional expectation from \eqref{eq:proof-numeraire-num-assignments-conditional-expectation} decomposes into
    \begin{align}
        \eqref{eq:proof-numeraire-num-assignments-conditional-expectation}
        &= \E_{\QRCT}\Brac{\prod_{\bh \in \cH_{t-1}} \exps{-\epsilon \xi_{n^\star,t}(a, \bh)} \Paren{\frac{\bbet_{n^\star,t}^\trans \bE_{n^\star,t}(a)}{\bbet_{t}^\brackQ(a, \bh)^\trans \bE_{n^\star,t}(a)}}^{\xi_{n^\star, t}(a, \bh)} \smvert \cF_{n^\star-1, t}} 
        \label{eq:proof-numeraire-num-assignments-conditional-expectation-law-of-total-prob}\\
        &\qquad \cdot \prod_{i \in \cI_t(n^\star - 1)} \prod_{\bh \in \cH_{t-1}} \Paren{\frac{\bbet_{i,t}^\trans \bE_{i,t}(a)}{(\bbet_{t}^\brackQ(a, \bh))^\trans \bE_{i,t}(a)}}^{\xi_{i,t}(a, \bh)} \prod_{i \in \cI_t(n^\star - 1)} \prod_{\bh \in \cH_{t-1}} \exps{-\epsilon \xi_{i,t}(a, \bh)} \mper
    \end{align}
    Focusing on the conditional expectation from \eqref{eq:proof-numeraire-num-assignments-conditional-expectation-law-of-total-prob} we have that conditioning on the filtration $\cF_{n^\star - 1, t}$ only the inclusion indicator, $\xi_{n^\star, t}(a, \bh)$, corresponding to $\bH_{n^\star, t-1} = \bh$ will not be equal to zero---as the filtration includes the history of unit $n^\star$, which determines the strata the unit belongs to. Nonetheless, the event $\Set{A_{n^\star, t} = a~\tor~A_{n^\star, t} = \control}$ is still random as the treatment assignments are randomly assigned in each sub-experiment. Hence, focusing on the inclusion indicator for which $\bH_{n^\star, t-1} = \bh$ and applying the law of total expectation we have that
    \begin{align}
        &\E_{\QRCT}\Brac{\exps{-\epsilon \xi_{n^\star,t}(a, \bh)} \Paren{\frac{\bbet_{n^\star,t}^\trans \bE_{n^\star,t}(a)}{\bbet_{t}^\brackQ(a, \bh)^\trans \bE_{n^\star,t}(a)}}^{\xi_{n^\star, t}(a, \bh)} \smvert \cF_{n^\star-1, t}} \\
        &= \PP_{\QRCT}\Paren{\xi_{n^\star, t}(a, \bh) = 0 \smvert \cF_{n^\star-1, t}} \\
        &\qquad + \PP_{\QRCT}\Paren{\xi_{n^\star, t}(a, \bh) = 1 \smvert \cF_{n^\star-1, t}} \exps{-\epsilon} \E_{\QRCT}\Brac{\frac{\bbet_{n^\star,t}^\trans \bE_{n^\star,t}(a)}{\bbet_{t}^\brackQ(a, \bh)^\trans \bE_{n^\star,t}(a)} \smvert \cF_{n^\star-1, t}, \xi_{n^\star, t}(a, \bh) = 1} \\
        &\leq \PP_{\QRCT}\Paren{\xi_{n^\star, t}(a, \bh) = 0 \smvert \cF_{n^\star-1, t}} + \PP_{\QRCT}\Paren{\xi_{n^\star, t}(a, \bh) = 1 \smvert \cF_{n^\star-1, t}} \exps{-\epsilon}\\
        &= 1 - \PP_{\QRCT}\Paren{\xi_{n^\star, t}(a, \bh) = 1 \smvert \cF_{n^\star-1, t}}\Paren{1 - \exps{-\epsilon}} \\
        &\leq \exps{-\PP_{\QRCT}\Paren{\xi_{n^\star, t}(a, \bh) = 1 \smvert \cF_{n^\star-1, t}} (1 - \exps{-\epsilon})} \mcom
    \end{align}
    where the first inequality follows from the numeraire property of $\bbet_t^\brackQ(a, \bh)$ \citep[Theorem 15.2.2]{cover1999elements} (see also \citep{waudby2025universal,sandoval2026multi}) and the second inequality follows from $1+x \leq \exps{x}$ for all $x \in \R$.

    Notice how $\PP_{\QRCT}\Paren{\xi_{n^\star, t}(a, \bh) = 1 \smvert \cF_{n^\star-1, t}} = \pi_t(a) + \pi_t(\control)$, which follows from the fact that conditioning on $\cF_{n^\star-1, t}$ the realization of $\bH_{n^\star, t-1}$ is known and fixed. Thus, from \cref{ass:treatment-assignments} we have that $\PP_{\QRCT}\Paren{\xi_{n^\star, t}(a) = 1 \smvert \cF_{n^\star-1, t}} \geq 2 p_t$, which leads to the following upper bound
    \begin{equation}
        \exps{-\PP_{\QRCT}\Paren{\xi_{n^\star, t}(a) = 1 \smvert \cF_{n^\star-1, t}} (1 - \exps{-\epsilon})} \leq  \exps{-2p_t (1 - \exps{-\epsilon})} \mper
    \end{equation}
    Recursively applying the above arguments we get the following inequality 
    \begin{equation}
        \E_{\QRCT}\Brac{\exps{-N_a(t) \epsilon}\Paren{W_{N_a(t)} / W_{N_a(t)}^\brackQ}} \leq \exps{-2 \Paren{1 - \exps{-\epsilon}} \sum_{s \in \cT_t(a)}n_s p_s} \mcom
    \end{equation}
    proving \cref{lem:numeraire-property-on-expected-number-of-units}. We end the proof by noting that the same result follows for the setting in which units are not stratified in each sub-experiment. This is because in this setting the process no longer has the product over strata in each sub-experiment, which as we argued above has no effect on the conditional probability of the inclusion indicators and, thus, the effective sample size per treatment arm.
\end{proof}

\subsection{Almost sure infinite per-arm allocation}
\begin{lemma}
\label{lem:almost-sure-infinite-allocation}
    Fix a sub-experiment $s \in [t]$ and a treatment $a \in \cA_s$. Recall the definitions $n_s \coloneqq \Abs{\cI_s}$ and $\chN_a(s) \coloneqq \sum_{i \in \cI_s} \Ind{A_{i,s} = a~\tor~A_{i,s}=\control}$, where $A_{i,s} \sim \Categorical_{A_s \cup \Set{\control}}(\propscore_s)$. Under \cref{ass:treatment-assignments} we have that
    \begin{equation}
        \chN_a(s) \to \infty \quad\mathrm{as}\quad n_s \to \infty \mper
    \end{equation}
\end{lemma}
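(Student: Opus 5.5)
The plan is to read the claim as an almost-sure statement under $\PRCT$ (the product law of potential outcomes and assignments), with $n_s \to \infty$ along the natural ordering of units in $\cI_s$. Within sub-experiment $s$, the propensity score $\propscore_s$ is $\cF_{s-1}$-measurable and fixed before any unit in $\cI_s$ is assigned. So I would condition on $\cF_{s-1}$ throughout. Given $\cF_{s-1}$, the indicators $\xi_{i,s}(a) = \Ind{A_{i,s} = a~\tor~A_{i,s}=\control}$, $i \in \cI_s$, are \iidtext{} Bernoulli with success probability $q_s \coloneqq \pi_s(a) + \pi_s(\control)$. By positivity in \cref{ass:treatment-assignments}, $q_s \geq 2p_s > 0$, where $p_s$ is a deterministic constant.

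With this conditional structure, $\chN_a(s)$ is the partial sum of $n_s$ conditionally \iidtext{} Bernoulli$(q_s)$ variables. I would apply the strong law of large numbers conditionally on $\cF_{s-1}$ to obtain $\chN_a(s)/n_s \to q_s \geq 2p_s$ almost surely. It follows that $\chN_a(s) \geq p_s n_s$ for all sufficiently large $n_s$, and hence $\chN_a(s) \to \infty$.

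If one prefers to avoid the conditional SLLN, there is a more elementary route. Since $\PP(\xi_{i,s}(a) = 0 \svert \cF_{s-1}) \leq 1 - 2p_s < 1$, the probability that only finitely many $\xi_{i,s}(a)$ equal one is zero. The reason is that for each $k$, the event ``no successes after index $k$'' has conditional probability $\lim_{n}(1-q_s)^{n-k} = 0$. A countable union over $k$ of such null events is still null. Taking expectations over $\cF_{s-1}$ then removes the conditioning, and the monotonicity of $\chN_a(s)$ in $n_s$ gives divergence.

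The main subtlety, rather than a real obstacle, is justifying the conditional independence. The assignments are drawn \iidtext{} from $\RCT_s = \Categorical_{\cA_s \cup \Set{\control}}(\propscore_s)$ once $\propscore_s$ and $\cA_s$ are fixed. These are determined at the start of the sub-experiment by \cref{assumption:treatment-schedule} and the measurability of $\propscore_s$. Given that, the argument reduces to the standard Bernoulli fact above.
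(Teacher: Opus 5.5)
Your proposal is correct and follows the same route as the paper: apply the strong law of large numbers to the Bernoulli inclusion indicators $\Ind{A_{i,s} = a~\tor~A_{i,s}=\control}$, and use positivity to guarantee that the limit $\pi_s(a) + \pi_s(\control)$ is strictly positive. Your explicit conditioning on $\cF_{s-1}$, with the deterministic bound $q_s \geq 2p_s$, makes rigorous what the paper leaves implicit when it applies the law ``$\RCT_s$-almost surely'' even though $\propscore_s$ is data-dependent; your elementary ``no successes after index $k$'' alternative is also valid but not needed.
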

\begin{proof}
    We define the shorthand $\RCT_s \coloneqq \Categorical_{\cA_s \cup \Set{\control}}(\propscore_s)$. By the strong law of large numbers we have that
    \begin{equation}
        \lim_{n_s \to \infty} \frac{\chN_a(s)}{n_s} = \PP_{\RCT_s}\Paren{A_{i,s} = a ~\tor~ A_{i,s} = \control}  \quad\RCT_s\text{-almost surely}\mper
    \end{equation}

    Indeed, notice how the probability can be written as:
    \begin{align}
        \PP_{\RCT_s}\Paren{\Ind{A_{i,s} = a~\tor~A_{i,s} = \control} = 1}
        &= \PP_{\RCT_s}\Paren{A_{i,s} = a~\tor~A_{i,s} = \control}\\
        &= \PP_{\RCT_s}\Paren{A_{i,s} = a} + \PP_{\RCT_s}\Paren{A_{i,s} = \control} > 0\mcom
    \end{align}
    where the second equality follows from the fact that the events $\Set{A_{i,s} = a}$ and $\Set{A_{i,s} = \control}$ are mutually exclusive, and the inequality is due to \cref{ass:treatment-assignments} which tells us that $\PP_{\RCT_s}\Paren{A_{i,s} = a} > 0$ and $\PP_{\RCT_s}\Paren{A_{i,s} = \control} > 0$. Hence, we can conclude that $\chN_a(s) \to \infty$ almost surely as $n_s \to \infty$, completing the proof.
\end{proof}

\begin{lemma}
\label{lem:almost-sure-infinite-total-allocation}
    Fix a treatment $a \in \cA$ and recall that $\undert \coloneqq \argmin_{s
    \in [t]} n_s$. Then, under \cref{ass:treatment-assignments} and the data
    collection protocol from \cref{algorithm:data-collection} we have that
    \begin{equation}
        N_a(t) \to \infty \quad\mathrm{as}\quad n_\undert \to \infty \mper
    \end{equation}
\end{lemma}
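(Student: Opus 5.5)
The plan is to reduce \cref{lem:almost-sure-infinite-total-allocation} to the per-sub-experiment statement \cref{lem:almost-sure-infinite-allocation}, using that the treatment schedule is held fixed. Fix the arm $a \in \cA$. Since the schedule is fixed and $a \in \cA = \bigcup_{s=1}^t \cA_s$, the set $\cT_t(a)$ is a deterministic, nonempty, finite subset of $[t]$. Pick any $s_0 \in \cT_t(a)$. Every summand of $N_a(t) = \sum_{s \in \cT_t(a)} \chN_a(s)$ is nonnegative, so
\begin{equation}
    N_a(t) \geq \chN_a(s_0) \mper
\end{equation}
It therefore suffices to show that $\chN_a(s_0) \to \infty$ almost surely as $n_\undert \to \infty$.

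By the definition $n_\undert = \min_{s \in [t]} n_s$, we have $n_{s_0} \geq n_\undert$. Hence $n_\undert \to \infty$ forces $n_{s_0} \to \infty$. Under \cref{ass:treatment-assignments} and the data collection protocol of \cref{algorithm:data-collection}, the assignments $A_{i,s_0}$ are drawn i.i.d.\ from $\RCT_{s_0}$. This is exactly the setting of \cref{lem:almost-sure-infinite-allocation}, which gives $\chN_a(s_0) \to \infty$ $\RCT_{s_0}$-almost surely. Concretely, the strong law yields $\chN_a(s_0)/n_{s_0} \to \pi_{s_0}(a) + \pi_{s_0}(\control) \geq 2p_{s_0} > 0$. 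Combining this with the display above gives $N_a(t) \to \infty$ almost surely. A slightly stronger version comes for free: summing over all of $\cT_t(a)$ shows that $N_a(t)$ grows at least linearly in $n_\undert$. This is what the proofs of the theorems actually use, since they need $\log(n_\undert)/N_a(t) \to 0$.

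The only delicate point is how the limit $n_\undert \to \infty$ is indexed. It involves several sequences $n_s$ growing jointly, and the propensity score $\propscore_{s_0}$ is in general $\cF_{s_0-1}$-measurable rather than deterministic. I would handle both issues in the same way.

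First, I would treat the units of sub-experiment $s_0$ as an infinite i.i.d.\ stream, so that the strong-law convergence holds along the full sequence and hence along any subsequence $n_{s_0} \to \infty$ induced by $n_\undert \to \infty$.

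Second, for data-dependent propensities, I would apply the strong law conditionally on $\cF_{s_0-1}$. Given $\cF_{s_0-1}$, the assignments are i.i.d.\ with success probability at least $2p_{s_0}$, so the conditional almost-sure statement holds. Integrating out $\cF_{s_0-1}$ then gives the unconditional almost-sure convergence. Alternatively, the deterministic lower bound $p_{s_0}$ from positivity allows a coupling with i.i.d.\ Bernoulli$(2p_{s_0})$ variables, which gives the same conclusion.
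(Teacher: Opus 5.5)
Your proposal is correct and follows essentially the same route as the paper: $n_\undert \to \infty$ forces $n_s \to \infty$ for every $s \in \cT_t(a)$, and \cref{lem:almost-sure-infinite-allocation} applied to the nonnegative summands of $N_a(t) = \sum_{s \in \cT_t(a)} \chN_a(s)$ gives the claim. Your extra care is sound and goes beyond the paper's two-line proof: you condition on $\cF_{s_0-1}$, or couple with Bernoulli$(2p_{s_0})$ draws, to handle data-dependent propensities under the joint limit; and your linear-growth strengthening is what the theorem proofs actually need to get $\log(n_\undert)/N_a(t) \to 0$.
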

\begin{proof}[Proof of \cref{lem:almost-sure-infinite-total-allocation}]
    Fix a treatment $a \in \cA$.
    We start by recalling the definition $\undert \coloneqq \argmin_{s \in [t]} n_s$,
    from which we can see how 
    \begin{equation}
        n_\undert \to \infty \quad\text{implies that}\quad n_s \to
        \infty \mcom~\forall s \in [t] \mper
    \end{equation}
    Hence, together with the definition $N_a(t) \coloneqq \sum_{s \in \cT_t(a)} \chN_a(s)$
    and \cref{lem:almost-sure-infinite-allocation} it implies that    
    \begin{equation}
        N_a(t) \to \infty \quad\mathrm{as}\quad n_\undert \to \infty \mcom
    \end{equation}
    completing the proof.
\end{proof}

\end{document}